\documentclass[12pt]{article}
\usepackage{amsmath, amsthm}
\usepackage{amscd}
\usepackage{amssymb}
\usepackage{array}
\usepackage{mathtools}
\usepackage{color}

\newtheorem{thm}{Theorem}[section]
\newtheorem{theorem}[thm]{Theorem}

\newtheorem{corollary}[thm]{Corollary}

\newtheorem{proposition}[thm]{Proposition}

\theoremstyle{definition}
\newtheorem{definition}[thm]{Definition}
\newtheorem{example}[thm]{Example}

\newtheorem{remark}[thm]{Remark}
\newtheorem{notation}[thm]{Notation}

\newcommand{\algaff}{\mathrm{(alg_\mathrm{aff})}}
\newcommand{\salgaff}{\mathrm{(salg_\mathrm{aff})}}
\newcommand{\schemesaff}{\mathrm{(schemes_\mathrm{aff})}}
\newcommand{\Fl}{\mathrm{Fl}}
\newcommand{\F}{\mathrm{F}}

\newcommand{\be}{\begin{equation}}
\newcommand{\ee}{\end{equation}}
\newcommand{\bea}{\begin{eqnarray}}
\newcommand{\eea}{\end{eqnarray}}
\newcommand{\bean}{\begin{eqnarray*}}
\newcommand{\eean}{\end{eqnarray*}}

\newcommand{\id}{\relax{\rm 1\kern-.28em 1}}
\newcommand{\R}{\mathbb{R}}
\newcommand{\C}{\mathbb{C}}
\newcommand{\Z}{\mathbb{Z}}
\newcommand{\N}{\mathbb{N}}

\newcommand{\bV}{\mathbb{V}}

\newcommand{\bk}{k}

\newcommand{\bP}{\mathbf{P}}

\newcommand{\cA}{\mathcal{A}}

\newcommand{\cL}{\mathcal{L}}
\newcommand{\cO}{\mathcal{O}}
\newcommand{\cG}{\mathcal{G}}

\newcommand{\cF}{\mathcal{F}}

\newcommand{\ep}{\mathcal{E}}

\newcommand{\cM}{\mathcal{M}}
\newcommand{\cV}{\mathcal{V}}

\newcommand{\ri}{\mathrm{i}}

\newcommand{\rGL}{\mathrm{GL}}
\newcommand{\rSU}{\mathrm{SU}}
\newcommand{\rSL}{\mathrm{SL}}
\newcommand{\rSO}{\mathrm{SO}}

\newcommand{\rU}{\mathrm{U}}

\newcommand{\rM}{\mathrm{M}}

\newcommand{\rid}{\mathrm{id}}

\newcommand{\Hom}{\mathrm{Hom}}

\newcommand{\rspan}{\mathrm{span}}

\newcommand{\Gr}{\mathrm{Gr}}
\newcommand{\ber}{\mathrm{Ber}}

\newcommand{\fsl}{\mathfrak{sl}}
\newcommand{\fg}{\mathfrak{g}}
\newcommand{\fh}{\mathfrak{h}}

\newcommand{\fso}{\mathfrak{so}}

\newcommand{\fn}{\mathfrak{n}}
\newcommand{\fp}{\mathfrak{p}}
\newcommand{\fb}{\mathfrak{b}}

\newcommand{\tr}{\mathrm{tr}}

\newcommand{\Ber}{\mathrm{Ber}}
\newcommand{\spec}{\mathrm{Spec}}
\newcommand{\proj}{\mathrm{Proj}}
\newcommand{\sschemes}{\mathrm{(sschemes)}}
\newcommand{\sschemeaff}{\mathrm{ {( {sschemes}_{\mathrm{aff}} )} }}

\newcommand{\alg}{\mathrm{(alg)}}

\newcommand{\sets}{\mathrm{(sets)}}

\newcommand{\salg}{\mathrm{(salg)}}
\newcommand{\varaff}{ \mathrm{ {( {var}_{\mathrm{aff}} )} } }
\newcommand{\svaraff}{\mathrm{ {( {svar}_{\mathrm{aff}} )}  }}

\newcommand{\Ad}{\mathrm{Ad}}

\newcommand{\Lie}{\mathrm{Lie}}

\newcommand{\rhom}{\mathrm{hom}}

\newcommand{\uspec}{\underline{\mathrm{Spec}}}
\newcommand{\uproj}{\mathrm{\underline{Proj}}}

\newcommand{\al}{\alpha}

\newcommand{\de}{\delta}
\newcommand{\dd}{\delta}

\newcommand{\s}{\sigma}
\newcommand{\sig}{\sigma}
\newcommand{\lra}{\longrightarrow}


\newcommand{\Oqgh}{\cO_q\big(G\big/P\big)} 


\begin{document}


\medskip

\centerline{\Large{\bf
The  Segre embedding of the quantum}}

\medskip

\centerline{\Large{\bf conformal superspace} }

\bigskip

\centerline{ R. Fioresi}

\smallskip

\centerline{\it Dipartimento di Matematica, Universit\`{a} di
Bologna }
 \centerline{\it Piazza di Porta S. Donato, 5. 40126 Bologna, Italy.}
\centerline{{\footnotesize e-mail: rita.fioresi@UniBo.it}}
\bigskip

\centerline{ E. Latini}

\smallskip

\centerline{\it Dipartimento di Matematica, Universit\`{a} di
Bologna }
 \centerline{\it Piazza di Porta S. Donato, 5. 40126 Bologna, Italy}
 \centerline{\it and}
  \centerline{\it INFN, Sezione di Bologna, Via Irnerio 46, I-40126, Bologna, Italy.}

\centerline{{\footnotesize e-mail: emanuele.latini@UniBo.it}}

\bigskip

\centerline{ M. A. Lled\'{o} }

\smallskip

 \centerline{\it  Departament de F\'{\i}sica Te\`{o}rica,
Universitat de Val\`{e}ncia and}
 \centerline{\it IFIC (CSIC-UVEG)}
 \centerline{\small\it C/Dr.
Moliner, 50, E-46100 Burjassot (Val\`{e}ncia), Spain.}
 \centerline{{\footnotesize e-mail: maria.lledo@ific.uv.es}}

\bigskip

\centerline{ F. A. Nadal }

\smallskip

 \centerline{\it  Departament de F\'{\i}sica Te\`{o}rica,
Universitat de Val\`{e}ncia and}
 \centerline{\it IFIC (CSIC-UVEG)}
 \centerline{\small\it C/Dr.
Moliner, 50, E-46100 Burjassot (Val\`{e}ncia), Spain.}
 \centerline{{\footnotesize e-mail: felip.nadal@gmail.com}}

\vskip 1cm

\begin{abstract}

In this paper study the quantum deformation of the superflag $\Fl(2|0, 2|1,4|1)$, and its big cell,  describing the complex conformal and Minkowski superspaces respectively. In particular, we realize their projective embedding via a generalization to the super world of the Segre map and we use it to construct a quantum deformation of the super line bundle realizing this embedding. This strategy allows us to obtain a description of the quantum coordinate superring of the superflag that is then naturally equipped  with a coaction of the quantum complex conformal supergroup  $\rSL_q(4|1)$.

\end{abstract}


\section{Introduction}\label{intro-sec}
The  construction of the flag and Grassmannian supermanifolds from an algebraic geometric point of view appeared for the first time
in Ref. \cite{ma1}. In a series of more recent works \cite{flv, cfl, cfl2, fl}, the superflag $\F:=\mathrm{Fl}(2|0,2|1;4|1)$ is studied at the classical and quantum level  as an homogeneous superspace for the complex special linear supergroup $\rSL(4|1)$,  the complex superconformal group of spacetime in dimension four. In the complex case we do not need to worry about the signature, but we warn the reader that the real form of the conformal group\footnote{Notice that $\rSU(2,2)$ is the  spin group of the conformal group $\rSO(4,2)$.} corresponding to a conformal structure of signature $(+,-,-,-)$ is $\rSU(2,2|1)$  and that the real form of the homogeneous space {\sl is not}  a real flag supermanifold. In both cases, the big cell of the real or complex supermanifold can be interpreted as the Minkowski superspace, once one realizes that the
subgroup of the conformal group that leaves invariant the big cell is the Poincar\'{e} supergroup times dilations. In Refs.  \cite{cfl, cfl2, fl}, a quantization of the whole scheme is shown that starts by substituting the group $\rSL(4|1)$ by its quantum counterpart, the quantum supergroup $\rSL_q(4|1)$ \cite{ma2}, and continues by constructing the corresponding quantum homogeneous superspaces. The superspace $\C^{4|1}$ that appears in $\F$ is the  {\it super twistor space}, similar to the {\it twistor space} of Penrose \cite{pe} and shall not be confused with the super spacetime.

The quantization of Grassmannians and flag manifolds (non super case)  has been studied before in several approaches. The procedure to quantize $\F$ used in this paper and in Refs. \cite{cfl, cfl2, fl} is inspired  by the approach of  Refs. \cite{fi2, fi1, fi3}  for the non super case.

There are   other approaches to the quantization of Grassmannians and flags in the literature. We  mention very briefly some of them, although our list is probably not complete. In Ref. \cite{ch} one finds an interesting relation between twistors and geometric quantization. In Refs. \cite{han,kko} the conformal group is deformed in the $R$-matrix approach. Since flags and Grassmannians are coadjoint orbits of the group $\rSL(n)$, one has the {\it Kirillov-Kostant-Soriau symplectic form} on them and they can be quantized in terms of star products and Moyal brackets \cite{fl4, ll, fll, fl5}. In Refs. \cite{al, eem, mu} the property of being coadjoint orbits is also exploited using the so-called {\it Shapovalov pairing} of Verma modules. Finally we can mention the possibility of quantizing these spaces as {\it fuzzy spaces} \cite{dmu, dj}.

\medskip

The beauty of our approach is that the actions of the (classic and quantum) conformal and
Poincar\'{e} supergroups on the  superspaces  are built into the very definition
of the superspaces. Moreover it naturally leads to a differential star product \cite{cfln,cfln2}.

It is  then  a  very  natural approach, giving the importance that symmetries have in physics.   Quantum geometry is in fact a  fascinating subject that attracted a lot of attention in the recent past, but that seems not to have exhausted its full potential: it generalizes the concepts of geometry \cite{co, gvf} in a way resemblant to what quantum mechanics does for classic mechanics and it is plausible that it may be used to describe the physical phenomena at very short distance scales, where quantum effects affecting spacetime should be noticeable. Although mainly inspired by fundamental principles, it has also appeared repeatedly in string theory and supergravity \cite{cds, sw, se, fel}. A warning, however: here we deal only with flat spaces, so this is to be considered a toy model which does not contain  gravity. But it is, in any case, a non trivial, first step towards that direction.
We also want to remark that conformal geometry have been recently applied to the study of the AdS/CFT correspondence \cite{glw}, thus we believe that our approach could also be relevant in order to consider a deformed version of such correspondence.

\medskip

Even when the full quantization of $\F$ in this scheme is proven to exist, it is extremely difficult to give a presentation of the non commutative ring associated to it in terms of generators and relations. It is possible, although not straightforward \cite{cfl, cfl2, fl}, to give such presentation for what is called in physics the {\it chiral} and {\it antichiral} superspaces. At the classical level, these spaces   do not have a real form compatible with the action of  the conformal supergroup $\rSU(2,2|1)$. They are, though, used in physics (for example in supersymmetric Yang-Mills theories \cite{fz}).

Geometrically, the chiral conformal space is the super Grassmannian $\Gr_2:=\Gr(2|1; 4|1)$ and  the antichiral is $\Gr_1:=\Gr(2|0; 4|1)$. One Grassmannian is dual to the other in a sense that will be explained in detail in Section \ref{superflag-sec}.

All the results are  in principle valid for a spacetime of dimension 4, but some aspects could be generalized to higher dimensions. For example,  in Ref. \cite{flat}   a symplectic realization of the chiral conformal superspace is proposed,
which can  be extended to the 6 and 10 dimensional cases by using matrix groups over quaternions and octonions. Also, a generalization to split signatures $(n,n)$ has been considered recently in Refs. \cite{flm,flm2}.

\medskip

The key ingredient of our treatment is to consider the projective embedding of $\F$ into a suitable projective superspace. This is achieved by viewing $\F$ as embedded inside the product of two super Grassmanians, each of them having a super Pl\"{u}cker embedding in a projective space. We give and explicit presentation of the coordinate ring with respect to such  embedding: this will be achieved by using two sets of Pl\"{u}cker coordinates, constrained by a set of relations called {\it incidence relations} which are, in some sense,  orthogonality conditions. The big cell of $\F$  will turn out to be the complex Minkowski superspace as constructed in Ref.
\cite{flv}. Moreover, we use the projective localization technique to give a local picture of the conformal superspace: the incidence relations become locally the {\it twistor relations} used in physics.

The coordinate rings of $\Gr_1$, $\Gr_2$ and $\F$ can be seen as subrings of coordinate ring of $\rSL(4|1)$. We exploit this feature to tackle the problem of their quantization. We consider the quantum supergroup $\rSL_q(4|1)$ and then we try to identify subalgebras of it that can be used to define the quantum super Grassmannians and the quantum superflag. While for the Grassmannians this technique is successful (although laborious), for the superflag is much more involved: finding the commutation relations between the  coordinates of $\Gr_1$ and $\Gr_2$ seems an unsurmountable calculation. On top of it, one should find also the quantum version of the incidence relations.

We then try a different strategy. It is well know that the product of two projective spaces can be further embedded into a bigger projective space through the Segre embedding. This can be generalized to the super setting, and then one can use the relation between projective embeddings and very ample line bundles, which also holds in the super case.

Let $G$ be an algebraic Lie supergroup and $P$ a parabolic subgroup and consider the homogeneous space $G/P$ (in our case, $G=\rSL(4|1)$ and $P$ is either of the parabolic subgroups associated to $\Gr_1$, $\Gr_2$ and $F$).  If $\cO(G)$ and $\cO(P)$ are the coordinate superrings of $G$ and $P$. Then such embeddings can be given in terms of an element of $\cO(G)$ lifted from a character of $P$ \cite{fi6}. We call this the {\it classical section} of the embedding. By the method of parabolic induction one can see that equivariant sections of $\cO(G)$ with respect to the $n^{\mathrm{th}}$ power of such character give the degree $n$ subspace of the graded coordinate ring $\cO(G/P)$ associated to the projective embedding of $G/P$.

Luckily, it is possible to translate this approach to the quantum realm by means of a {\it quantum section} \cite{fi6,CFG,FG}. We then achieve a characterization of the coordinate superring of $\F$ associated to the super Segre embedding.

\medskip

We have tried to give at least an idea or sketch of the different notions that appear in the paper. The reader that is versed in algebraic geometry may skip those parts, but a mathematical physicist may find this a useful guide, even when proofs of standard results could not be provided explicitly. We give a list of references for all the details.

\medskip

The organization of the paper is as follows:

In Section \ref{algebraicsupervarieties-sec} we  give a  self contained introduction to supergeometry. We use it to set up the notation and clarify the language used in the paper. In particular, we describe  the functor of points approach to supergeometry, explaining some fundamental examples such as the projective superspace and algebraic supergroups.

In Section \ref{superflag} we  study the embedding of $\F$ in $\Gr_1\times \Gr_2$, and then the super Pl\"{u}cker embeddings of $\Gr_1$ and $\Gr_2$ into projective superspaces. We give  the superring characterizing such embedding for the flag, including the incidence relations.  Then we review the Segre map and we propose its supersymmetric generalization.

In Section \ref{linebundles-sec} we give a detailed description of the parabolic (super) geometries that enter into play. Then we describe the very ample super line bundles associated to the Pl\"{u}cker embeddings of $\Gr_1$ and $\Gr_2$, the  {\it bundles of antichiral}  and {\it chiral superconformal densities}, respectively. These are the building blocks  of the construction of the very ample super line bundle for the superflag, whose description  will be postponed until Section \ref{qsection-sec} for technical reasons.

In Section \ref{qgrass-sec} we recall briefly the definition of the quantum group $\rSL_q(4|1)$ and we propose a natural deformation of $\Gr_1$ and $\Gr_2$ as subalgebras of $\rSL_q(4|1)$ .

 In Section \ref{qsection-sec} we construct the very ample line bundle realizing the Segre embedding of $\F$ and we use the notion of quantum section to propose a characterization of the quantum coordinate superring for the super flag.

 Finally, in the Appendix \ref{incidence-ap} we write down the incidence relations explicitly and in Appendix \ref{embedding-ap} we prove that the super Segre map is an embedding using the so-called {\it even rules principle}.

  \medskip

{\bf Notation.} The reader may find useful to resort to Refs. \cite{dm,va,ccf} for all the results in supergeometry and to Ref. \cite{fl} for a more specific treatment of conformal and Minkowski superspaces.
\section{Algebraic supervarieties}\label{algebraicsupervarieties-sec}

In this section we intend to give a very brief account of some notions of supergeometry that are an extension to superalgebras of well known notions of algebraic geometry. We will assume that concepts as the spectrum of a ring, sheaf, ringed space, Zariski topology, etc are known and we will try to sketch how one proceeds to the generalization. We will not write any proof of the statements made, that can be found in the above mentioned references.  We set the ground field to be $k=\R$ or $\C$.

\begin{definition}  A commutative\footnote{We try to avoid the use of `supercommutative algebra' as it appears in the physics literature and stick to the categorical definition \cite{dm}.} superalgebra $A=A_0\oplus A_1$ is said to be an {\it affine superalgebra} if its even part, $A_0$, is finitely generated as an algebra, its odd part, $A_1$, is finitely generated as an $A_0$-module and the reduced algebra, $A_{0,r} =A_0/J_{A_0}$, where $J_{A_0}$ is the ideal of odd nilpotents, is itself affine (contains no nilpotents).

 \hfill$\square$
\end{definition}

 In ordinary geometry, for any commutative, affine algebra $F$ one finds an affine algebraic variety as the {\it spectrum} of $F$. The topological space   $|X|= \spec(F)$ is the set of prime ideals of $F$ endowed with the {\it Zariski topology}.  The  {\it structure sheaf} is constructed by localizing $F$ at each each $\fp\in \spec(F)$. The algebra
$${F}_\fp:=\left\{\frac f g\;\; \big|\;\; f\in F,\; g\in  F- \fp\right\}$$
 is the stalk of the sheaf at $\fp$. The structure sheaf is  denoted as $\cO_{X}$ and the pair  $X=(|X|, \cO_X)$ is an affine algebraic variety.
We recover the affine algebra $F$ as the set of global sections:
$$\cO(X):=\cO_{X}(X)=F\,.$$ $\cO(X)$  is  the {\it coordinate ring} or {\it coordinate algebra} of the affine variety $X$.

There is then an equivalence between the categories of affine algebras and affine algebraic varieties: it is given by the contravariant functor
$$\begin{CD} \varaff@=\algaff\\
X@>>>\cO(X)\\\spec(F)@<<<F\,.\end{CD}$$ For the morphisms, one can prove that
$$\Hom_\varaff(X, Y)= \Hom_\algaff(\cO(Y), \cO(X))\,.$$

A generic algebraic variety is constructed by gluing together affine algebraic varieties. But in order to extend the above correspondence to  general commutative algebras one has to consider a wider category,   the category of {\it affine schemes}.

\begin{definition}\label{ringed space} A {\it  ringed space} is a pair $M=(|M|, \cF)$ consisting of a topological space $|M|$ and a sheaf of commutative rings $\cF$ on $|M|$. If the stalk $\cF_x$ at each point $x\in |M|$ is a local ring (it has a unique maximal ideal) then we say that it is a {\it locally ringed space}.

A morphism of ringed spaces $\phi:M=(|M|, \cF)\rightarrow N=(|M|, \cG)$ consists of a continuous map $|\phi|:|M|\rightarrow |N|$ and a sheaf morphism $\phi^*:\cG\rightarrow |\phi|_*\cF$, where $|\phi|_*\cF$ is the sheaf on $|N|$ defined as $|\phi|_*\cF(U)=\cF(\phi^{-1}(U))$ for all open sets $U\subset |N|$. If the spaces are locally ringed then $\phi^*$, on the stalks, must sent the maximal ideals to the maximal ideals.

\hfill$\square$
\end{definition}

The pair $X=(\spec(F), \cO_X)$  constructed above is a locally ringed space. We will denote it as  $X=\uspec(F)=(\spec(F), \cO_X)$, to distinguish it from the topological space $|X|=\spec(F)$.

\begin{definition} An {\it  affine scheme} is a locally ringed space which is isomorphic to $\uspec(F)$ for some algebra $F$, not necessarily affine.

A morphism of affine schemes is a morphism of locally ringed spaces.

\hfill$\square$
\end{definition}

One can also prove that the categories of commutative algebras $\alg$ and  affine schemes $\schemesaff$ are contravariantly equivalent. Given this equivalence, we will denote indistinctly as $\cO_F$ or $\cO_X$ the structure sheaf of $X=\uspec(F)$.

\begin{definition} A {\it scheme} is a locally ringed space which is locally isomorphic to an affine scheme.

\hfill$\square$
\end{definition}

\bigskip

For an affine superalgebra $A$, the reduced algebra $A_{0,r}$ is an affine algebra, so we have an affine variety associated to it. Notice that, as topological spaces,  $\spec (A_{0,r})=\spec (A_0)$, since they differ only by nilpotents. On the other hand,  $A$ is an $A_0$-module and one can define a sheaf of $\cO_{A_0}$-modules over  $\spec (A_0)$ such that the set of its global sections is $A_1$ and the stalk at each prime $\fp\in \spec (A_0)$ is the localization of $A_1$ at $\fp$. We have then a sheaf of superalgebras over $\spec (A_0)$. This is the basis for the definition of algebraic supervariety.

\begin{definition} A {\it superspace}\footnote{One shall not mistake this general definition of superspace with the more restricted notion used in physics designating certain super spacetimes.} is a pair $S=(|S|, \cO_S)$ where $|S|$ is a topological space and $\cO_S$ is a sheaf of superalgebras whose stalk  at a point is a local superalgebra (it has a unique, two-sided, maximal ideal).

A morphism of superspaces $f: S=(|S|, \cO_S)\rightarrow T=(|T|, \cO_T)$ is a pair $f=(|f|, f^*)$ where $|f|:|S|\rightarrow |T|$ is a continuous map and $f^*:\cO_T\rightarrow |f|_*\cO_S$ a morphism of sheaves (as in Definition \ref{ringed space}).

\hfill$\square$
\end{definition}

\begin{definition} An  {\it affine algebraic supervariety} is a superspace $S=(|S|, \cO_S)$ constructed from an affine superalgebra $A=A_0+A_1$ by the procedure specified above. We will denote it as $S=\uspec(A)$.

 $|S|=\spec (A_0)$ is  an ordinary affine algebraic variety, the {\it reduced} variety.

The  affine superalgebra $A=\cO_S(S)$ is the {\it coordinate superalgebra} or {\it coordinate superring} of $S$.

Morphisms of affine supervarieties are morphisms of superspaces.

 \hfill$\square$
\end{definition}

\begin{example}\label{affine superspace} The {\it affine superspace $k^{p|q}$} is the affine supervariety  whose coordinate superalgebra is the affine superalgebra
$$A= k[x_1,\dots ,x_p]\otimes \Lambda (\theta_1,\dots \theta_q)\,,$$
where $\Lambda (\theta_1,\dots, \theta_q)$ is the exterior algebra on the indeterminates $\theta_1,\dots ,\theta_q$. The reduced algebra is the polynomial algebra $A_{0,r}=k[x_1,\dots ,x_p]$ and the reduced variety is simply $k^p$.

 \hfill$\square$
\end{example}

 By this construction, the categories of affine superalgebras $\salgaff$ and affine supervarieties $\svaraff$ are equivalent, as in the ordinary setting. In fact, one has that for two affine supervarieties
 $$\Hom (T,S)=\Hom (\cO(S), \cO(T))\,.$$

Generic algebraic varieties are superspaces that are locally isomorphic to affine supervarietes. We will encounter algebraic varieties that are not affine in the following.
As in the non super case, in order to obtain an equivalence of categories with general superalgebras we have to use affine superschemes.

\begin{definition}
An {\it affine superscheme } is a superspace $S=(|S|, \cO_S)$ which is isomorphic to $\uspec(A)$ for a superalgebra $A$, not necessarily affine.

A morphism of superschemes is a morphism of the corresponding superspaces.

 \hfill$\square$
\end{definition}

We have that the categories of affine superschemes $\sschemeaff$ and the category of commutative superalgebras  $\salg$ are contravariantly equivalent. More generally, we have de following definition

\begin{definition}\label{superscheme-def} A {\it superscheme} is a superspace $S=(|S|, \cO_S)$ such that $\cO_{S,1}$ (in $\cO_{S}=\cO_{S,0}+\cO_{S,1}$) is a quasi coherent sheaf of $\cO_{S,0}$-modules.

One can prove that a generic superscheme is locally isomorphic to $\uspec(A)$ for some superalgebra $A$.

 \hfill$\square$
\end{definition}

\bigskip

We will use the {\it functor of points} approach to supervarieties, so we recall its general definition here.  We introduce it in the general language of superschemes,  although one could also define it on algebraic supervarieties.

Let $S$ and $T$ be be superschemes.
  A {\it $T$-point of $S$} is a morphism of superschemes $T\rightarrow S$. We denote by $h_S(T)=\Hom(T,S)$ the set of all $T$-points of $S$.

\begin{definition}The functor of points of a superscheme $S$ is the contravariant  functor $h_S: \sschemes\rightarrow \sets$ defined on objects as
$$T\rightarrow h_S(T):=\Hom_{\sschemes}(T,S)\,,$$
and on morphisms $\phi:T\rightarrow T'$ as
$$h_S(\phi)f=f\circ \phi,\qquad f\in \Hom_{\sschemes}(T', S)\,.$$
\hfill$\square$
\end{definition}

One can  prove that morphisms of supervarieties or superschemes are natural transformations between their functors of points.

The following important result will be used repeatedly in applications.

\begin{theorem}\label{theorem restriction}
The functor of points of a superscheme is determined by its restriction to the category of affine superschemes.

\hfill$\square$
\end{theorem}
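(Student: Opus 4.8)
The plan is to reduce the problem to the classical Yoneda-type statement that a contravariant functor on a category is determined by its restriction to a subcategory through which every object admits a suitable covering. Concretely, I would first recall that any superscheme $S$ carries an open cover by affine superschemes $\{U_i\}$, where by Definition \ref{superscheme-def} each $U_i\cong\uspec(A_i)$ for a superalgebra $A_i$, and the overlaps $U_i\cap U_j$ are themselves superschemes, hence again covered by affines. The key input is that the functor of points $h_S$ is a sheaf in the Zariski topology: for a superscheme $T$ with an open cover $\{V_\alpha\}$, a morphism $T\to S$ is the same datum as a compatible family of morphisms $V_\alpha\to S$ agreeing on the overlaps $V_\alpha\cap V_\beta$. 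This sheaf property follows because morphisms of superspaces are defined locally (a continuous map of underlying spaces plus a local sheaf morphism, as in Definition \ref{ringed space} adapted to the super setting), so they glue.

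Granting the sheaf property, I would argue as follows. Given the restriction $h_S|_{\sschemeaff}$, I want to reconstruct $h_S(T)$ for an arbitrary superscheme $T$. Choose an affine open cover $\{V_\alpha\}$ of $T$ and, for each pair, an affine open cover $\{V_{\alpha\beta\gamma}\}$ of $V_\alpha\cap V_\beta$. Since $h_S$ is a Zariski sheaf, there is an equalizer description
\[
h_S(T)=\mathrm{eq}\!\left(\prod_\alpha h_S(V_\alpha)\ \rightrightarrows\ \prod_{\alpha,\beta,\gamma} h_S(V_{\alpha\beta\gamma})\right),
\]
and every term on the right is a value of $h_S$ on an affine superscheme, hence already determined by the given restriction. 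The two parallel arrows are the restriction maps, which are likewise encoded by the restriction of $h_S$ to affines (they are induced by the affine open immersions $V_{\alpha\beta\gamma}\hookrightarrow V_\alpha$ and $\hookrightarrow V_\beta$). Therefore $h_S(T)$ is determined, and naturality in $T$ is checked the same way, compatibly with refinements of covers. One should also note independence of the chosen covers, which follows from the sheaf axiom applied to a common refinement.

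The step I expect to be the main obstacle — and the one deserving the most care — is establishing that $h_S$ really is a Zariski sheaf on $\sschemes$, i.e. that morphisms of superschemes can be glued from local data. In the ordinary scheme case this is standard, but in the super setting one must verify that the structure sheaves $\cO_S$ and the notion of local superalgebra behave well under restriction and gluing: a sheaf morphism $\cO_S\to|f|_*\cO_T$ is determined by its restrictions to an open cover of $|S|$, and a compatible family of such restrictions, together with compatible continuous maps of topological spaces, assembles to a global morphism of superspaces, which automatically respects the $\Z_2$-grading and sends maximal ideals of stalks to maximal ideals because each local piece does. Once this locality of morphisms is in place, the equalizer argument above goes through verbatim, and Theorem \ref{theorem restriction} follows. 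In a survey-style paper it would be reasonable to state the sheaf property as a lemma with a one-line indication and cite \cite{ccf} or \cite{fl} for the details.
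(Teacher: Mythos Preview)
The paper does not actually supply a proof of this theorem: Section~\ref{algebraicsupervarieties-sec} is explicitly a brief review of supergeometry in which standard results are stated without argument and the reader is referred to \cite{dm,va,ccf} for details. The $\square$ after the statement is the paper's end-of-statement marker, not a QED. So there is no in-paper proof to compare against.

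That said, your sketch is the standard and correct argument, and it is essentially what one finds in the cited references (see in particular \cite{ccf}, Appendix~B and Chapter~9). The two ingredients you isolate --- that every superscheme admits an affine open cover (Definition~\ref{superscheme-def}) and that $h_S$ is a Zariski sheaf on $\sschemes$ --- are exactly what is needed, and the equalizer reconstruction of $h_S(T)$ from its values on an affine cover of $T$ is the right way to package them. Your identification of the gluing of morphisms of superspaces as the only nontrivial step is accurate; it holds because a morphism of superspaces is a pair $(|f|,f^*)$ with $f^*$ a morphism of sheaves, and sheaf morphisms are local data. The $\Z_2$-grading and the local-ring condition on stalks are preserved under gluing for the reason you give. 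Your closing remark --- state the sheaf property as a lemma and cite \cite{ccf} --- matches precisely what the paper does throughout this section.
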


So, once we have the functor defined in general,  for many purposes it will be enough to check how it works in the 
subcategory
\be \begin{CD}\sschemeaff @>h_S>>\sets \\
T@>>>\Hom(T, S)\,.\end{CD}\label{functorrepr}\ee

A  contravariant functor $h$ from a certain category $(\mathrm{cat})$ to $\sets$ is said to be {\it representable} if $h(X)=\Hom (X, Y)$ for all $X \in (\mathrm{cat})$ and some $Y \in (\mathrm{cat})$. The object $Y$  is said to {\it represent} the functor $h$. It is customary to denote the representable functor as $h_Y$. So the functor of points of a superscheme is a contravariant, representable functor. For a covariant functor we just have to substitute $h(X)=\Hom (Y, X)$.

Notice then  that the functor defined in (\ref{functorrepr}) will not be representable in the category of affine superschemes if $S$ is not affine. Nevertheless, given the equivalence of categories among $\sschemeaff$ and $\salg$, we can give alternatively the restricted functor (\ref{functorrepr}) as
$$\begin{CD}\salg @>h_S>>\sets \\
A@>>>\Hom(\uspec(A), S).\end{CD}$$
If the superscheme is affine, then  $\Hom(\uspec(A), S)=\Hom(\cO(S), A)$.

In the literature, sometimes these functors are also called `representable', but we have to be aware that, strictly speaking, they are not: the functor of points of a generic superscheme is always representable in the category of superschemes.
If the superscheme is affine, then for all purposes it is enough  to define the functor of points on affine superschemes. The extension to generic superschemes could be done by a {\it gluing procedure}, using the fact  mentioned in Definition \ref{superscheme-def}.

\begin{example}\label{fop affine superspace}{\it The functor of points of affine superspace.} We now go back to Example \ref{affine superspace}. Let $k^{p|q}$ be the scheme of the affine superspace. We want to describe its functor of points on affine superschemes.    Due to the equivalence of categories, we have that a  morphism
$$\phi:T\rightarrow k^{p|q},\qquad T\in \sschemeaff$$
can be determined by a morphism
$$\begin{CD} A=k[x_1,\dots ,x_p]\otimes \Lambda (\theta_1,\dots, \theta_q)@>\phi'>> B=\cO(T)\\
(x_1,\dots ,x_p\,;\,\theta_1,\dots, \theta_q)@>>>(a_1,\dots ,a_p\,;\,\alpha_1,\dots, \alpha_q)\,.\end{CD}$$
The morphism $\phi'$ is determined by the image of the generators, that is, by $p$ even elements $a_i\in B_0$ and $q$ odd elements $\alpha_j\in B_1$.

\hfill$\square$
\end{example}

Notice that taking $B= k$, the $k$-points of of $A$ are just the geometric points of the affine space $k^p$, the reduced variety. To detect the presence of odd variables one needs more than the geometric points, one needs  the full functor of points.

By an abuse of notation, we will use the same name for the functor given on the affine superschemes as for the functor given on the superalgebras. Moreover, when there is no possibility of confusion, we will denote these functors with the same letter than the superscheme or the supervariety itself.

The following example of affine supervariety will be of special interest for us.

\begin{example}\label{algebraicsupergroup}
{\it The functor of points of the algebraic supergroup $\rSL(4|1)$.} An algebraic supergroup is an algebraic supervariety whose functor of points is group valued.

Let $A$ be a superalgebra.  A supermatrix  of dimension $m|n$ with entries in $A$ is a matrix of the form

$$M_{n|m}(A):=\left\{\begin{pmatrix}p_{m\times m}&q_{m\times n}\\r_{n\times m}&s_{n\times n}\end{pmatrix}\right\}$$
where $p, q, r$ and $s$ are blocks of the dimension indicated above, with the entries of $p$ and $s$ are valued in $A_0$ and the entries of $q$ and $r$ are valued in $A_1$.
Let $A$ be a superalgebra.  We define the functor $\rSL(4|1): \salg\rightarrow \sets$ as
$$\rSL(4|1)(A):=\left\{g=\begin{pmatrix}p_{4\times 4}&q_{4\times 1}\\r_{1\times 4}&s_{1\times 1}\end{pmatrix}\in M_{4|1}(A);\, \, p, s \hbox{ invertible, }\Ber\, g=1\right\}\,.$$
The Berezinian is given by (we suppress the dimensions of the blocks)
$$\Ber\begin{pmatrix}p&q\\r&s\end{pmatrix}=\det s^{-1}\cdot\det(p-qs^{-1}r)\,.$$ The Berezinian is the generalization of the determinant to the super case. Notice that it is only defined if $s$ and $p$ are invertible, which in turn implies that the supermatrix is invertible.

On morphisms $f:A\rightarrow B$, the functor $\rSL(4|1)$ behaves as follows.
Let us denote as $g=(g_{ij}) \in \rSL(4|1)(A)$, the supermatrix formed with the entries $g_{ij}$. Then
$\rSL(4|1)(f)(g)=(f(g_{ij})) \in \rSL(4|1)(B)$.
 This can be defined for the full $M_{4|1}$ (which is also a functor) and then one can check that  it preserves the invertibility condition.

This functor is group valued and representable. It corresponds to an algebraic  supervariety with (affine) coordinate superalgebra
$$k[x_{ij},\xi_{kl}][(\det p)^{-1}, (\det s)^{-1}]/( \Ber \, g -1)\,,$$ where
$x_{ij}$ are the even entries of $p$ and $s$, and $\xi_{kl}$ are the odd entries of $q$ and $r$.

\hfill$\square$
\end{example}

Examples of schemes and superschemes that are not affine are the projective space and superspace and projective algebraic varieties and supervarieties. We are interested in some examples of projective supervarieties   so, as a first example, we will describe the projective superspace.

\begin{example}\label{fop projective superspace}{\it Functor of points of the projective superspace.} Let $T=(|T|,\cO_T)$ be an affine superscheme. We define the following functor

\be \begin{CD}\sschemes @> \bP^{m|n}>> \sets\\
T@>>>\bP^{m|n}(T)\,,\end{CD}\label{functor projective space}
\ee
where
$$\bP^{m|n}(T):=\{\,\hbox{locally free subsheaves }  \cF_T\subset k^{m+1|n}\otimes \cO_T \hbox{ of rank }1|0\,\}\,.$$
When the functor is restricted to  $\sschemeaff$ one can  give it equivalently on  $\salg$:

\be \begin{CD}\salg @> \bP^{m|n}>> \sets\\
A@>>>\bP^{m|n}(A)\,,\end{CD}\label{functor projective space alg}
\ee
where
$$\bP^{m|n}(A):=\{\hbox{\small{finitely generated projective submodules} }  M\subset A^{m+1|n} \hbox{ \small {of rank} }1|0\}$$
and $A^{m+1|n}:=A\otimes k^{m+1|n}$. On morphisms $f:A\rightarrow B$ we have that $\bP^{m|n}(f)$ is given through the extension of scalars
$$\begin{CD}
\bP^{m|n}(A)@>\bP^{m|n}(f)>>\bP^{m|n}(B)\\
M_A@>>>B\otimes_A M_A\,.
\end{CD}$$

This definition is closer to the geometric interpretation. One needs to prove that the functor is representable in the  $\sschemes$ category, that is, we have to find a superscheme $\bP^{m|n}$ (by an abuse of notation, we will denote the functor and the superscheme by the same symbol), such that
$$\bP^{m|n}(T)=\Hom (T,\bP^{m|n})\,.$$
The superscheme $\bP^{m|n}$ is in fact an algebraic variety, but not an affine superscheme. It is constructed as a sheaf over the  ordinary projective space $\bP^{m}=\C^{m+1}/\sim$, where, as usual
$$(x_0, x_1,\dots , x_m)\sim (x'_0, x'_1,\dots , x'_m)\qquad \hbox{iff} \quad x'_i=\lambda x_i, \lambda\in \C^\times\,.$$
An open cover of $\bP^{m}$ is given by
$$U_i=\{[x_0, x_1,\dots, x_m] \; |\; x_i\neq 0\}  \qquad i=1,\dots, m\,,$$ and local coordinates are given by
$$(u^i_0,\dots, {\hat{u}}^i_i,\dots , u^i_m),\qquad u^i_k:=x_k/x^i\,.$$
 On each open set $U_i$ we consider the sheaf that for any $V$, open subset of $U_i$,
 $$\cO_{U_i}(V)=\cO_{0,\,U^i}(V)\otimes \wedge (\xi_1,\dots, \xi_n)\,,$$ where $\cO_{0,\,U^i}(V)$ is simply the sheaf of algebraic functions on $U_i$ and then
 so $\cO_{U_i}\cong \C^{m|n}$.

 These sheafs can be glued conveniently, the procedure being similar to the one used for standard projective space. In this way we obtain the supervariety $ \bP^{m|n}$ representing the functor (\ref{functor projective space}). For a detailed proof of this fact, see Refs. \cite{ccf, cfl, fl}.

\hfill$\square$
\end{example}

\bigskip

A projective supervariety is a supervariety that can be embedded into a projective superspace. There is a standard construction, parallel to that of $\uspec(F)$, that contains the information of the projective supervariety and the embedding. One starts with a $\Z$-graded superring (with grading compatible with the $\Z_2$-grading) and considers {\it homogeneous ideals}. We describe briefly  the construction, starting with the projective space itself.

\begin{example} {\sl Projective superspace and projective supervarieties.} \label{constructionproj} The standard construction of the projective superspace as a superscheme follows the same lines than for the non super case (see for example Chapters 2 and  10 in Ref. \cite{ccf}).
We first consider  the $\Z$-graded superalgebra $A=k[x_0,x_1,\dots,x_m ;\xi_1,\dots,\xi_n]\,,$ with the polynomial grading, which is compatible with the $\Z_2$-grading. The polynomial grading is denoted with superindices , $A=\oplus_{n=0}^\infty A^n$, while the $\Z_2$-grading is denoted with subindices $A=A_0+A_1$. The topological space $\proj(A_0)$ is the set of $\Z$-homogeneous prime ideals in $A_0$ which do not contain the ideal $(x_0,x_1,\dots,x_m ;\xi_i\xi_j)$. This set is given the Zariski topology. It can be covered by open sets $U_i$, $i=0, \dots ,n $, formed by the homogeneous primes not containing $x_i$, which amounts to say, the homogeneous primes in $$k[x_0,x_1,\dots,x_m ;\xi_1,\dots,\xi_n][x_i^{-1}]\,,$$ or simply the prime ideals in the ring at degree zero  $$\left(k[x_0,x_1,\dots,x_m ;\xi_1,\dots,\xi_n][x_i^{-1}]\right)^0\,.$$ Setting $u_j=x_j/x_i$, this ring is isomorphic to
$$k[u_0,\dots,\hat{u}_i,\dots ,u_m ;\xi_1,\dots,\xi_n]\,,$$ which defines the sheaves $\cO_{U_i}$ of  Example \ref{fop projective superspace}. They glue to a scheme that we denote as  $\uproj(A):=(\proj(A), \cO_A)=\bP^{m|n}$.

The stalk of the scheme at each prime $p$ is obtained by localizing $A$ as an $A_0$-module, essentially with the technique described in Ref. \cite{ha} page 116.
\smallskip

A projective algebraic supervariety can be constructed  in a similar way starting from a graded superring $S=\oplus_{d=0}^\infty S^d$ obtained by quotienting $A$ by some homogeneous, $\Z_2$-graded ideal.   As before, the set $\proj(S_0)$ is the set of all homogeneous prime ideals not containing $S_0^+=\oplus_{d=1}^\infty S^d_0$. This set is given the Zariski topology and a sheaf $\cO_S$ obtained localizing $S$. We then obtain a scheme  denoted as $\uproj(S):=(\proj(S), \cO_S)$, locally isomorphic to an affine scheme.


\hfill$\square$
\end{example}

\begin{remark} \label{duals-rem}{\it Duals and inner homomorphisms for modules.} Let $A$ be a commutative superalgebra (associative, with unit). For a commutative superalgebra, a left $A$-module $M$ is also a right $M$-module by setting
 $$m\cdot a:= (-1)^{p(m)p(a)}a\cdot m,\qquad a\in A, m\in M\,,$$ where $p(a)$ and $ p(m)$ are the parities of $a$ and $m$,
 so we can call them just  `modules'. Let $\cM$ be the category of $A$-modules. It is a tensor category
 with unit $1=A$. Morphisms in $\cM$ are linear maps of $A$ modules $f:M\rightarrow M'$
that preserve parity.

If we consider the free modules $M=A^{p|q}$ and $M=A^{r|s}$, a morphism $f$ can be represented as a supermatrix
\be \begin{pmatrix}T_{r\times p}&T_{r\times q}\\T_{s\times p}&T_{s\times q}\end{pmatrix}\label{morphismfree}\ee
where the diagonal blocks have entries in $A_0$ and the off-diagonal blocks have entries in $A_1$.

We can consider linear maps $N\rightarrow M$, not necessarily parity preserving. For free modules, this will correspond to consider supermatrices as in (\ref{morphismfree}), but with arbitrary entries in $A$. We will denote them as $\rhom(N, M)$ as opposed to $\Hom(N, M)$ for the parity-preserving morphisms. It is clear that $\rhom(N, M)$ is an $A$-module itself. One says that $\rhom(-,-)$ is an {\it internal Hom functor}
$\cM^{\mathrm{op}}\times \cM\rightarrow \cM$.

The {\it dual} of an $A$-module $M$ is the $A$-module
$M^*=\rhom(M, A)$. Let us consider the free module $M=A^{p|q}$ and let  $\{e_1, \dots, e_p, \,\ep_1,\dots \ep_q\}$ denote the canonical basis\footnote{Latin letters will generically denote even quantities, Greek letters odd ones.}. As for vector spaces one can define the {\it dual basis} $\{e_1^*, \dots, e_p^*, \,\ep_1^*,\dots, \ep_q^*\}$ with
\be e_i^*(e_j)=\delta_{ij},\quad  e_i^*(\ep_\alpha)=0, \quad \ep_\alpha^*(e_i)=0,\quad \ep_\alpha^*(\ep_\beta)=\delta_{\alpha\beta},\label{dualbasis}\ee
with $i,j=1,\dots,p\,$ and $\,\alpha,\beta=1,\dots,q$. Any linear map $M\rightarrow A$ is determined by its values on a basis so it will be an $A$-linear combination of the dual basis. The relations (\ref{dualbasis}) imply that $e_i^*$ is a parity preserving (even)  map while $\ep_\alpha^*$ changes by one the parity (odd map), so  $M^*\cong A^{p|q}$, although the isomorphism is not natural.

We will be interested in finitely generated projective modules and their duals. One has that given two such $A$-modules $N$ and $M$
$$(N\oplus M)^*=N^*\oplus M^*\,.$$ If $M$ is a projective module, the above equation implies that $M^*$ is also a projective module. One also has the natural isomorphism
$${M^*}^*\cong M\,.$$

\hfill$\square$

\end{remark}

\section{The classical
superflag $\Fl(2|0,2|1;4|1)$ and its projective embeddings}

\label{superflag}The super flag manifold $\Fl(2|0,2|1;4|1)$ of $2|0$-subspaces of $2|1$-subspaces of
the complex superspace $\C^{4|1}$ is the model of the complex, flat\footnote{We refer to flat conformal geometry when the  Minkowski space is contained  in the variety that we consider.}, conformal, $N=1$ superspace. Indeed, this supervariety has a real form that is an homogeneous space for the  the conformal supergroup $\rSU(2,2|1)$ and contains the super Minkowski space as its big cell, together with the appropriate action of the super Poincar\'{e} subgroup on it. These are all physical requirements.   In Section \ref{homogeneous} we make a brief summary of the properties of the conformal and Minkowski superspaces seen as homogeneous superspaces of the conformal and Poincar\'{e} supergroups, respectively.

We will realize the superflag $\Fl(2|0,2|1;4|1)$ as contained in the product of $\Gr(2|0;4|1)$ (the super Grassmannian
 of $2|0$-subspaces in $\C^{4|1}$) times its dual, $\Gr(2|1;4|1)$ (the super Grassmannian
 of $2|1$-subspaces in $\C^{4|1}$).

 \smallskip

To ease the notation,  through this paper we will denote (as in the Introduction) $$ \F:=\Fl(2|0,2|1;4|1), \qquad \Gr_1:=\Gr(2|0;4|1),\qquad \Gr_2:=\Gr(2|1;4|1)\,.$$

\smallskip

 In the super context,
not all Grassmannian and flag superschemes can be embedded into a projective
space: topological obstructions arise some cases. An example of a non projective super Grassmannian is $\mathrm{Gr(1|1,2|2)}$ (see Chapter 4 in Ref. \cite{ma1} and Chapter 10 in Ref. \cite{ccf}). But in the special case that we investigate here, it is possible  to obtain a projective embedding by generalizing to the super case the so called  Pl\"{u}cker embedding of the Grassmannian $\Gr(2,4)$ in $\bP(\wedge ^2\C^4)\cong \bP(\C^6)\cong \bP^{5}$.

Our presentation here goes along the lines explored in  Refs. \cite{ma1, flv, cfl, fl}. We will give first the super Pl\"{u}cker embedding of $\Gr_1$ and $\Gr_2$ in the projective space $\bP^{6|4}$. Since $\F\subset \Gr_1\times \Gr_2$, then it also admits a projective embedding in $\bP^{6|4}\times \bP^{6|4}$. Finally, we will describe the Segre embedding of $\bP^{6|4}\times \bP^{6|4}$ in the super projective space $\bP^{64|56}$, which in turns gives the projective embedding of $\Gr_1\times \Gr_2$ and $\F$ as  subsets of $\bP^{64|56}$.

\smallskip

We start first with the Pl\"{u}cker embedding of the super Grassmannians.

\subsection{Pl\"{u}cker embedding of the super Grassmannian \\ $\Gr(2|0;4|1)$ and its dual $\Gr(2|1;4|1)$}\label{pluckergrass}

Let $\{e_1, e_2, e_3, e_4, \ep_5\}$ be the canonical basis of $\C^{4|1}$
with $e_i$ even and $\ep_5$ odd. As customary, Latin letters are used
for even objects, while Greek letters are used for  odd ones.
Let $A$ be a commutative superalgebra. We define the functors  $\Gr_1, \Gr_2:\salg\rightarrow \sets $  as
\begin{align*}
\Gr_1(A)&:=\{\,\hbox{finitely generated, projective submodules }  M\subset A^{4|1} \hbox{ of rank }2|0\,\}, \\ \\
\Gr_2(A)&:=\{\,\hbox{finitely generated, projective submodules }  M\subset A^{4|1} \hbox{ of rank }2|1\,\}\,.
\end{align*}
On morphisms of superalgebras $f:A\rightarrow B$ we have that $\Gr_1(f)$ is given by the extension of scalars
$$\begin{CD}\Gr_1(A)@>\Gr_1(f)>> \Gr_1(B)\\
M_A@>>>B\otimes_A M_A\,.
\end{CD}$$

These definitions are analogous to the definition of projective superspace (\ref{fop projective superspace}) and the geometrical meaning is clear. As in that case, one can prove representability (in the sense of Theorem \ref{theorem restriction} and the comments following it) of these functors in terms of  superschemes over the reduced variety, the Grassmanian of 2-planes in $\C^4$, \emph{i.e.}  $\Gr(2,4)$.  Moreover, the supergroup $\rSL(4|1)$ has a left action over $\Gr_1$ and $\Gr_2$, so they become homogeneous spaces
$$\Gr_1=\rSL(4|1)/P_1,\qquad \Gr_2=\rSL(4|1)/P_2\,,$$
where $P_1$ and $P_2$ are certain parabolic subgroups of $\rSL(4|1)$. In the functor of points notation, they are explicitly
\begin{align}&P_1(A)=\left\{\begin{pmatrix}
g_{11} &g_{12}&g_{13}&g_{14}&\gamma_{15}\\
g_{21} &g_{22}&g_{23}&g_{24}&\gamma_{25}\\
0 &0&g_{33}&g_{34}&\gamma_{35}\\
0 &0&g_{43}&g_{44}&\gamma_{45}\\
0 &0&\gamma_{53}&\gamma_{54}&g_{55}\end{pmatrix}\right\}\,,\nonumber\\\nonumber\\
&P_2(A)=\left\{\begin{pmatrix}
g_{11} &g_{12}&g_{13}&g_{14}&\gamma_{15}\\
g_{21} &g_{22}&g_{23}&g_{24}&\gamma_{25}\\
0 &0&g_{33}&g_{34}&0\\
0 &0&g_{43}&g_{44}&0\\
\gamma_{51} &\gamma_{52}&\gamma_{53}&\gamma_{54}&g_{55}\end{pmatrix}\right\},\qquad g_{ij}\in A_0,\quad \gamma_{kl} \in A_1\,.\label{parabolics}\end{align}
For the proof of these facts we refer the reader to Refs. \cite{ccf, fl}. The treatment of homogeneous spaces of supergroups, also with the functor of points approach, is done carefully in Ref. \cite{ccf}.

In Ref.  \cite{cfl}
 the embedding of $\Gr_1$ in the projective space $\bP(E)$, where  $E=\wedge^2\C^{4|1}\cong\C^{7|4}$, is described by giving explicit coordinates in
the functor of points approach. In the notation of Example \ref{fop projective superspace}, $\bP(E)\cong \bP^{6|4}$. We will briefly outline it here.

In the functor of points language, any morphism (in particular, an embedding) is a natural transformation among the functors. We then need to give a natural transformation among the functors $\Gr_1,\,\bP(E):\salg\rightarrow\sets$. For each object $A$ in $\salg$  we define the morphism
$$\begin{CD}
\Gr_1(A)@>p_A>>\bP(E)(A)\\
M@>>>\wedge^2M\,,
\end{CD}$$
where
$$\wedge^2M=M\otimes M/(u\otimes v+ (-1)^{|u||v|}v\otimes u)$$ and $|u|$ is the parity of $u$.
Notice that if $M$ is a projective submodule of $A^{4|1}$ of rank $2|0$, then $\wedge^2M$ is a projective submodule of $\wedge^2A^{4|1}\cong A^{7|4}$ of rank $1|0$ so it is indeed an element of $\bP(E)(A)$.
This morphism is functorial in $A$, that is, given a superalgebra morphism $f:A\rightarrow B$, the diagram
$$\begin{CD}
\Gr_1(A)@>p_A>>\bP(E)(A)\\
@V\Gr_1(f)VV@ VV\bP(E)(f)V\\
\Gr_1(B)@>p_B>>\bP(E)(B)
\end{CD}$$
commutes. It is an easy exercise to prove that this is true.

The functors of points of $\Gr_1$ and $\bP(E)$ are {\it local} or {\it sheaf} functors (see for example Chapter VI of Ref. \cite{eh} in the non super case, Appendix B.2 of Ref. \cite{ccf} in the super case). In our case, this is guaranteed by the fact that they are functors of points of superspaces. A natural transformation between local functors is determined by its behaviour on local superalgebras This result is proven in  Proposition B.2.13 of Ref. \cite{ccf}, and it is a generalization of a similar result in the non super case (see, for example, Ref. \cite{eh}). So, once we have defined the natural transformation $p_A$ for an arbitrary superalgebra, we can restrict ourselves to work on local superalgebras.   The projective submodules over a local superalgebra are free; then they have a  basis, which considerably  simplifies the treatment.

Let $A$ be a local superalgebra and let
$W_1(A) \in \Gr_1(A)$ be the linear span over $A$ of two linearly independent, even vectors. In  the canonical basis of $\C^{4|1}$ (as above) we have
\be
W_1(A) = \rspan\left\{  \begin{pmatrix} a_{11} \\
a_{21} \\ a_{31} \\ a_{41} \\ \alpha_{51} \end{pmatrix}
\, , \,
\begin{pmatrix} a_{12} \\
a_{22} \\ a_{32} \\ a_{42} \\ \alpha_{52} \end{pmatrix} \right\}
\, = \, \rspan\left\{ r+\rho \ep_5, \, s+\sigma \ep_5 \right\} \, \subset \,
A^{4|1}\,,\label{w1a}
\ee
with

\begin{align*}
&r=a_{11}e_1+a_{21}e_2+a_{31}e_3+a_{41}e_4,\qquad  \rho=\alpha_{51}, \\
&s=a_{12}e_1+a_{22}e_2+a_{32}e_3+a_{42}e_4, \qquad  \sig=\alpha_{52}, \qquad a_{ij}\in A_0,\; \alpha_{5k}\in A_1\,.
\end{align*}
We consider now the wedge product of the two vectors. A basis in $E(A)=\wedge^2\C^{4|1}(A)$ is given by
\begin{align*}
&e_1\wedge e_2, \quad e_1\wedge e_3, \quad e_1\wedge e_4, \quad e_2\wedge e_3, \quad e_2\wedge e_4, \quad e_3\wedge e_4, \quad \ep_5\wedge \ep_5\quad \hbox {(even)},\\
&e_1\wedge \ep_5,\quad e_2\wedge \ep_5,\quad e_3\wedge \ep_5,\quad e_4\wedge \ep_5,\quad \hbox{(odd)}\,,
\end{align*}
so $E(A)\cong A^{7|4}$ and  we can write
\begin{align*}
(r+\rho \ep_5) \wedge (s+\sigma \ep_5)  = & r \wedge s + (\sig r - \rho s) \wedge \ep_5+ \rho \sig \ep_5 \wedge \ep_5=
\\
& q + \lambda \wedge \ep_5 + d_{55} \ep_5 \wedge \ep_5\, ,
\end{align*}
with
\begin{align*}
&q:=d_{12}e_1 \wedge e_2+ d_{13}e_1 \wedge e_3 + d_{14}e_1 \wedge e_4+
d_{23}e_2 \wedge e_3 + d_{24} e_2 \wedge e_4 + d_{34}e_3 \wedge e_4,
\\
&\lambda:=\dd_{15}e_1 + \dd_{25}e_2+ \dd_{35}e_3+ \dd_{45}e_4 ,\\
&d_{55}:=\rho\sig=\alpha_{51}\alpha_{52}\,,
\end{align*}
and
$$d_{ij}:=\det \begin{pmatrix}a_{i1}&a_{i2}\\a_{j1}&a_{j2}\end{pmatrix},\qquad \dd_{i5}:=\det \begin{pmatrix}a_{i1}&a_{i2}\\\alpha_{51}&\alpha_{52}\end{pmatrix}\,.$$
Notice that although $\dd_{i5}$ is defined as a usual $2\times 2$ determinant,
it is indeed an odd element in $A$.

It is not difficult to prove that the coordinates $d_{ij}$, $d_{55}$,  $\dd_{i5}$,
determine uniquely a subspace $W_1(A) \in \Gr_1(A)$ (see also Section 4.8 in Ref. \cite{fl};  there the notation $\Gr^{\mathrm{ch}}$ is used instead of $\Gr_1$). Moreover,
if we change the basis we used to describe $W_1(A)$, that is, if we act on the right with $\rGL(2|0)=\rGL(2,\C)$  on $W_1(A)$,  these coordinates
vary by a common constant factor.
The natural transformation becomes
$$
\begin{CD}
 \Gr_1(A) @>p_A>>\bP(E)(A) \\
 W_1(A)=\rspan\left\{ a_1,a_2\right\}@>>>[a_1\wedge a_2]\,,\end{CD}
$$
where $\bP(E)(A)\cong \bP^{6|4}$. In coordinates, the map is
$$
\rspan\left\{ r+\rho \ep_5, \,s+\sigma \ep_5\right\rangle\rightarrow [d_{12}, d_{13},d_{14},d_{23},d_{24},d_{34},d_{55}; \delta_{15},\delta_{25},\delta_{35},
\delta_{45}]\,.$$
The projective embedding of $\Gr_1\subset \bP(E)$ defined above is called {\it the super Pl\"{u}cker embedding}.

\bigskip

We are going to characterize the image of the super Pl\"{u}cker embedding in terms of homogeneous polynomials. Then, we will have proven that $\Gr_1$ is a projective supervariety. We ask then when a generic, even vector $w$ in $E(A)$
\be w= d+\delta\wedge \ep_5+d_{55}e_5\wedge\ep_5\,,\label{genericv}\ee with
\begin{align*}
&d:=d_{12}e_1 \wedge e_2+ d_{13}e_1 \wedge e_3 + d_{14}e_1 \wedge e_4+
d_{23}e_2 \wedge e_3 + d_{24} e_2 \wedge e_4 + d_{34}e_3 \wedge e_4,
\\
&\delta:=\dd_{15}e_1 + \dd_{25}e_2+ \dd_{35}e_3+ \dd_{45}e_4\,.
\end{align*}
is {\it decomposable}, that is,  it can be written as a wedge product \be w= (r+\rho \ep_5) \wedge  (s+\sigma \ep_5)\,.\label{decomposable}\ee
One can prove \cite{cfl} that  this happens if and only if
 the following conditions are satisfied
$$
d\wedge d=0, \qquad d \wedge \delta=0, \qquad
\delta\wedge\delta=2d_{55}d\,.
$$
These equations are known as {\it the  super Pl\"ucker relations} for $\Gr_1$. They give  all the relations among the coordinates $d_{ij}, \delta_{i5}$ and
provide  a presentation of the coordinate ring of
$\Gr_1$ associated to this embedding. More explicitly, the super Pl\"{u}cker relations are
\begin{align}
&d_{12}d_{34}-d_{13}d_{24}+d_{14}d_{23}=0, &&
\hbox{ (classical Pl\"{u}cker relation)}
\nonumber  \\&d_{ij}\dd_{k5}-d_{ik}\dd_{j5}+d_{jk}\dd_{i5}=0,&&
1\leq i<j<k\leq 4,\nonumber \\
& \dd_{i5} \dd_{j5}=d_{55}d_{ij},&& 1\leq i<j\leq 4\,.\label{superplucker}
\end{align}
Let us denote as $I_{\Gr_1}$ the homogeneous ideal generated by the quadratic relations (\ref{superplucker}). Then, the Grassmannian coordinate ring $\C[\Gr_1]$
resulting from this embedding is given by
\be
\C[\Gr_1]=\C[d_{ij}, d_{55}; \dd_{i5}]/I_{{\Gr_1}}\,.
\label{embeddingring}\ee
The projective variety so defined  is called  the {\it super Klein quadric}, and it is isomorphic to $\Gr_1$ (see Refs. \cite{va,cfl}).

\medskip

We now turn to the problem of finding a projective embedding for
$\Gr_2$.
Let $W_2(A) \in \Gr_2(A)$ with
\be
W_2(A) = \rspan\left\{  \begin{pmatrix} a_{11} \\
a_{21} \\ a_{31} \\ a_{41} \\ \alpha_{51} \end{pmatrix}
\, , \,
\begin{pmatrix} a_{12} \\
a_{22} \\ a_{32} \\ a_{42} \\ \alpha_{52} \end{pmatrix}
\, , \,
\begin{pmatrix} \al_{15} \\
\al_{25} \\ \al_{35} \\ \al_{45} \\ a_{55} \end{pmatrix}
\right\} \subset A^{4|1}\,.\label{w2a}
\ee
As always, Latin letters are even elements in $A$ and Greek letters are odd elements in $A$.  One could try to mimic the procedure used for $\Gr_1$: there will appear $3\times 3$  determinants. But the change of basis is now given by the right action of the supergroup $\rGL(2|1)(A)$, and the ordinary determinants are not invariant under the action of the supergroup. The correct invariant objects in this case would be the Berezinians. This complicates considerably all the calculations.

 Nevertheless we can bypass this problem by using a natural duality between $\Gr_1$ and $\Gr_2$ that we are going to describe.

 Let us denote $T(A)=A^{4|1}$, and consider the   dual module $T(A)^*=\rhom_A(T(A), A)$, the set of (even and odd) $A$-linear maps  or forms
$A^{4|1}\rightarrow A$ (see Remark \ref{duals-rem}). We have that $T(A)\cong T(A)^*\cong A^{4|1}$, although the isomorphism is not natural. We will denote
$$\Gr_1^*(A):=\{\,\hbox{projective submodules }  M\subset T(A)^* \hbox{ of rank }2|0\,\}\,.$$ Clearly $\Gr_1^*\cong\Gr_1$.

Let $W_2(A)\in\Gr_2 (A)$, where $A$ is now an arbitrary superalgebra. We define the annihilator\footnote{We use here the word `annihilator' in the same sense than it is used for vector spaces.} of $W_2(A)$ as
$$W_2(A)^0=\{ \,u^* \in T(A)^*\; |\; u^*(v)=0\;\; \forall \; v\in W_2(A)\,\}\,.$$
 $W_2(A)^0$ is a submodule of $T(A)^*$ but in general, it is  not a projective submodule.

In spite of that, let us first see how the construction works for  local superalgebras; then $W_2(A)$ is a free module so it has a basis.
 Let
 \be \{e_1, e_2, e_3, e_4, \ep_5\}\label{basisT}\ee denote the canonical basis of $T(A)$ and let \be\{e_1^*, e_2^*, e_3^*, e_4^*, \ep_5^*\}\label{dualbasisT}\ee  denote the dual basis of $T(A)^*$  (\ref{dualbasis}).
 Suppose that $W_2(A)= \rspan\left\{ e_1, e_2, \ep_5\right\}$; then its annihilator is
$W_2(A)^0=\rspan\left\{ e^*_3, e^*_4\right\}$. For a general submodule $W_2(A)$, we can always choose a basis so that $W_2(A)$ and $W_2(A)^0$ have this form.  The change of basis accounts for the left action of the supergroup $\rSL(4|1)$, so for a general $W_2(A)$ we can write
  \begin{align}W_2(A)&=\langle \,g \cdot e_1, \,g \cdot e_2,\, g \cdot \ep_5 \,\rangle\subset T(A),\label{group1}\\W_2^0(A)&=\langle \,(g^t)^{-1} \cdot e_3^*,\,(g^t)^{-1} \cdot e^*_4\,\rangle\subset T(A)^*,\qquad g\in \rSL(4|1)(A)\,. \label{group2}\end{align}
We have then established a natural transformation between the functors $\Gr_2$ and $\Gr_1^*$  restricted to local algebras
$$\begin{CD}\Gr_2(A)@>q_A>> \Gr_1^*(A)\\
W_2(A)\subset T(A)@>>>W_2(A)^0\subset T(A)^*\,.\end{CD}$$ It has an inverse since there is a natural identification
$$T(A)^{**}\cong T(A)\,,$$ so the transformation is an isomorphism of functors.

  \begin{remark}
The fact that the entries of the
matrix $(g^t)^{-1}$ are expressed in terms of the entries of
$g$ has crucial consequences when
considering quantum deformations of these objects; in fact, this
will enable us to realize both $\Gr_1$ and $\Gr_2$ within the same
quantum matrix bialgebra.

\hfill$\square$
\end{remark}

 We still have to tackle the problem of extending  the natural transformation $q_A$ to generic superalgebras. Let $W_2(A) \in \Gr_2$, with $A$ generic, and let us consider it as an $A_0$-module.
Being finitely generated and projective,  it is {\it locally free}
 (see for example Ref. \cite{ei}, page 137 for the ordinary setting).
This means that there exists $f_i \in A_0$, $\, i=1,\dots, n$ such that the ideal $(f_1,\dots,f_n)=A_0$ and such that $W_2(A)[f_i^{-1}]$ is free as an $A[f_i^{-1}]$-module
(see
Theorem B.3.4 in Ref. \cite{ccf}).

 The geometric meaning of this fact is the following: since the set of prime ideals in $A$ is in bijective correspondence with the set of  prime ideals
in $A_0$ (see Chapter 10 in Ref. \cite{ccf}), the topological space  $\spec(A)=\spec(A_0)$
is covered by the open sets \be\spec(A_0)_{f_i}=\spec(A_0[f_i^{-1}])\,,\label{gluing}\ee and the localization of $W_2(A)$ on each $f_i$, $W_2(A)[f_i^{-1}]$ is a free $A[f_i^{-1}]$-module.
The annihilator  $W_2(A)[f_i^{-1}]^0$ can be constructed as in (\ref{group2}). We have then a collection of free $A[f_i^{-1}]$-modules $W_2(A)[f_i^{-1}]^0$
of rank $2|0$  which obviously agree on
$A[f_j^{-1}][f_i^{-1}]$, and such that the gluing property (\ref{gluing}) is satisfied.
They then glue to a projective $A$-module that we will denote as  $W(A)^\perp\in \Gr^*_1(A)$. The construction is manifestly functorial. For more details on this construction, see Section 3.16 in Ref. \cite{lam}  and also  Sections 10.2.3 and 10.3.2 in Ref. \cite{ccf}.

We have then established a natural transformation (that we denote also as $q_A$)
$$\begin{CD}\Gr_2(A)@>q_A>> \Gr_1^*(A)\\
W_2(A)\subset T(A)@>>>W_2(A)^\perp\subset T(A)^*\,,\end{CD}$$ which is an isomorphism of functors on arbitrary superalgebras.

\medskip

We want now to obtain the super Pl\"{u}cker relations as in (\ref{superplucker}). We consider   $W_2(A)^\perp$  as a submodule of $T(A)^*$. In terms of the dual basis (\ref{dualbasisT}),
we will denote a vector  in $T(A)^*$ simply as
$$ r^*+\rho^* \ep^*_5,\qquad \hbox{with} \qquad r^*=a^*_{1}e^*_1+a^*_{2}e^*_2+a^*_{3}e^*_3+a^*_{4}e^*_4\,,$$ and a vector in $E(A)=\wedge^2\C^{4|1}(A)$ as
\be w^*= d^*+\delta^*\wedge \ep_5^*+d_{55}^*e_5^*\wedge\ep_5^*\,,\label{dualgenericv}\ee with
\begin{align*}
&d^*:=d^*_{12}e^*_1 \wedge e^*_2+ d^*_{13}e^*_1 \wedge e^*_3 + d^*_{14}e^*_1 \wedge e^*_4+
d^*_{23}e^*_2 \wedge e^*_3 + d^*_{24} e^*_2 \wedge e^*_4 + d^*_{34}e^*_3 \wedge e^*_4,
\\
&\delta^*:=\dd^*_{15}e^*_1 + \dd^*_{25}e^*_2+ \dd^*_{35}e^*_3+ \dd^*_{45}e^*_4\,.
\end{align*}
the two-form $w^*$ is decomposable
\be w^* =(r^*+\rho^* \ep^*_5)\wedge(s^*+\sigma^*\ep^*_5) \label{dualdecomposable}\ee if the   super Pl\"{u}cker relations
\begin{align}
&d^*_{12}d^*_{34}-d^*_{13}d^*_{24}+d^*_{14}d^*_{23}=0, &&
\hbox{ (classical Pl\"{u}cker relations)}
\nonumber  \\&d^*_{ij}\dd^*_{k5}-d^*_{ik}\dd^*_{j5}+d^*_{jk}\dd^*_{i5}=0,&&
1\leq i<j<k\leq 4,\nonumber \\
& \dd^*_{i5} \dd^*_{j5}=d^*_{55}d^*_{ij},&& 1\leq i<j\leq 4
\label{dualsuperplucker}
\end{align}
are satisfied.
\medskip

We have proven the following theorem:

\begin{theorem}
\label{projembgrass}
The product of super Grassmannians $\Gr_1$ and $\Gr_2$ is
embedded into the product of super projective spaces:
$$
\Gr_1 \times \Gr_2 \subset \bP(\wedge^2 T) \times \bP(\wedge^2 T^*)
$$
With respect to such projective embedding, the coordinate
ring of $\Gr_1 \times \Gr_2$ is given by:
$$
\C[d_{ij}, d_{kl}^*,d_{55}, d^*_{55}, \delta_{m5}, \delta^*_{n5}] \, \big/ \,
(I_{\Gr_1}+I_{\Gr_2})\,,
$$
where $I_{\Gr_1}$ is the ideal of the super Pl\"ucker relations (\ref{superplucker}),
while $I_{\Gr_2}$ is the ideal of the super Pl\"ucker
relations (\ref{dualsuperplucker}).

\hfill$\square$
\end{theorem}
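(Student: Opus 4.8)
The plan is to combine three ingredients that have all been set up in the preceding discussion: the super Plücker embedding of $\Gr_1$, the super Plücker embedding of $\Gr_2$ obtained via the duality isomorphism $q_A:\Gr_2\xrightarrow{\sim}\Gr_1^*$, and the elementary fact that the functor of points of a product is the product of the functors of points. Concretely, for every superalgebra $A$ one has $(\Gr_1\times\Gr_2)(A)=\Gr_1(A)\times\Gr_2(A)$, and applying the natural transformations $p_A:\Gr_1(A)\to\bP(\wedge^2T)(A)$ and (the analogue for $\Gr_2$) $p_A^*\circ q_A:\Gr_2(A)\to\bP(\wedge^2T^*)(A)$ componentwise yields a natural transformation
$$
(\Gr_1\times\Gr_2)(A)\longrightarrow \bP(\wedge^2T)(A)\times\bP(\wedge^2T^*)(A)=\bigl(\bP(\wedge^2T)\times\bP(\wedge^2T^*)\bigr)(A)\,.
$$
So the first step is simply to record that this componentwise map is the desired morphism; functoriality in $A$ is inherited from the functoriality of each factor, already checked when $p_A$ was introduced.

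Second, I would argue that this morphism is a closed embedding. Since the two Plücker maps $\Gr_1\hookrightarrow\bP(\wedge^2T)$ and $\Gr_2\cong\Gr_1^*\hookrightarrow\bP(\wedge^2T^*)$ are each closed embeddings (this is exactly the content of the super Klein quadric description: $\Gr_1$ is cut out by the ideal $I_{\Gr_1}$ of super Plücker relations (\ref{superplucker}), and dually $\Gr_2$ by $I_{\Gr_2}$, the relations (\ref{dualsuperplucker})), the product map $\Gr_1\times\Gr_2\hookrightarrow\bP(\wedge^2T)\times\bP(\wedge^2T^*)$ is a product of closed embeddings, hence a closed embedding. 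Here one uses that $-\times\bP(\wedge^2T^*)$ and $\bP(\wedge^2T)\times-$ preserve closed immersions of superschemes, which follows from the corresponding classical fact applied to the reduced varieties together with the compatible statement on the sheaves of odd parts.

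Third, for the coordinate ring statement I would use that the homogeneous (bi-graded) coordinate ring of a product under the product of two projective embeddings is the tensor product of the two homogeneous coordinate rings. For $\Gr_1$ this ring is $\C[d_{ij},d_{55};\delta_{i5}]/I_{\Gr_1}$ by (\ref{embeddingring}), and for $\Gr_2$ it is $\C[d_{ij}^*,d_{55}^*;\delta_{i5}^*]/I_{\Gr_2}$ by the dual construction. Their (super-)tensor product over $\C$ is $\C[d_{ij},d_{kl}^*,d_{55},d_{55}^*,\delta_{m5},\delta_{n5}^*]/(I_{\Gr_1}+I_{\Gr_2})$, since the two sets of generators are disjoint and commute (up to sign, the Greek ones anticommuting as usual), so the only relations are those already present in each factor. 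This is exactly the ring in the statement.

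The main obstacle, and the step where care is genuinely needed rather than routine, is the verification that the Plücker map for $\Gr_2$ is a closed embedding — i.e. that the duality $q_A$ really does transport the super Klein quadric description from $\Gr_1^*$ to $\Gr_2$ over \emph{arbitrary} superalgebras, not just local ones. The subtlety is the one flagged earlier in the text: the annihilator $W_2(A)^0$ need not be projective for general $A$, and one must instead pass to the glued module $W_2(A)^\perp$ built from localizations $W_2(A)[f_i^{-1}]^0$; one has to check that this gluing is compatible with taking $\wedge^2$ and with the Plücker relations, so that $q_A$ followed by the $\Gr_1^*$-Plücker map is still a natural transformation defined integrally and still a closed embedding. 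Once that compatibility is in place — it follows from the functoriality of all the constructions involved and the fact that the super Plücker relations are defined over $\C$ and hence stable under arbitrary base change — the rest of the proof is the formal bookkeeping sketched above, and the theorem follows.
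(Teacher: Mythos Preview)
Your proposal is correct and follows the same route as the paper: the theorem there is stated as a summary of the preceding construction (``We have proven the following theorem''), which consists precisely of the super Pl\"ucker embedding of $\Gr_1$, the duality $q_A:\Gr_2\xrightarrow{\sim}\Gr_1^*$ (first on local superalgebras, then extended by the gluing argument you flag), and the resulting dual Pl\"ucker relations. Your write-up simply makes explicit the product step and the tensor-product-of-coordinate-rings step that the paper leaves implicit, and you have correctly identified the one genuinely delicate point, namely the passage from $W_2(A)^0$ to the glued projective module $W_2(A)^\perp$ for non-local $A$.
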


It is possible to see the coordinate rings of $\Gr_1$ and $\Gr_2$ as subrings of the coordinate ring of $\rSL(4|1)$. This is a consequence of (\ref{group1}) and (\ref{group2}), and it is a crucial point in the quantization that we will propose in Section \ref{qsuperflag-sec}. Let $A$ be a local superalgebra and $g\in \rSL(4|1)(A)$
$$g=\begin{pmatrix}
g_{11}&g_{12}&g_{13}&g_{14}&\gamma_{15}\\
g_{21}&g_{22}&g_{23}&g_{24}&\gamma_{25}\\
g_{31}&g_{32}&g_{33}&g_{34}&\gamma_{35}\\
g_{41}&g_{42}&g_{43}&g_{44}&\gamma_{45}\\
\gamma_{51}&\gamma_{52}&\gamma_{53}&\gamma_{54}&g_{55}
\end{pmatrix}\,.$$
Let now
$$e_1=\begin{pmatrix}1\\0\\0\\0\\0\end{pmatrix},\qquad e_2=\begin{pmatrix}0\\1\\0\\0\\0\end{pmatrix}$$ be two vectors of the standard basis.
Then the action of $g$ on $e_1$ and $e_2$ selects the first two columns of $g$
$$ge_1=\begin{pmatrix}
g_{11}\\
g_{21}\\
g_{31}\\
g_{41}\\
\gamma_{51}
\end{pmatrix},\qquad ge_2=\begin{pmatrix}
g_{12}\\
g_{22}\\
g_{32}\\
g_{42}\\
\gamma_{52}
\end{pmatrix}\,,$$ and these  are the two independent vectors generating the subspace $W_1(A)$. The coordinate ring of
$\rSL(4|1)$ is
$$\C[\rSL(4|1)]=\C[g_{ij}, \gamma_{i5}, \gamma_{5i}]/(\Ber\, g-1)\,,$$ where we are now interpreting $g_{ij}, \gamma_{i5}$ and $\gamma_{5i}$ as (even and odd) indeterminates and not as elements of the superring $A$. One can show \cite{cfl, cfl2, fl} that $\C[\Gr_1]$ is the subring of $\C[\rSL(4|1)]$ generated by the $2\times 2$ determinants
$$d^{12}_{ij}:=g_{i1}g_{j2}-g_{i2}g_{j1},\qquad \delta^{12}_{i5}:=g_{i1}\gamma_{52}-g_{i2}\gamma_{51}, \quad d^{12}_{55}=\gamma_{51}\gamma_{52}\,,$$ with $i,j=1,\dots, 4$. We will suppress the superindex indicating columns 1 and 2, which coincides with the notation above. Nevertheless, we have to remember that when seeing $\C[\Gr_1]$ as a subalgebra of $\C[\rSL(4|1)]$, these expressions refer to the determinants of the first two columns of the generators $g_{ij}$ of $\C[\rSL(4|1)]$.

Because of the duality between $\Gr^*_1$ and $\Gr_2$ discussed above there are the corresponding expressions for $\Gr_2$, but now one has to consider the element of the group $(g^t)^{-1}$ (see (\ref{group2})). We  are going to give these expressions explicitly, but we need first some notation.

\begin{notation}\label{notationindices} Let us denote columns with  upper indices and rows with  lower indices. Let $I=(i_1,\dots, i_p)$ and $J=(j_1,\dots, j_p)$ be multiindices. Then
$d^J_I$
stands for usual determinant  obtained by taking the columns
$J$ and the rows $I$, while $\delta^J_I$ stands for usual determinant
with an odd column (in our case, the only possibility is the $5^{\mathrm{th}}$ one).
$B$ denotes the total Berezinian of $g$, while $b^J_I$ stands for Berezinian
obtained by taking columns $J$ and rows $I$.
We also write $b^{125}_{1 \dots \hat {\imath} \dots \hat {\jmath} \dots 5}$ for the Berezinian
obtained by taking the columns $1,2,5$ and the rows obtained by removing
$i$ and $j$ from $1,2,3,4,5$.

\hfill$\square$

\end{notation}

We have then the following proposition:

\begin{proposition}
Let
$$g = \begin{pmatrix} p_{4\times 4} & q_{4\times 1} \\ r_{1\times 4} & s_{1\times 1} \end{pmatrix}\in \rSL(4|1)(A), \qquad \hbox{or simply}\qquad  g=\begin{pmatrix} p & q \\ r & s \end{pmatrix}
$$  as in Example \ref{algebraicsupergroup}. The rest of the notation is as above. Then
\begin{align*}
d_{ij}^* & := d^{34}_{ij}((g^t)^{-1})=(-1)^{p(s)}\,
b^{125}_{1 \dots \hat {\imath} \dots \hat {\jmath} \dots 5}, \\ \\
\delta_{i5}^*&:=d^{34}_{i5}((g^t)^{-1})=(-1)^i \,\frac{B}{ \det( p)^2}\,
\left[ b^{1245}_{1 \dots \hat {\imath} \dots 45} \,\delta^{1235}_{1234} -
b^{1235}_{1 \dots \hat {\imath} \dots 45} \,\delta^{1245}_{1234} \right], \\ \\
d_{55}^* &:= ((g^t)^{-1})_{51} ((g^t)^{-1})_{52} =
-\frac{B^2}{ \det( p)^4} \,\delta^{2345}_{1234} \, \delta^{1345}_{1234}\,.
\end{align*}
where $s$ is the permutation
$$(1\,2\,3\,4\,5) \rightarrow (i\,j\,u\,v\,w)\qquad \hbox{with} \qquad
\{u,v,w\}=\{1, \dots  \hat{\imath}, \dots, \hat{\jmath}, \dots 5\}\,.$$
\end{proposition}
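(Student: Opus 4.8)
The plan is to compute the matrix $(g^t)^{-1}$ explicitly in block form and then simply read off the required minors. Since $g = \begin{pmatrix} p & q \\ r & s \end{pmatrix} \in \rSL(4|1)(A)$, we have $g^t = \begin{pmatrix} p^t & r^t \\ q^t & s^t \end{pmatrix}$, and the inverse of a block supermatrix is given by the standard formula: writing $\tilde p = p^t$, $\tilde q = r^t$, $\tilde r = q^t$, $\tilde s = s^t$, one has
\be
(g^t)^{-1} = \begin{pmatrix}
(\tilde p - \tilde q \tilde s^{-1}\tilde r)^{-1} & -(\tilde p - \tilde q \tilde s^{-1}\tilde r)^{-1}\tilde q \tilde s^{-1} \\
-\tilde s^{-1}\tilde r(\tilde p - \tilde q \tilde s^{-1}\tilde r)^{-1} & \tilde s^{-1} + \tilde s^{-1}\tilde r(\tilde p - \tilde q \tilde s^{-1}\tilde r)^{-1}\tilde q \tilde s^{-1}
\end{pmatrix}.
\ee
Because $s$ is a $1\times 1$ odd-free block and $q,r$ are odd, the Schur complement $\tilde p - \tilde q \tilde s^{-1}\tilde r = p^t - r^t s^{-1} q^t$ differs from $p^t$ only by a rank-one correction quadratic in odd variables, so its inverse can be expanded in a terminating series. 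The entries of $(g^t)^{-1}$ are thus rational expressions in the $g_{ij}$, $\gamma_{i5}$, $\gamma_{5j}$, $s=g_{55}$ and $\det p$.

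Next I would take the relevant $2\times 2$ minors. For $d_{ij}^* = d^{34}_{ij}((g^t)^{-1})$ one extracts rows $i,j$ and columns $3,4$ from the upper-left $4\times4$ block $(p^t - r^t s^{-1}q^t)^{-1}$; by Cramer/cofactor expansion this $2\times2$ determinant equals a $2\times 2$ cofactor of $p^t - r^t s^{-1}q^t$ divided by $\det(p^t - r^t s^{-1}q^t)$, and the latter determinant is precisely $\Ber g \cdot \det s = B\cdot g_{55}$ up to sign — here the hypothesis $\Ber g = 1$ can be invoked if desired, though keeping $B$ explicit makes the bookkeeping transparent. The $2\times2$ cofactor, once the odd corrections are unfolded, is exactly the signed Berezinian $b^{125}_{1\dots\hat\imath\dots\hat\jmath\dots 5}$: this is the definitional identity $\Ber\begin{pmatrix}P&Q\\mathbb{R}&S\end{pmatrix} = \det S^{-1}\det(P - QS^{-1}R)$ applied to the $3\times 3$ submatrix on columns $1,2,5$ and the complementary rows, and the sign $(-1)^{p(s)}$ records the reordering of rows to standard position — the permutation $s$ named in the statement. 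For $\delta_{i5}^*$ and $d_{55}^*$ one repeats this with the off-diagonal and lower-right blocks of $(g^t)^{-1}$; here the prefactors $B/\det(p)^2$ and $B^2/\det(p)^4$ arise because each application of $(p^t - r^t s^{-1}q^t)^{-1}$ contributes a factor $1/\det(p^t-r^ts^{-1}q^t)$ and one then re-expresses everything via $B = \Ber g$, while the bracketed combinations of $b$'s and $\delta$'s come from expanding the rank-one odd correction term $\tilde s^{-1}\tilde r(\dots)^{-1}$.

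I expect the main obstacle to be the sign and ordering bookkeeping rather than any conceptual difficulty: one must track carefully (i) the signs coming from transposition of a supermatrix (which is \emph{not} the naive transpose — the super-transpose introduces signs on the odd blocks), (ii) the signs from cofactor expansions of ordinary determinants, and (iii) the signs from permuting rows/columns into the canonical order $1<2<\dots<5$, which is what the permutation $s$ encodes. A clean way to organize this is to verify the three formulas first on a generic $g$ reduced modulo the square of the odd ideal (where everything collapses to classical Plücker/Grassmann duality for $\Gr(2,4)$, a known computation), then observe that both sides are polynomial in the odd generators of bounded degree, and finally check the top odd-degree terms, where $d_{55}^*$ and the quadratic-in-$\delta$ pieces live. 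Alternatively — and this is probably shortest — one invokes the already-established natural transformation $q_A:\Gr_2 \cong \Gr_1^*$ from (\ref{group2}), under which $W_2(A)^\perp = \langle (g^t)^{-1}e_3^*, (g^t)^{-1}e_4^*\rangle$, so that the super Plücker coordinates $d^*_{ij}, \delta^*_{i5}, d^*_{55}$ of $W_2(A)^\perp$ are by definition the $2\times2$ (super)determinants of the $2$-frame formed by columns $3,4$ of $(g^t)^{-1}$; the content of the proposition is then exactly the evaluation of those determinants, which is the computation just outlined.
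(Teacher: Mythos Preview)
Your approach is essentially the same as the paper's: compute the block form of the inverse via Schur complements, then use the Jacobi complementary-minor identity (your ``Cramer/cofactor expansion'') to rewrite the $2\times 2$ minors of the inverse as complementary minors of $\mathbf{p}=p-qs^{-1}r$, which are then recognized as the Berezinians $b^{125}_{\cdots}$. The paper organizes this slightly differently by working with $(g^{-1})^t$ rather than $(g^t)^{-1}$ and by invoking the $\rSL(4|1)$ identities $\mathbf{s}=\det p$ and $s=\det\mathbf{p}$ directly (rather than carrying $B=\Ber g$ through), and it only sketches the case $d^*_{12}$ explicitly --- but the underlying computation is the one you describe.
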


\begin{proof}
This can be checked by a  long computation. We sketch the  proof for the  particular case $d^*_{12}$. The other cases go along similar lines.

Let us define ${\bf p}:=p-q s^{-1} r$ and ${\bf s}:=s-r p^{-1} q$, we then have, explicitly
\\
$$(g^{-1})^t = \begin{pmatrix} ({\bf p}^{-1})^t & -(s^{-1}r{\bf p})^t \\ -(p^{-1}q{\bf s})^t& {\bf s}^{-1} \end{pmatrix}\,.
$$
The key identity is given by
$$
d_{12}^{34}(({\bf p}^t)^{-1})=\frac{1}{\det {\bf p} }d^{12}_{34}({\bf p})\,.
$$
For $g \in \rSL(4|1)(A)$ we have ${\bf s}= \det p$ and $s=\det {\bf p}$, thus
$$
d_{12}^{34}(({\bf p}^t)^{-1})=\frac{1}{s} \, d^{12}_{34}\,(p-q s^{-1} r)= b^{125}_{345}
$$
as we wanted to show.

\end{proof}

\subsection {The Pl\"{u}cker embedding of the super flag \\ $\Fl(2|0,2|1;4|1)$}\label{superflag-sec}
In the same way than for the projective superspace and  the super Grassmannians, we define the functor of points of the flag supervariety for a superalgebra $A$ as
\begin{align*}\F(A)=\{\,&\hbox{finitely generated, projective submodules }  M\subset N\subset  A^{4|1}\\& \hbox{ where } M \hbox{ is of rank } 2|0 \hbox{ and } N \hbox{ is of rank } 2|1\,\}\,. \end{align*}
This functor, when expressed in terms of superschemes, is also representable \cite{fl}. The reduced scheme is again the Grassmannian $\Gr(2,4)$. One can see immediately that there is a left action action of the supergroup $\rSL(4|1)$. As an homogeneous space, the functor is given as
$$\F=\rSL(4|1)/ P_u,\qquad P_u=P_1\cap P_2 \,,$$ where $P_1$ and $P_2$ are the parabolic subgroups defined in (\ref{parabolics}). We have
\be P_u(A)=\left\{\begin{pmatrix}
g_{11} &g_{12}&g_{13}&g_{14}&\gamma_{15}\\
g_{21} &g_{22}&g_{23}&g_{24}&\gamma_{25}\\
0 &0&g_{33}&g_{34}&0\\
0 &0&g_{43}&g_{44}&0\\
0 &0&\gamma_{53}&\gamma_{54}&g_{55}\end{pmatrix}\right\}\,.\label{interparabolic}\ee

As for the super Grassmannians, if $A$ is local, one can express the functor of points of the flag supervariety in terms of the action of $\rSL(4|1)$ on the canonical basis, namely
$$\F(A)=\left\{\left(\rspan\left\{\, g\cdot e_1, \,g\cdot e_2\,\right\} \,,\,
\rspan\left\{\, g\cdot e_1, \,g\cdot e_2\,\, g\cdot \ep_5\right\}\right)\,
,\; g\in \rSL(4|1)(A)\,\right\}$$
(see Sections  4.11 and 4.12 in Ref. \cite{fl}).

We want to determine the conditions on an element of $\Gr_1(A) \times \Gr_2(A)$ to actually belong to the superflag, that is, $(W_1(A), W_2(A))\in \Gr_1(A) \times \Gr_2(A)$ be such that $W_1(A)\subset W_2(A)$ (we remind the reader that $A$ is still a local algebra).
This will determine extra relations among the generators of the the coordinate ring of $ \Gr_1 \times \Gr_2$  (see Theorem \ref{projembgrass})  needed to describe the superflag with its projective embedding.

We denote
\begin{align*}W_1(A)&=\rspan\left\{ a,b\right\}=\rspan\left\{ \begin{pmatrix}a_1\\a_2\\a_3\\a_4\\\alpha_5\end{pmatrix}, \begin{pmatrix}b_1\\b_2\\b_3\\b_4\\\beta_5\end{pmatrix}\right\}\\\\
W_2^0(A)&=\rspan\left\{ c^*,d^*\right\}=\rspan\left\{ \begin{pmatrix}r_1^*\\r_2^*\\r_3^*\\r_4^*\\\rho_5^*\end{pmatrix}, \begin{pmatrix}s_1^*\\s_2^*\\s_3^*\\s_4^*\\\sigma_5^*\end{pmatrix}\right\}\,.
\end{align*} The necessary and sufficient conditions for $W_1(A)\subset W_2(A)$ are
\be r^*(a)=0,\quad r^*(b)=0,\quad s^*(a)=0,\quad s^*(b)=0\,.\label{incidence1}\ee

There is a way of expressing (\ref{incidence1}) in terms of $d=a\wedge b$ and $ d^*=r^* \wedge s^* $. First, we unify the notation by defining $e_5:=\ep_5$ and $e_5^*:=\ep_5^*$.
Then we define the contraction
$$\begin{CD} \left(T(A)^*\otimes T(A)^*\right)\otimes \left(T(A)\otimes T(A)\right)@>\Phi>> T(A)^*\otimes T(A)\\
\left( p^*_{ij} e^*_i\otimes  e^*_j\right)\otimes\left(q_{ij} e_i\otimes \ e_j\right)@>>> \sum_{j=1}^5 p^*_{ij}q_{jk} e^*_i\otimes e_k\,.\end{CD}$$
Assuming that $d^*$ and $d$ are decomposable, (\ref{incidence1}) is equivalent to
\be\Phi( d^*\otimes d)=0\,.\label{incidence2}\ee
 In components,
\be  \sum_{j=1}^5d^*_{ij}d_{jk}=0, \qquad \forall\; i,k=1,\dots, 5\,. \label{incidence3}\ee where $d_{i5}:=\delta_{i5}$ and $d_{i5}^*:=\delta_{i5}^*$
Notice that this expression is antisymmetric in $(i,j)$ and $(j,k)$, but the sum over $j$ is unrestricted.
This gives a total of 25 independent conditions that we write explicitly in Appendix \ref{incidence-ap}.
They are the {\it incidence relations}. Conditions (\ref{superplucker}), (\ref{dualsuperplucker}) and (\ref{incidence3})  define completely the flag supervariety $\F$.
We then can state the analog of Theorem \ref{projembgrass}
for the superflag.

\begin{theorem} \label{embflag}
There is an embedding of the superflag
$$
\F \subset \Gr_1 \times \Gr_2 \subset \bP(E)
\times \bP(E^*)\,,
$$
with $E=\wedge^2(T) \cong \C^{7|4}$, $E^*=\wedge^2(T^*) \cong \C^{7|4}$.
With respect to such embedding, the bigraded coordinate ring of $F$ is
given by
$$
\C[F]:=\C[d_{ij},\dd_{i5}, d_{55}, d_{ij}^*, \dd_{i5}^*, d_{55}^*] \, \big/ \,
(I_{\Gr_1}+I_{\Gr_2}+I_{\mathrm{inc}})
$$
where $I_{\mathrm{inc}}$, is the ideal generated by the 25 incidence relations in Appendix \ref{incidence-ap}.

\hfill$\square$
\end{theorem}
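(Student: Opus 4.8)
The plan is to assemble Theorem \ref{embflag} from pieces that are, at this point, essentially already in place. The embedding $\F \subset \Gr_1 \times \Gr_2$ follows immediately from the definition of the functor of points of $\F$: a point of $\F(A)$ is a flag $M \subset N \subset A^{4|1}$ with $M$ of rank $2|0$ and $N$ of rank $2|1$, so forgetting the inclusion gives the pair $(M,N) \in \Gr_1(A)\times\Gr_2(A)$, and this assignment is manifestly a natural transformation. Composing with the two super Pl\"ucker embeddings of Theorem \ref{projembgrass} (more precisely, with $p_A$ on the $\Gr_1$ factor and with $p_A \circ q_A$ on the $\Gr_2$ factor, using the duality $\Gr_2 \cong \Gr_1^*$) yields $\F \subset \bP(E)\times\bP(E^*)$. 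So the content of the theorem is really the identification of the bigraded coordinate ring, i.e. that the homogeneous ideal cutting out $\F$ inside $\bP(E)\times\bP(E^*)$ is exactly $I_{\Gr_1}+I_{\Gr_2}+I_{\mathrm{inc}}$.

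First I would record that the Pl\"ucker relations $I_{\Gr_1}$ and $I_{\Gr_2}$ cut out $\Gr_1\times\Gr_2$ inside $\bP(E)\times\bP(E^*)$; this is precisely Theorem \ref{projembgrass}, which I may assume. Then the remaining task is local on the reduced variety: it suffices to check, for $A$ a local superalgebra (so that all the projective submodules in play are free and have honest bases), that a pair $(W_1(A),W_2(A))\in\Gr_1(A)\times\Gr_2(A)$ satisfies $W_1(A)\subset W_2(A)$ if and only if the coordinates $d_{ij},\delta_{i5},d_{55}$ and $d^*_{ij},\delta^*_{i5},d^*_{55}$ satisfy the incidence relations \eqref{incidence3}. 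The reduction to local superalgebras is legitimate because the functors of points of $\F$, $\Gr_1$, $\Gr_2$ are sheaf functors (functors of points of superspaces), so a natural transformation and the vanishing of a natural family of equations can be tested on local superalgebras, exactly as invoked in Section \ref{pluckergrass}.

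The local verification is the heart of the matter, and I would run it in the following steps. Using the bases $W_1(A)=\rspan\{a,b\}$ and $W_2(A)^0=\rspan\{c^*,d^*\}$ introduced just before the theorem, the condition $W_1(A)\subset W_2(A)$ is equivalent, by definition of the annihilator and nondegeneracy of the pairing $T(A)^*\times T(A)\to A$, to the four scalar equations \eqref{incidence1}: $c^*(a)=c^*(b)=d^*(a)=d^*(b)=0$. Next, granting that $d=a\wedge b$ and $d^*=c^*\wedge d^*$ are decomposable (which is guaranteed since they come from genuine points of $\Gr_1$, $\Gr_2$ and hence satisfy the super Pl\"ucker relations), I would show that the contraction $\Phi(d^*\otimes d)=\sum_j d^*_{ij}d_{jk}\,e^*_i\otimes e_k$ vanishes if and only if the four pairings in \eqref{incidence1} vanish. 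The forward direction is a direct computation: $\Phi((c^*\wedge d^*)\otimes(a\wedge b))$ expands, up to signs dictated by the parities $e_5=\ep_5$ odd, into a combination of the products $c^*(a)$, $c^*(b)$, $d^*(a)$, $d^*(b)$ times the complementary vectors, so if those four scalars vanish so does $\Phi$. For the converse one uses decomposability again: given that $d,d^*$ are decomposable of the appropriate ranks, $\Phi(d^*\otimes d)=0$ forces the rank-one-in-each-slot "matrix" $\sum_j d^*_{ij}d_{jk}$ to be the zero operator $T(A)\to T(A)^*$ precomposed appropriately, which, chasing through the factorizations $d^*=c^*\wedge d^*$ and $d=a\wedge b$, is only possible if the image of $a\wedge b$ is annihilated by $c^*\wedge d^*$ in the pairing, i.e. if \eqref{incidence1} holds. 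Having established \eqref{incidence1}$\iff$\eqref{incidence3} over local $A$, and knowing that the coordinates determine the subspaces, the scheme cut out by $I_{\Gr_1}+I_{\Gr_2}+I_{\mathrm{inc}}$ has the same $A$-points as $\F$ for all local $A$, hence the same functor of points, hence equals $\F$ with its induced projective (bigraded) structure; the count of $25=5\times5$ incidence relations is simply the number of pairs $(i,k)$ with $i,k\in\{1,\dots,5\}$.

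The main obstacle I expect is the "converse" half of the local step: showing that $\Phi(d^*\otimes d)=0$ really does force the four pairings to vanish, rather than merely some weaker linear combination. This is where decomposability of both $d$ and $d^*$ is essential, and one must argue carefully — for instance by picking a basis adapted to $W_1(A)$, so that $a,b$ are two standard basis vectors and $\Phi(d^*\otimes d)=0$ reads off directly as "$c^*$ and $d^*$ kill $a$ and $b$" — and then invoke the change-of-basis equivariance (the right $\rGL(2|0)$, resp. $\rGL(2|1)$, actions scale the Pl\"ucker coordinates by a common unit and so do not affect the homogeneous ideal) to conclude in general. A secondary, more bookkeeping-type difficulty is keeping the Koszul signs straight in the expansion of $\Phi((c^*\wedge d^*)\otimes(a\wedge b))$ once the odd basis vector $\ep_5$ participates; this is routine but must be done with care, and the explicit list in Appendix \ref{incidence-ap} serves as the check that it has been done correctly.
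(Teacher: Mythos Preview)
Your proposal is correct and follows essentially the same approach as the paper. The paper's proof is in fact the discussion immediately preceding the theorem statement: it reduces to local superalgebras, asserts that $W_1(A)\subset W_2(A)$ is equivalent to the four pairings \eqref{incidence1}, and then asserts that, assuming decomposability, \eqref{incidence1} is equivalent to $\Phi(d^*\otimes d)=0$; your write-up supplies more detail on the converse direction and the sign bookkeeping than the paper itself does.
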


\begin{remark} \label{remarksubring} As for $\C[\Gr_1]$ and $\C[\Gr_2]$, also the coordinate ring of the superflag $\C[\F]$ relative to this embedding is the subring in $\C[\rSL(4|1)]$ generated by the determinants $d_{ij},\dd_{i5}, d_{55}, d_{ij}^*, \dd_{i5}^*$ and $d_{55}^*$ inside $\C[\rSL(4|1)]$. While dealing at the same time with coordinates $d_{ij}$ and $d^*_{ij}$ is already cumbersome at the classical level, at the quantum level the problem becomes intractable. This is  why we will  resort to a different argument, based on super line bundles, to define the ring of the quantum superflag in Sections \ref{qsection-sec} and \ref{qsuperflag-sec}. Nevertheless, the whole construction is based on the same property for $\C[\Gr_1]$ and $\C[\Gr_2]$.

\hfill$\square$

\end{remark}

In the following section  we are going to see how the product of two projective superspaces can be embedded into one projective superspace of higher dimension.

\subsection{The super Segre map} \label{segre-sec}
In this section we extend the Segre map to the super setting and then apply it to the case of interest for the embedding of the superflag $\F$ into a unique projective space.

In the ordinary case, given two projective spaces
$\bP^n(\C)$, also denoted as $\bP(\C^{n+1})$
, the Segre map is given by
$$
\begin{CD}
  \bP(\C^{n+1})\times \bP(\C^{d+1})  @>\psi>>\bP(\C^{n+1}\otimes\C^{d+1}) \\
(U, V)@>>>U\otimes V\,.
\end{CD}
$$
 In terms of the homogeneous coordinates for each space, one has
$$
\begin{CD}
  \bP^n(\mathbb{C})\times \bP^d(\mathbb{C})  @>\psi>>\bP^N(\mathbb{C}) \\
([x_0,\dots,x_n],[y_0,\dots,y_d])@>>>[x_0y_0, x_0y_1,
\dots, x_iy_j, \dots, x_ny_d]
\end{CD}
$$
with $\,i=0,..,n\,$, $\,j=0,..,d\,$ and $\,N=(n+1)(d+1)-1$.

As usual, $[x_0,\dots,x_n]$ stands for the equivalence class in $\bP^n(\C)$
given by $(x_0,\dots,x_n)\sim\lambda(x_0,\dots,x_n)$, for all $\lambda \in \C^\times$ (see Example \ref{fop projective superspace}).

If we label as $z_{ij}$ the homogeneous coordinates of $\bP^N(\C)$,
then it is not hard to prove that the image of the Segre map is an
algebraic variety given by the zero locus of the $2\times 2$
minors of the matrix (see for example Chapter 1 in  Ref. \cite{ha})
$$
\left(
\begin{matrix}
z_{00}&z_{02}&\cdots&z_{0d}\\
z_{20}& &&\\
\vdots&&\ddots&\\
z_{n0}&\cdots&&z_{nd}
\end{matrix}\,.
\right)
$$
Moreover, the map $\psi$ is an embedding.

\medskip

In order to generalize this construction to the super setting we have to define a natural transformation among the functors
$
\Psi: \bP^{n|r}\times \bP^{d|s} \rightarrow\bP^{N|M}
$,
where $N=(n+1)(d+1)+rs-1\,$ and   $\,M=(n+1)s+(d+1)r$. For an arbitrary superalgebra $A$, this is given by
\be\begin{CD}\bP^{n|r}(A)\times \bP^{d|s}(A) @>\Psi_A>>\bP^{N|M}(A)\\
(U(A), V(A))@>>> U(A)\otimes_A V(A)\,. \end{CD}\label{supersegre}\ee Since the natural transformation is defined in general, we can now restrict to local algebras. The submodules are free and in homogeneous coordinates we have

 \be
\begin{CD}
  \bP^{n|r}(A)\times \bP^{d|s}(A)  @>\Psi_A>>\bP^{N|M}(A) \\
([x_0,\dots,x_n\,|\,\alpha_1,\dots,\alpha_r],[y_0,\dots,y_d\,|\,\beta_1,\dots,\beta_s])@>>>
[x_iy_j,\alpha_k\beta_l\,|\,x_i\beta_l,y_j\alpha_k]\,.
\end{CD}\label{supersegrerita}
\ee
with
$$
i=0, \dots , n,\qquad j=0,\dots, d,\qquad
k=1,\dots, r,\qquad l=1,\dots, s\,.
$$

The super Segre map is an embedding and one can give the polynomial equations defining its image. To do so, we use  the {\it even rules principle} \cite{dm}, a technique essentially similar to the usual `Grassmann envelope' well known to people working with superalgebras, that helps to keep track of insidious signs. In order not to disrupt the discourse at this point we have preferred to put that proof in the Appendix \ref{embedding-ap}, where we also describe the technique.

\medskip

We have shown in Section \ref{superflag} how the superflag $\F$ embeds into the product of two
projective superspaces through the Pl\"{u}cker embedding. We just need now to consider the composition of the natural transformation of Theorem \ref{embflag} with the super Segre map (\ref{supersegre}).  Restricting to local superalgebras we have
\be \label{embedding-F}
\begin{CD}
\F(A)@>>> \bP(E)(A) \times \bP(E^*)(A) \\
W_1(A) \subset W_2(A) @>>>([d_{ij}, d_{55}\,|\,\dd_{i5}],[ d_{ij}^*,  d_{55}^*\,|\,\dd_{i5}^*])
\end{CD}\ee
where $W_1(A)=\rspan\{ r+\rho\ep_5, s+\sigma\ep_5\}$ and $W_2(A)$ is given in terms of $W_2(A)^0=\rspan\{ r^*+\rho^*\ep^*_5, s^*+\sigma^*\ep^*_5\}$. The relation with the coordinates $d_{ij},\dots d_{ij}^*,\dots $ is then  expressed   in equations (\ref{genericv}, \ref{decomposable}, \ref{dualgenericv}, \ref{dualdecomposable}).

Composing with the super Segre map we will embed $F$
into $\bP^{M|N}$,
where $M|N=64|56$; explicitly, we get:
\begin{equation} \label{embedding2}
\begin{CD}
 \bP(E)(A) \times \bP(E^*)(A) @>\Psi>> \bP^{M|N}(A) \\
([z_{ij,} z_{55}\,|\,\zeta_{i5}],[ z_{ij}^*,  z_{55}^*\,|\,\zeta_{i5}^*])
@>>>[ z_{ij}z_{kl}^*,z_{55}z_{55}^*,z_{ij}z_{55}^*,z_{55}z_{kl}^*,
\zeta_{i5}\zeta_{k5}^*\,|\,\\@. z_{ij}\zeta_{k5}^*,z_{55}\zeta_{k5}^*,\zeta_{i5}z_{kl}^*,\zeta_{i5}z_{55}^*]\,.  \\
\end{CD}
\end{equation}

Let us denote
$I, K=(1,2), (1,3),  (1,4), (2,3), (2,3), (3,4)$: Then, according to the notation in Appendix \ref{embedding-ap}, we can organize the image of the super Segre map $\C^{M+1|N}$ in matrix form:
\be\left(\begin{array}{cc}
z_Iz^*_K&z_{55}z^*_K\\
z_Iz^*_{55}&z_{55}z^*_{55}\\
\hline
\zeta_{i5}z^*_K&\zeta_{i5}z^*_{55}\end{array}\vline
\begin{array}{c}
z_I\zeta^*_k\\
z_{55}\zeta^*_k\\
\hline
\zeta_{i5}\zeta^*_{k5}\label{imagesegre}
\end{array}\right)\,.
\ee
This image is a projective algebraic variety in the  generators

$$\left(\begin{array}{cc}
Z_{IK}&Z_{5K}\\
Z_{I5}&Z_{55}\\
\hline
\Gamma_{iK}&\Gamma_{i5}\end{array}\vline
\begin{array}{c}
\Lambda_{Ik}\\
\Lambda_{5k}\\
\hline
T_{ik}
\end{array}\right)\,,
$$
satisfying the homogeneous polynomial relations (\ref{polynomials2}).

\section{Line bundles and projective embeddings.}\label{linebundles-sec}

In this section we want to introduce basic concepts in parabolic geometry and construct the very ample line bundle describing the projective embedding of the Grassmanian. Along the way, we will also discuss its interpretation as the bundle of conformal densities. We will then
extend our construction to the super Grassmanians $\Gr_1$ and $\Gr_2$ as well as the super flag $\F$.
\subsection{Parabolic (super)geometries
}
\label{homogeneous}

Let $G$ be a semisimple Lie group, $\fg=\mathrm{Lie}(G)$ and we consider its root decomposition with respect to a Cartan subalgebra $\fh$. If $\Delta^\pm$ are the subsets of positive and negative roots respectively, with $\Delta=\Delta^++\Delta^-$, we denote as usual
$$\fn_+=\sum_{\alpha\in \Delta^+}\fg_\alpha,\qquad \fn_-=\sum_{\alpha\in \Delta^-}\fg_\alpha\,,$$ so we have the decomposition
 $\fg=\mathfrak{n}_-\oplus \mathfrak{h}\oplus \mathfrak{n}_+$, The {\it Borel subalgebra} for this system of roots is
 $\mathfrak{b}_\pm=\mathfrak{h}\oplus \mathfrak{n}_\pm$, where one can choose indifferently $\fb_+$ or $\fb_-$. All the Borel subalgebras are conjugated.
 A {\it parabolic subalgebra} of $\fg$ is a subalgebra $\fp$ that contains the Borel subalgebra but it is not the full $\fg$.

 To every parabolic subalgebra $\fp$ there is  associated a $|k|$-grading of $\fg$
$$\fg=\fg_{-k}\oplus \cdots  \fg_{-1}\oplus\fg_0\oplus \fg_{1}\cdots \oplus\fg_{k},\qquad k\in \N\,,$$
with $[\fg_i,\fg_j]\subset \fg_{i+j}$, such that $\fp =\fg_0\oplus \fg_{1}\cdots \oplus\fg_{k}$. Also, one defines $\fp_+:= \fg_{1}\oplus\cdots \oplus\fg_{k}$ and $\fp_-:= \fg_{-k}\oplus\cdots \oplus\fg_{-1}$ .

 The {\it Levi subgroup} of $P$ is the group whose adjoint action preserves the grading. Its Lie algebra is $\fg_0$,  and we will denote it as $G_0$.  It is in fact the reductive component in the Levi decomposition, $P=G_0\ltimes P_+$   where $P_+$ is the unipotent radical with $\Lie(P_+)=\fp_+$.

 The following is the relevant example.

\begin{example} {\it The complexified conformal and Poincar\'{e} groups.} We  consider the conformal group of the Minkowski spacetime, the group $\rSO(4,2)$. Its spin group (the double covering) is $\rSU(2,2)$, with complexification $\rSL_4(\C)$. The Lie algebras $\fso_6(\C)\cong\fsl_4(\C)$ are isomorphic, but we will use the four dimensional notation. This means that the conformal algebra and the conformal group act on a four dimensional space called the  {\it twistor space} whose relation with the four dimensional spacetime we will see in a moment. For the standard choice of roots (diagonal), the Borel subalgebra consists of the lower triangular matrices\footnote{Or, alternatively, the upper triangular matrices.}
 $$\begin{pmatrix}*&0&0&0\\
 *&*&0&0\\
 *&*&*&0\\
 *&*&*&*\end{pmatrix}\,.$$
 The complexified {\it Poincar\'{e} subalgebra plus dilations} is a parabolic subalgebra consisting of lower, $2\!\times\! 2\,$-block triangular matrices
$$\begin{pmatrix}l&0\\m&r \end{pmatrix}\,,$$
 where the diagonal blocs $l$ and $r$ form the Lorentz subalgebra plus dilations $\fsl_2(\C)\oplus \fsl_2(\C)\oplus \C\cong \fso(4,\C)\oplus \C$, and the block $m$ represents the translations. The {\it Poincar\'{e} group times dilations}, in this context, is the group $P=(\rSL_2(\C)\times \rSL_2(\C)\times \C)\ltimes \rM_2(\C)$, where $ \rM_2(\C)$ is the space of $2\times 2$-dimensional matrices. We will denote a generic element of this group as
\be g=\begin{pmatrix}L&0\\NL&R \end{pmatrix}\,,\label{poincaregroup}\ee
(notice that $L$ and $R$ are invertible matrices).

The conformal space is the Grassmannian $\Gr(2,4)\cong\rSL_4(\C)/P_u$, where $P_u$ is the upper parabolic subgroup of elements
$$g=\begin{pmatrix}L&Q\\0&R \end{pmatrix}\,,$$
 which is conjugated to (\ref{poincaregroup}).
 The grading of the Lie algebra is
\begin{align*} &\fg_0=\left\{\begin{pmatrix}l&0\\0&r\end{pmatrix}\right\},\qquad
&\fg_{-1}=\left\{\begin{pmatrix}0&0\\m&0\end{pmatrix}\right\}, \qquad
\fg_{1}=\left\{\begin{pmatrix}0&q\\0&0\end{pmatrix}\right\}\,.
\end{align*}
The Levi subgroup $G_0$ is the Lorentz group times dilations. 
%

If one represents an element in the Grassmannian $\Gr(2,4)$ in terms of a $4\times 2$-matrix whose columns are the basis vectors of the 2-plane,  the big cell is characterized in terms of matrices with the minor $d_{12}\neq0$.  By a change of basis one can always bring such matrix to a standard form \be\begin{CD}\begin{pmatrix}a^1&b^1\\
a^2&b^2\\
a^3&b^3\\
a^4&b^4\end{pmatrix}@>>> \begin{pmatrix}\rid \\
A\end{pmatrix}\end{CD}\,,\label{minkowski}\ee and $A$ is an arbitrary $2\times 2$-matrix. The  action of the group on the big cell  is simply
\be A\longrightarrow RAL^{-1} + N,\qquad A\in M_2(\C)\,.\label{actionpoincare}\ee
A linear change of coordinates in terms of the Pauli matrices allows us to see the relation with the  standard coordinates in Minkowski space:
$$
\sigma_0=\begin{pmatrix} 1 & 0 \\  0  &  1 \end{pmatrix}, \quad
\sigma_1=\begin{pmatrix} 0 & 1 \\  1  &  0\end{pmatrix}, \quad
\sigma_2=\begin{pmatrix} 0 & -i \\ i   &  0 \end{pmatrix}, \quad
\sigma_3=\begin{pmatrix} 1 & 0 \\  0  &  -1 \end{pmatrix}\,,$$
with $A=\sum_{\mu=0}^4x^\mu\sigma_\mu$ and
$$\det A=(x^0)^2-(x^1)^2-(x^2)^2-(x^3)^2\,.$$
One can also check that the action of the Poincar\'{e} group times dilations in these coordinates is the expected one.

\hfill$\square$

\end{example}

Given a flag manifold on $\C^n$, the subgroup of $\rSL_n(\C)$ that stabilizes one point is an upper parabolic subgroup (that is, a subgroup of $\rSL_n(\C)$ whose Lie algebra is an upper parabolic subalgebra), so we have that a flag manifold can be written always a a quotient $\rSL_n(\C)/P_u$ (here $P_u$ stands for a generic upper parabolic subgroup). One can construct a generalized Pl\"{u}cker embedding for all the flags. Moreover, this construction generalizes for semisimple groups $G$ other than $\rSL_n(\C)$. The spaces $G/P_u$ are called {\it generalized flag varieties}. One has the following theorem:

\begin{theorem}\label{projectivetheorem}
Let $\fg$ be a complex semisimple Lie algebra, $\fp$ a parabolic subalgebra, $G$ a connected Lie group with Lie$(G)=\fg$ and $P$ a parabolic subgroup of $G$ with  Lie$(P)=\fp$. Then, the generalized flag manifold $G/P$ is a compact K\"{a}hler manifold and a projective algebraic variety. 

\hfill$\square$

\end{theorem}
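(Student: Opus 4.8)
The plan is to realize $G/P$ explicitly as a closed $G$-orbit inside a projective space, following the classical highest-weight (Borel--Weil) construction, and to read off compactness, the complex structure and the K\"ahler property from that embedding. First I would fix the analytic setup: $\fg$ complex semisimple forces $G$ to be a (linear) complex Lie group, and $P$ is a closed complex subgroup, so $G/P$ is a complex homogeneous manifold. Compactness then follows from the Iwasawa decomposition $G=KAN$ of a maximal compact subgroup $K$: since $AN\subset B\subset P$ for a Borel subgroup $B$, the group $K$ already acts transitively, so $G/B=K/(K\cap B)=K/T$ is compact, and $G/P$, being a quotient bundle of $G/B$, is compact as well.

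For the embedding I would use parabolic induction, as in Section~\ref{linebundles-sec}. Write the $|k|$-grading $\fg=\fg_{-k}\oplus\cdots\oplus\fg_k$ attached to $\fp$, so that the simple roots split into those lying in the Levi part $\fg_0$ and those lying in $\fp_+$. Choose a dominant integral weight $\lambda$ with $\langle\lambda,\alpha^\vee\rangle>0$ for every simple root $\alpha$ outside the Levi and $\langle\lambda,\alpha^\vee\rangle=0$ for every simple root inside it; this $\lambda$ is the differential of a character of $P$, the point being that characters of $P$ factor through $P/P_+\cong G_0$. Let $V_\lambda$ be the irreducible $G$-module of highest weight $\lambda$, with highest-weight line $[v_\lambda]\in\bP(V_\lambda)$. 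The stabilizer of $[v_\lambda]$ is the parabolic subgroup generated by $B$ together with the negative root subgroups $U_{-\alpha}$ with $\langle\lambda,\alpha^\vee\rangle=0$ (since then $f_\alpha v_\lambda=0$ in $V_\lambda$); by the choice of $\lambda$ this stabilizer is exactly $P$. Hence $g\mapsto[g\cdot v_\lambda]$ descends to an injective holomorphic immersion $\iota\colon G/P\hookrightarrow\bP(V_\lambda)$.

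It remains to upgrade $\iota$ to a closed algebraic embedding and to produce the metric. Since $G/P$ is compact and $\bP(V_\lambda)$ Hausdorff, $\iota$ is a homeomorphism onto its image and the latter is a compact complex submanifold; the orbit map is a morphism of complex algebraic varieties, so by Chevalley's theorem its image is constructible, and being closed it is Zariski-closed, so $G/P$ is a projective algebraic variety (alternatively one cuts out the image by the quadratic ``Pl\"ucker-type'' relations coming from the comultiplication, exactly as is done above for $\Gr_1$ and $\Gr_2$). For the K\"ahler structure, pull back the Fubini--Study form: $\iota^*\omega_{FS}$ is a closed positive $(1,1)$-form on $G/P$, i.e.\ a K\"ahler metric; averaging over $K$ yields a $K$-invariant representative, whose cohomology class coincides with the Kirillov--Kostant--Souriau form on the coadjoint orbit through $\lambda$.

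The step I expect to be the main obstacle is the construction of $\lambda$ and the proof that the stabilizer of the line $[v_\lambda]$ is \emph{exactly} $P$ and not some strictly larger parabolic: this requires the precise dictionary between subsets of simple roots, weights vanishing on the corresponding coroots, and standard parabolic subgroups, together with the fact that line-stabilizers inside an irreducible highest-weight module are forced to be standard parabolics. Once that is in place, compactness, the complex structure, algebraicity of the image and the K\"ahler property all follow formally from the embedding.
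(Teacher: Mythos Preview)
Your proposal is correct and is precisely the classical Borel--Weil/highest-weight argument for this result. The paper, however, gives no proof at all: it states the theorem as well known and refers the reader to \v{C}ap--Slov\'{a}k, \emph{Parabolic Geometries I}, p.~306, and to Borel's \emph{Linear Algebraic Groups}, Chapter~IV (the latter for the stronger statement that $G/P$ is projective if and only if $P$ is parabolic). So there is nothing to compare against in the paper itself; what you have written is essentially the standard proof one finds in those references, and your identification of the delicate step---that the stabilizer of the highest-weight line $[v_\lambda]$ is exactly $P$ and not a larger parabolic---is accurate.
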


We do not give here the proof of this well known result. The reader can consult, for example, Ref. \cite{cs}, page 306. A stronger result is in fact true: the quotient $G/P$, $P$ being a closed subgroup of $G$, is parabolic if and only if $G/P$ is projective (see Ref. \cite{bo}, Chapter IV).

\medskip

We turn now to the super case.  As mentioned in Section \ref{superflag}, Theorem \ref{projectivetheorem} does not have an extension to the super setting. Nevertheless,  the three superflags of interest for us, $\Gr_1$, $\Gr_2$ and $\F$ are projective, as shown explicitly with the super Pl\"{u}cker and super Segre embeddings.

As in the non super case, for each parabolic subalgebra of a superalgebra $\fg$ there is associated a $|k|$-grading of $\fg$ \cite{io}. We first analyze the  gradings corresponding to the three parabolic subalgebras.

\paragraph{\bf Gradings of $\fsl(4|1)$.}
Let us write in block form the super Lie algebra
$$\fg=\fsl(4|1)=\left \{\begin{pmatrix}
l&q&\nu\\
p&r&\alpha\\
\mu&\beta&s\\
\end{pmatrix}\;\; \Big|\;\; \tr\, l+\tr\, r=s\right\}\,,$$ where, as always, Latin letters denote blocks with even entries and Greek letters denote blocks with odd ones.

\begin{itemize}

\item For the parabolic subalgebra of the parabolic subgroup  $P_1$  in (\ref{parabolics}), associated to  the supergrassmannian $\Gr_1=
\rSL(4|1)/ P_1$:
$$
\fp_1 =\left\{
\begin{pmatrix}
l & q & \nu \\
0 & r & \alpha \\
0 & \beta & s
\end{pmatrix}  \right\}\,,
$$
we have the $|1|$-grading \begin{align*}\fg_{-1}=& \left
\{\begin{pmatrix}
0&0&0\\
p&0&0\\
\mu&0&0\\
\end{pmatrix}\right\},&
\fg_{0}=\left \{\begin{pmatrix}
l&0&0\\
0&r&\alpha\\
0&\beta&s\\
\end{pmatrix}\right\},&\quad
\fg_{+1}=&\left \{\begin{pmatrix}
0&q&\nu\\
0&0&0\\
0&0&0\\
\end{pmatrix}\right\}\,,&\quad
\end{align*}
with $\fp_1=\fg_0\oplus\fg_{+1}$.
\item For the parabolic subalgebra of the parabolic subgroup  $P_2$  in (\ref{parabolics}), associated to  the supergrassmannian $\Gr_2=
\rSL(4|1)/ P_2$:
$$
\fp_2 =\left\{
\begin{pmatrix}
l & q & \nu \\
0 & r & 0 \\
\mu & \beta & s
\end{pmatrix}  \right\}\,,
$$
we have the $|1|$-grading \begin{align*}\fg_{-1}=& \left
\{\begin{pmatrix}
0&0&0\\
p&0&\alpha\\
0&0&0\\
\end{pmatrix}\right\},&
\fg_{0}=\left \{\begin{pmatrix}
l&0&\nu\\
0&r&0\\
\mu&0&s\\
\end{pmatrix}\right\},&\quad
\fg_{+1}=&\left \{\begin{pmatrix}
0&q&0\\
0&0&0\\
0&\beta&0\\
\end{pmatrix}\right\}\,.&\quad
\end{align*}
with $\fp_2=\fg_0\oplus\fg_{+1}$.
\item
For the superalgebra the supergroup $P_u=P_1\cap P_2$ (\ref{interparabolic}, \ref{parabolics}) associated to the superflag $\F$:
$$\fp_u=\left \{\begin{pmatrix}
l&q&\nu\\
0&r&0\\
0&\beta&s\\
\end{pmatrix}\right\}$$
we have the $|2|$-grading
\begin{align*}\fg_{-2}=& \left
\{\begin{pmatrix}
0&0&0\\
p&0&0\\
0&0&0\\
\end{pmatrix}\right\},&\quad
\fg_{-1}=&\left \{\begin{pmatrix}
0&0&0\\
0&0&\alpha\\
\mu&0&0\\
\end{pmatrix}\right\},&
\fg_{0}=\left \{\begin{pmatrix}
l&0&0\\
0&r&0\\
0&0&s\\
\end{pmatrix}\right\},\\
\fg_{+2}=&\left \{\begin{pmatrix}
0&q&0\\
0&0&0\\
0&0&0\\
\end{pmatrix}\right\},&\quad
\fg_{+1}=&\left \{\begin{pmatrix}
0&0&\nu\\
0&0&0\\
0&\beta&0\\
\end{pmatrix}\right\}\,,&\quad
\end{align*}
with $\fp_u=\fg_0\oplus\fg_{+1}\oplus \fg_{+2}$.

\end{itemize}
 Notice that in the cases of $P_1$ and $P_2$ the Levi subgroup is not purely even, while for $P_1\cap P_2$ it is so.

 The decompositions written above for $\fp_1$, $\fp_2$ and $\fp$ are semidirect products. Generically
 $$\fp=\fg_0\oplus \fp_+,\qquad \fp_+=\bigoplus_{k>0}\fg_k$$ which are also called Levi decompositions, as in the non super case.

 \hfill$\blacksquare$

\begin{definition}
The \textit{complex conformal supergroup} is the
complex special linear supergroup $\rSL(4|1)$. The \textit{complex
Poincar\'{e} supergroup times dilations} is the  subgroup of $\rSL(4|1)$ given,
in the functor of points notation, by
$$
\mathrm{R}(A)=\left\{\begin{pmatrix}L&0&0\\NL&R&R\chi\\d\varphi&0&d\end{pmatrix}
\right\} \subset \rSL(4|1)(A).
$$

\hfill$\square$
\end{definition}

We can denote $P_u=P_1\cap P_2$, so  $\F= \rSL(4|1)/P_u$. $\F$ is an  homogeneous superspace \cite{fl,ccf} of $\rSL(4|1)$ that we can call the  {\it complex conformal superspace}.
As in the non super case, a point in $\F$ can be given in terms of the basis vectors of the corresponding subspaces of $\C^{4|1}$ ( the {\it  twistor superspace}). In the big cell the minor $d_{12}\neq 0$, so we can bring both basis to standard forms
$$\begin{CD}\left(\begin{pmatrix}\rid\\A\\\alpha\end{pmatrix}, \begin{pmatrix}\rid&0
\\B&\beta\\
0&1\end{pmatrix}\right)\end{CD},\qquad\hbox{with }\, B=A-\beta\alpha\,.$$ The last relation expresses the fact that the first space is a subspace of the second one in the big cell, and it is equivalent to the relations (\ref{incidence3}) locally, once the condition $d_{12}\neq 0$ is imposed.

 The action of the Poincar\'{e} supergroup times dilations is
$$\begin{CD}A@>>>R(A+\chi\alpha)L^{-1}+ N,\\
\alpha@>>>d(\alpha+\varphi)L^{-1},\\
\beta@>>>d^{-1}R(\beta+\chi)\,.\end{CD}$$

\subsection{The bundle of (super)conformal densities}

Embeddings of a variety into projective spaces are in one to one correspondence with a certain class of line bundles called {\it very ample line bundles}. These are bundles that have enough global sections to be used as projective coordinates of a   projective embedding (see for example Ref. \cite{skkt}).

 Theorem \ref{projectivetheorem} does not have an extension to the super setting. Nevertheless, embeddings into projective superspace are also determined by {\it very ample line superbundles}. The goal of the this section will be to describe explicitly a line superbundle associated to the projective embedding of the superflag $\F$ explained in detail in Section \ref{superflag}.

 \medskip

We consider first the non super case. The reader can resort to Ref. \cite{cs} for more details.

As before, let $G$ be an algebraic, semisimple Lie group and $H$ a closed subgroup with Lie algebras $\fg$ and $\fh$ respectively. Let $\pi:G\rightarrow G/H$ be the canonical projection. The sheaf of regular functions   $\mathcal{R}_{G/H}$ can be constructed in terms of $\cO_G$  and an invariance condition. Let  $U\subset_{\mathrm{open}} G/H$. Then  $\pi^{-1}(U)$ is invariant under the action of $P$. One can define
\be \mathcal{R}_{G/P}(U) =\{f\in \cO_G(\pi^{-1}(U))\; |\; f(gp)=f(g)\quad \forall g\in \pi^{-1}(U),\; p\in P\}\,.\label{homogeneoussheaf}\ee

If $H$ is a parabolic subgroup, $H=P$, then $G/P$ is a projective variety and the set of global regular functions $\mathcal{R}(G/P)=k$. In the following, we will describe a series of line bundles whose global sections reconstruct, degree by degree, the $\Z$-graded algebra $S=\oplus_{n=0}^\infty S^n $ that gives the scheme  $\uproj (S)$ of Example \ref{constructionproj}, associated to the projective variety $G/P$.

We consider now the Levi subgroup $G_0$ of $P$. Any representation of  $G_0$ can be extended to the whole $P$, by virtue of the Levi decomposition, assuming that it is trivial on $P_+$.
In particular, the group $G_0$ acts on $\fp_-:=\fg_{-k}\oplus \cdots \oplus \fg_{-1}$ with the adjoint representation. We shall use this representation to obtain  a representation of $G$ by the method of {\it parabolic induction}, which we briefly describe.

Let $\bV$ be  a $P$-module and consider the associated vector bundle
$G\times_P\bV$ over $G/P$. The space of global  sections of this bundle, $\Gamma(G\times_P\bV)$ is given by
$$\Gamma(G\times_P\bV)=\{f\otimes v\in  \cO(G)\otimes \bV\; |\; f(gp)\otimes v=f(g)\otimes p^{-1} v\quad \forall g\in G,\; p\in P\}\,.$$
On this space there is  a natural action of $G$.
A  class of $P$-modules can be obtained starting with a representation of $G_0$ and extending it trivially to the full $P$.
If $\bV\cong \C$, then the representation  is a character of $P$ and the bundle is a line bundle.
In particular, one can take the determinant (or a power of it) of the adjoint action of $G_0$ on $\fp_-$ :
$$
\begin{CD}
G_0@>\chi>>\C\\
g@>>> |\det(\Ad_-(g))|^{-\frac 1 d}\,.
\end{CD}
$$
In the above formula, $d$ is the dimension of $\fp_-$. We remark that the form of the exponent is purely conventional and carefully tuned for the purpose of this paper; when dealing with the Weyl structure it is natural to write it differently\footnote{We thank Andreas Cap and Rod Gover for this observation.}. We will denote the induced bundle of the  character $\chi$ as $\mathcal{L}$. Global sections of $\cL$ are $P$-equivariant functions:
\be
\Gamma(\mathcal{L})
=\{f: G \lra \mathbb{C}\; | \; f(gh)=\chi^{-1}(h)f(g)\}\,.\label{sectionsbundle}
\ee

We also denote  as $\mathcal{L}^n$ the bundle obtained by using the character $\chi^n$ (so $\cL=\cL^1$) and by $\Gamma(\cL^n)$ the set of its global sections. It is then natural to construct the graded algebra
\be \cL^*:=\bigoplus_{n \ge 0}\Gamma(\cL^n)\,,\label{gradedbundlealgebra}\ee which  is generated as an algebra in degree one.





We now come back to the particular example where $G=\rSL_4(\C)$ and $P=P_u$ . In this case, an easy computation shows that $\chi(g)=(\det L)^{-1}$, and we refer to the bundle $\mathcal{L}^n,$ constructed out of this character, as the {\it bundle of conformal densities of weight $n$}. We will also use Notation \ref{notationindices}, so $\det L= d_{12}(=d_{12}^{12})$, that is, the determinant of the upper left $2\times 2$ matrix in $\rSL_4(\C)$. This will be useful mainly to connect with the notation used in the quantum case.

\begin{remark} These bundles have an interesting interpretation that we are going to discuss. We start by noticing that
on the conformal space $\Gr(2,4)=\rSL_4(\C)/P_u$ there is no invariant Riemannian metric, but there is a more generic structure called a {\it conformal metric} (see for example Ref. \cite{va}, page 100 or Ref. \cite{fl}, page 101). At each point of the manifold we associate a set of  non singular, quadratic, symmetric, holomorphic (or smooth, real analytic,...) forms on its tangent space which are non zero scalar multiples of each other. We assume that the choice is such that on a neighborhood there is an holomorphic metric whose quadratic form belongs to that set, but such choice is perhaps not possible globally. For $\Gr(2,4)$, it is not difficult to prove that the Minkowski metric on the big cell defines an invariant conformal metric in that neighborhood and that changing from one open set to another amounts to multiply the Minkowski metric by a non zero factor. In the notation of (\ref{minkowski}), where the Minkowski space is $ M_2(\C)$, the quadratic form becomes
\begin{align*} q(A)&=\det A=ad-bc=(x^0)^2-(x^1)^2-(x^2)^2-(x^3)^2, \\ \\ A& =\begin{pmatrix}a&b\\c&d\end{pmatrix}=\begin{pmatrix}x^0+x^3&x^1-\ri x^2\\x^1+\ri x^2&x^0-x^3\end{pmatrix}\,.\end{align*}
A conformal metric defines a principal bundle  $\mathcal{Q}\subset T^*M\otimes T^*M $ with fiber $\C^\times=\C-\{0\}$ and local sections of this bundle are in one to one correspondence with a local choice of metric representing the conformal structure. Since the quadratic form in the big cell is just $\det A$, under the action of $G_0$ it transforms as
$$\begin{CD}\det A@>>>\det L^2\, \det A\,.\end{CD}$$ Associated to this principal bundle by the actions on $\C$
$$  \begin{CD}\C^\times\times \C@>>>\C\\
(\Omega^2, z)@>>> \Omega^n z\end{CD}$$
we reencounter all the bundles $\cL^n$.

In the conformal case, a section of $\cL$ is known as a {\it  conformal scale} since it amounts to give a local choice of units of length. This is strictly related to the the notion of {\it dilaton field} that is the gauge field of the dilations.

\hfill$\square$

\end{remark}

\begin{proposition}\label{amplebundle}
 The bundle of conformal densities of weight one  over \\$\rSL(4,\C)/P_u$, that is, the line bundle $$\cL=\rSL(4,\mathbb{C})\times_{P_u}\mathbb{C}=\cO(\rSL_4(\C)/P_u)_1$$ defined by the character $d_{12}$ of $P_u$   is very ample. \end{proposition}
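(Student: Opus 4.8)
The plan is to identify the graded algebra $\cL^* = \bigoplus_{n\geq 0}\Gamma(\cL^n)$ explicitly and show it coincides with the Pl\"ucker coordinate ring $\C[\Gr(2,4)]$, which we already know realizes a projective embedding. By \eqref{sectionsbundle}, $\Gamma(\cL^n)$ consists of functions $f\in\cO(\rSL_4(\C))$ satisfying $f(gh)=\chi^n(h)^{-1}f(g)=(\det L_h)^{n}f(g)$ for all $h\in P_u$, where $L_h$ is the upper-left $2\times 2$ block of $h$. First I would observe that the six Pl\"ucker coordinates $d_{ij}$ ($1\le i<j\le 4$), viewed as functions on $\rSL_4(\C)$ given by the $2\times 2$ minors of the first two columns, lie in $\Gamma(\cL)$: right multiplication by $h\in P_u$ replaces the first two columns by $L_h$-combinations of themselves (since the lower-left block of $h$ vanishes), so each $d_{ij}$ scales by $\det L_h = d_{12}(h)$. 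This gives a $G$-equivariant map $\C^6\to\Gamma(\cL)$, hence a morphism $\Gr(2,4)=\rSL_4(\C)/P_u\to \bP^5$, which is precisely the classical Pl\"ucker embedding and is known to be a closed embedding.

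The substantive step is to show this already exhibits very ampleness, i.e. that $\cL$ is the pullback of $\cO_{\bP^5}(1)$ under the Pl\"ucker embedding, equivalently that $\Gamma(\cL)$ is spanned by the $d_{ij}$ and that $\cL^*$ is generated in degree one. For the first point, I would use the Borel--Weil theorem: $\cL$ is the line bundle associated to the character $d_{12}$, which is (a multiple of) the fundamental weight $\omega_2$ dominant for the parabolic $P_u$, so $\Gamma(\cL)$ is the irreducible $\rSL_4(\C)$-module $\wedge^2\C^4$ of dimension $6$ — exactly the span of the $d_{ij}$. For generation in degree one, one invokes that for $G/P$ the multiplication map $\Sym^n\Gamma(\cL)\to\Gamma(\cL^n)$ is surjective (again Borel--Weil plus the fact that $V(\omega_2)^{\otimes n}\to V(n\omega_2)$ is surjective by Kostant's theorem on the Cartan component), so $\cL^*$ is the homogeneous coordinate ring of a projectively normal embedding. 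Since the Pl\"ucker embedding is a closed immersion and $\cL$ pulls back $\cO(1)$, $\cL$ is very ample by definition.

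Alternatively, if one prefers to avoid quoting Borel--Weil, I would argue more directly: the sections $d_{ij}$ have no common zero on $\Gr(2,4)$ (a $2$-plane cannot have all $2\times 2$ minors vanish), so they define a morphism $\varphi:\Gr(2,4)\to\bP^5$ with $\varphi^*\cO(1)=\cL$; one then checks $\varphi$ is injective on points (a $2$-plane is determined by its Pl\"ucker coordinates) and an immersion on tangent spaces (a local computation in the big cell, where $d_{12}\neq 0$ trivializes $\cL$ and $d_{13},d_{14},d_{23},d_{24}$ restrict to the affine coordinates on $M_2(\C)\cong\fg_{-1}$), whence $\varphi$ is a closed embedding and $\cL$ very ample.

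The main obstacle I anticipate is the degree-one generation / projective normality claim — establishing that $\Gamma(\cL^n)$ is exactly the degree-$n$ part $\cO(\rSL_4(\C)/P_u)_n$ of the Pl\"ucker ring and is spanned by degree-$n$ monomials in the $d_{ij}$. The transversality (immersion) check is a routine coordinate computation in the big cell and is not the hard part; the representation-theoretic input guaranteeing that no higher sections appear beyond the expected $\wedge^2\C^4$ in degree one, and that multiplication stays surjective in all degrees, is where the real content lies. In the paper's setting this is presumably handled by citing the standard theory of $G/P$ and parabolic induction referenced via \cite{cs}, so in practice the proof reduces to: (i) verify $d_{12}$ is the right character and the $d_{ij}$ are genuine sections of $\cL$; (ii) invoke Borel--Weil to identify $\Gamma(\cL)$ and the surjectivity of $\Sym^\bullet\Gamma(\cL)\to\cL^*$; (iii) conclude very ampleness from the fact that the resulting morphism is the classical Pl\"ucker embedding.
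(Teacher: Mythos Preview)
Your proposal is correct, and your second (``alternatively'') approach is essentially what the paper does, only the paper is far terser: it verifies by a block-matrix computation that the $d_{ij}$ are $P_u$-equivariant (i.e.\ genuine sections of $\cL$), observes they have no common zero and span the fibre at each point, and then simply identifies the resulting morphism with the classical Pl\"ucker embedding of $\Gr(2,4)$ into $\bP^5$, deferring the fact that this is a closed embedding to standard references. There is no Borel--Weil, no projective normality, and no tangent-space check spelled out.

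The main difference is that you over-identify the obstacle. Very ampleness does not require knowing that the $d_{ij}$ span \emph{all} of $\Gamma(\cL)$, nor that $\cL^*$ is generated in degree one: it suffices to exhibit \emph{some} global sections with no common zero whose associated morphism to projective space is a closed embedding (then $\cL$ is the pullback of $\cO(1)$ under that embedding, hence very ample by definition). So your Borel--Weil route and the surjectivity of $\Sym^n\Gamma(\cL)\to\Gamma(\cL^n)$ establish more than is asked --- they give projective normality of the Pl\"ucker embedding --- but are not needed for the proposition as stated. What you flag as ``the substantive step'' is in fact bypassed entirely by the paper; the real content is just recognizing the classical Pl\"ucker map and citing that it is an embedding.
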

\begin{proof}
 First we note that the determinants $\{d_{ij}(=d_{ij}^{12})\}$ are $P_u$-equivariant functions on $\rSL(4,\mathbb{C})$ as required in (\ref{sectionsbundle}):
$$ gp=\begin{pmatrix}A&B\\
C&D\end{pmatrix}\begin{pmatrix}L&Q\\
0&R\end{pmatrix}=\begin{pmatrix}AL&AQ+BR\\
CL&CQ+DR\end{pmatrix},\qquad g\in \rSL(4,\C),\; p\in P_u\,.$$
  Moreover,  they cannot be all zero at the same point and, at each point, they span the fiber of the line bundle. They are in fact the standard Pl\"{u}cker coordinates of the embedding of $G(2, 4)$ into $\bP^5$ (see for example Refs. \cite{va, fl}).
 \end{proof}

\medskip

We are now ready to generalize these structures to the super setting. Similarly to the classical case, we can establish a correspondence between certain  super line bundles on a supervariety $X$ and the embeddings of $X$ into projective superspaces. Let us start with a basic definition (see for example Ref \cite{ccf}, Section 10.5).

\begin{definition}
Let $S=(|S|, \cO_S)$ be a superscheme. A {\sl super vector bundle $\mathcal{V}$ of rank $p|q$ over $S$} is a locally free sheaf of $\cO_S$-modules of rank $p|q$. That is, for each $x \in |S|$ there exists an open set $U_x\subset |S|$
such that $\mathcal{V}(U) \cong \cO_S(U)^{p|q}:= \cO_S(U) \otimes k^{p|q}$.

The {\it  stalk at a point $x\in |S|$, $\mathcal{V}_x$} , is
the  $\cO_{S,x}$-module $\cV_x:=\cO_{S,x}\otimes k^{p|q}$,

The fiber over the point $x$ is the super vector space $\mathcal{V}_x/(m_x \mathcal{V}_x)\cong k^{p|q}$,  with $m_x$ being the maximal ideal of $\cO_{S,x}$.

 A {\sl super line bundle $\mathcal{V}$ on $S$}  is a rank $1|0$ super vector bundle over $S$.

\hfill$\square$
\end{definition}

Super line bundles can also be very ample, and they are related with  projective embeddings. It is instructive to understand the construction of the super line bundle in the simplest case, the projective superspace.

\begin{example}{\sl Very ample super line bundle on the projective superspace.}\label{veryampleps} Going back to Example \ref{constructionproj}, we consider $\uproj(A)$ with $$A=k[x_0, x_1, \dots, x_m; \xi_1,\dots ,\xi_n]\,.$$ One just takes the  graded $A$-module $A(1)$, defined by shifting he degree $A(1)^d=A^{d+1}$ and proceeds, by standard localization techniques, to construct the sheaf of modules over the scheme $\uproj(A)$. This is known as the {\it twisting sheaf of Serre} (see for example Ref. \cite{ha}). A basis of the space of global sections is then the set of   homogeneous coordinates $\{x_0, x_1, \dots, x_m; \xi_1,\dots ,\xi_n\}$.

\hfill $\square$

\end{example}

Let $G$ be a  super Lie group (see Example \ref{algebraicsupergroup}), in the algebraic or differential setting. Let $V$ be a super vector space. A representation of $G$ on $V$ is a morphism of super Lie groups\footnote{There are equivalent ways of seeing a representation of a super Lie group: Especially important is the construction in terms of super Harish-Chandra modules \cite{cfv}, but this definition will be enough for our purposes.}
$$\begin{CD}G@>\chi>> \rGL(V)\,.\end{CD}$$
 We will denote as $\chi_c$ the {\it contragradient representation}  on the dual space $V^*$.

As in the non super case, one can construct representations of $G$ on the space of sections of a certain super vector bundle, induced from a finite dimensional representation of a subgroup. Let $P\subset G$ be a closed subgroup\footnote{Later on, $P$ will be a parabolic subgroup.} and let $G_0$ and $P_0$ be the reduced groups of  $G$ and $P$. We assume that $G_0$ is connected.

We first describe the quotient superspace $G/P$ in terms of the sheaf of regular functions $G_0/P_0$. Let  $\pi:G\rightarrow G/P$ and $\pi_0:G_0\rightarrow G_0/P_0$ be the canonical projections. Let $\mathcal{R}_G$ be a sheaf obtained similarly to (\ref{homogeneoussheaf}): if $\mu:G\times G\rightarrow  G$ is the group multiplication (a map of sheafs), we can consider the composition
$$\mu_{G,P}:G\times P\xhookrightarrow{\id\times i}G\times G\xrightarrow{\mu}G$$ and, for  an open set   $V\subset G\times P$, the corresponding map of superalgebras $$\mu^*_{G,P}:\cO_G \left(\mu^{-1}(V)\right)\rightarrow \cO_{G\times P}(V)\,.$$
The quotient superspace is the topological space $G_0/P_0$ together with the sheaf
\be \mathcal{R}_{G/P}(U):=\{f\in \cO_G \left(\pi_0^{-1}(U)\right)\; | \; \mu_{G,P}^* f=f \}\,,\label{homspace}\ee where $U$ is an open set $U\subset G_0/P_0$.

Remember that for a supergroup, the functor of points is a group valued functor.
One can define a functor acting on superschemes  as $T\rightarrow G(T)/P(T)$. In general, it is not representable, but there exists always the {\it sheafification} of such functor which is so. The  functor is then the functor of points of the superscheme defined in (\ref{homspace}). The proof of this fact and further details can be found in  Section 9.3 of Ref. \cite{ccf}.

We now consider the sheaf obtained by tensoring  with the super vector space $\bV$:
$$\cA(U):=\cO_G \left(\pi_0^{-1}(U)\right)\otimes V\,.$$ If $\chi$ is a representation of $P$ on $V$, we  select the appropriate equivariant sections:
\be\cA_{\mathrm{inv}}(U)=\{\;f\in \cA(U)\; | \;\; (\mu_{G,P}^*\otimes 1)f=(1\otimes \chi^*_c) f\;\}\,.\label{equivariance}\ee  We have the following proposition \cite{cfv}:

\begin{proposition}
Let $G$ and $P$ be as above. Then the sheaf $\cA_{\mathrm{inv}}$ is a super vector bundle over $G/P$ with fiber $\bV$.

\hfill$\square$
\end{proposition}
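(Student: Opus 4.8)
The plan is to verify the defining property of a super vector bundle locally on $G/P$, reducing to the trivialization of the quotient and the triviality of the structure sheaf over a suitably small open set. First I would recall that by the construction in (\ref{homspace}) the quotient superspace $G/P$ has underlying topological space $G_0/P_0$, and that the canonical projection $\pi_0:G_0\to G_0/P_0$ admits local sections: since $G_0$ is connected and $P_0$ is a closed (in the parabolic applications, parabolic) subgroup, the reduced quotient $\pi_0$ is a locally trivial principal $P_0$-bundle in the algebraic (resp. smooth) category. Hence for each point of $G_0/P_0$ there is an open neighborhood $U$ and a local section $s:U\to G_0$ of $\pi_0$, equivalently a trivialization $\pi_0^{-1}(U)\cong U\times P_0$ as spaces over $U$.

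Next I would use this trivialization to compute $\cA_{\mathrm{inv}}(U)$ explicitly. An equivariant section $f\in \cA_{\mathrm{inv}}(U)\subset \cO_G(\pi_0^{-1}(U))\otimes V$ satisfies $(\mu_{G,P}^*\otimes 1)f=(1\otimes\chi_c^*)f$, which by the same argument as for (\ref{sectionsbundle}) in the classical case says that $f$ is determined by its restriction along the section $s$: the map $f\mapsto (s^*\otimes 1)f$ is an isomorphism $\cA_{\mathrm{inv}}(U)\xrightarrow{\sim}\cO_{G/P}(U)\otimes V$, with inverse sending $h\in \cO_{G/P}(U)\otimes V$ to the section whose value at $g\in\pi_0^{-1}(U)$ is $(\pi_0^*h)(g)$ acted on by $\chi$ of the ``$P$-part'' of $g$ relative to $s$. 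Concretely, writing $g=s(\pi_0(g))\cdot p$ with $p\in P$, one sets $f(g)=\chi^*_c(p)^{-1}\bigl((\pi_0^*h)(g)\bigr)$; one checks this is well defined, lands in $\cA(U)$, and is equivariant, using that $\chi$ is a homomorphism and $\mu_{G,P}^*$ encodes right translation by $P$. Since $V\cong k^{p|q}$ as a super vector space, this gives $\cA_{\mathrm{inv}}(U)\cong \cO_{G/P}(U)^{p|q}$, i.e. $\cA_{\mathrm{inv}}$ is locally free of rank $p|q$, and restricting $V$ to a one–dimensional even representation (a character) gives rank $1|0$, the super line bundle case; the fiber at a point is $\cA_{\mathrm{inv},x}/m_x\cA_{\mathrm{inv},x}\cong V$ by construction.

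The main obstacle is making the ``$P$-part'' argument precise \emph{in the super category}, where one cannot simply evaluate on points. The clean way is to phrase everything via the functor of points and Theorem \ref{theorem restriction}: it suffices to describe $\cA_{\mathrm{inv}}$ as a sheaf of $\cO_{G/P}$-modules after pulling back along $\uspec(B)\to G/P$ for affine superschemes, where the local section $s$ of $\pi_0$ lifts (over a small enough affine $U$) to a genuine morphism $\tilde s:U\to G$ of superschemes splitting $\pi$ — this uses that, over an affine open trivializing the reduced bundle, the obstruction to lifting lives in the nilpotent part and vanishes because $P\to G/P$ has a local section of structure sheaves by the explicit description (\ref{homspace}); alternatively one invokes Section 9.3 of Ref.~\cite{ccf} where local triviality of $G\to G/P$ is established directly. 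Granting $\tilde s$, the isomorphism $\mu_{G,P}$-equivariance $\Leftrightarrow$ ``determined by restriction to $\tilde s$'' is the super analogue of the classical computation already carried out for (\ref{sectionsbundle}), and the verification that the transition functions between two such trivializations are even invertible elements of $\cO_{G/P}$ (hence $\cA_{\mathrm{inv}}$ is genuinely locally free, not merely locally split) follows because they are given by $\chi$ evaluated on the $P$-valued transition cocycle of the principal bundle $G\to G/P$, and $\chi$ is a morphism of supergroups into $\rGL(V)$. This completes the proof; the only genuinely delicate point is the local triviality of $G\to G/P$ in the super setting, which I would cite rather than reprove.
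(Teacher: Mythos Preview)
The paper does not actually prove this proposition: it is stated with a citation to Ref.~\cite{cfv} and closed with a $\square$, so there is no in-paper argument to compare against. Your sketch is the standard associated-bundle argument and is essentially what one would expect to find in the cited reference; you correctly isolate the only nontrivial step, namely the local triviality of $G\to G/P$ as a principal $P$-bundle in the super category, and you rightly propose to cite it (Section~9.3 of Ref.~\cite{ccf}) rather than redo it. One small caution: your description of the inverse map via ``$g=s(\pi_0(g))\cdot p$'' is written pointwise, which is exactly the pitfall you flag a paragraph later; it would be cleaner to state the trivialization directly as an isomorphism of superschemes $\pi^{-1}(U)\cong U\times P$ over $U$ (coming from $\tilde s$) and then read off $\cA_{\mathrm{inv}}(U)\cong \mathcal{R}_{G/P}(U)\otimes V$ from the Hopf-algebraic equivariance condition (\ref{equivariance}), avoiding the point-evaluation language altogether.
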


It is useful  to write this in terms of the functor of points. Let $A$ be a  superalgebra and let $g$ be an $A$-point of $G$ and $p$ an $A$-point of $P$, so they are morphisms from $\cO(G)$ to $A$ (see Example \ref{algebraicsupergroup}).
We identify $V$ with the affine space  $k^{p|q}$, so let $v$ be an $A$-point of $V$ (see Example \ref{fop affine superspace}). We consider the set of global sections $\cO(G)$.
Let $f$ be an element of $\cO(G)\otimes V$. It is  an invariant element if
\be (gh\otimes v)(f)=(g \otimes\chi(h)^{-1}v)f\,. \label{globalinvariance}\ee

We now consider  $G=\rSL(4|1)$ and the parabolic subgroups associated to the super Grasmannians $\Gr_1, \Gr_2$ and the superflag $\F$.
Then,  $d_{12}$ is a character of both, $P_1$ and $P_u=P_1\cap P_2$. As we are going to see, it is only for $P_1$ that the induced super line bundle is very ample. So, in complete analogy with the classical case, we  construct over $\Gr_1$ the associated bundles to $\rSL(4|1)$ by using  the character $\chi(g)=d_{12}$ of $P_1$; we name it the {\it bundle of antichiral superconformal densities} . It is an interesting object due to the following
\begin{proposition}\label{vaslb}
 The bundle of antichiral superconformal densities of weight one  over $\Gr_1$, that is, the line bundle $$\rSL(4|1)\times_{P_1} \mathbb{C}$$ defined by the character $d_{12}$ of $P_1$   is very ample. \end{proposition}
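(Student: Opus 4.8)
The plan is to mirror the classical argument of Proposition \ref{amplebundle} as closely as possible, now keeping track of the odd coordinates. Recall from Section \ref{pluckergrass} and Remark \ref{remarksubring} that $\C[\Gr_1]$ sits inside $\C[\rSL(4|1)]$ as the subring generated by the $2\times 2$ determinants $d_{ij}$, $d_{55}$ and the odd determinants $\dd_{i5}$ built from the first two columns of the generic matrix $g$. First I would check that each of $d_{ij}$, $d_{55}$, $\dd_{i5}$ is a $P_1$-equivariant function on $\rSL(4|1)$ with respect to the character $\chi(g)=d_{12}$, i.e. that it lies in $\Gamma(\cL)$ as in (\ref{sectionsbundle})/(\ref{globalinvariance}). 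Writing $g\,p$ with $g\in\rSL(4|1)(A)$ and $p\in P_1(A)$ in the block form of (\ref{parabolics}), the right multiplication replaces the first two columns of $g$ by an $A$-linear combination of themselves governed by the upper-left $2\times 2$ block $\begin{pmatrix}g_{11}&g_{12}\\ g_{21}&g_{22}\end{pmatrix}$ of $p$ (the zeros in the lower-left of $P_1$ are exactly what guarantees this). Hence each degree-one generator picks up the common scalar factor $\det\begin{pmatrix}g_{11}&g_{12}\\ g_{21}&g_{22}\end{pmatrix}=d_{12}(p)=\chi(p)$, which is the equivariance condition. For the odd generators $\dd_{i5}$ one must be slightly careful with signs when expanding the determinant with an odd column, but the even-rules bookkeeping of Appendix \ref{embedding-ap} makes this routine.

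Next I would invoke the description of $\Gr_1$ via parabolic induction explained around (\ref{sectionsbundle})--(\ref{gradedbundlealgebra}): the graded algebra $\cL^*=\bigoplus_{n\ge 0}\Gamma(\cL^n)$ is the homogeneous coordinate ring of $\Gr_1$ for the embedding induced by $\cL$, and by the results recalled in Section \ref{pluckergrass} its degree-one piece $\Gamma(\cL)$ is spanned precisely by $\{d_{ij},d_{55},\dd_{i5}\}$ — these are the super Plücker coordinates, and $\C[\Gr_1]$ (\ref{embeddingring}) is generated in degree one. So the linear system $|\cL|$ realizes exactly the super Plücker embedding of $\Gr_1$ into $\bP(E)\cong\bP^{6|4}$ described in Section \ref{pluckergrass}, which was already shown there to be a (closed) embedding. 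A line bundle whose complete linear system gives an embedding is by definition very ample, so this establishes the claim.

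The one point that needs genuine care — and which I expect to be the main obstacle — is the passage from the classical criterion ``the sections have no common zero and separate points/tangent vectors'' to its super analogue: in the super setting ampleness is most cleanly checked functorially, by verifying that the natural transformation $\Gr_1\to\bP(E)$ of Section \ref{pluckergrass} is a monomorphism of functors with the right behaviour on local superalgebras (so that it is a closed immersion of superschemes), rather than by a pointwise argument, since the reduced points do not see the odd directions (cf. the remark after Example \ref{fop affine superspace}). Concretely, over a local superalgebra $A$ the submodules in $\Gr_1(A)$ are free, the $d_{ij}$ cannot all vanish (they are the minors of a rank $2|0$ matrix), and the decomposability/super Plücker relations (\ref{superplucker}) together with the inverse construction $[a_1\wedge a_2]\mapsto\rspan\{a_1,a_2\}$ show the map is injective with inverse, hence an isomorphism onto the super Klein quadric; this is exactly the content already cited from Refs. \cite{va,cfl}. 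Assembling these pieces — equivariance of the generators, identification of $\Gamma(\cL)$ with the super Plücker coordinates, and the fact that those coordinates cut out $\Gr_1$ as a closed subscheme of $\bP(E)$ — completes the proof that $\cL=\rSL(4|1)\times_{P_1}\C$ is very ample.
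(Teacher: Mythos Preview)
Your proposal is correct and follows essentially the same approach as the paper: check that $\{d_{ij},d_{55},\dd_{i5}\}$ are $P_1$-equivariant sections via the functor-of-points condition (\ref{globalinvariance}), note they cannot all vanish, and identify the resulting linear system with the super Pl\"ucker embedding already established in Section \ref{pluckergrass}. The paper's own proof is a two-line sketch pointing back to Proposition \ref{amplebundle}; you have simply unpacked the same argument with more care, particularly regarding the functorial (rather than pointwise) verification needed in the super setting.
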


\begin{proof}
It is enough to observe again that $\{d_{ij},d_{55},\delta_{i5}\}$ are $P_1$-equivariant sections on $\rSL(4|1)$ and they can not all vanish at the same time. In terms of the functor of points, this amounts to a simple check of the condition (\ref{globalinvariance}),  like in Proposition \ref{amplebundle}.
\end{proof}

We conclude this section with the following natural remark:
\begin{remark}  The very ample line bundle associated to the Pl\"{u}cker embedding of $\Gr_2$ can be constructed out of the character $d^*_{12}$; we call it the {\it bundle of chiral superconformal densities}, and it is obviously very ample too. It is a natural task then to construct the very ample line bundle associated to the Segre embedding;  we postpone this discussion to the next section where we will introduce the key concepts of {\it  classical} and {\it quantum sections}, and we will use the intuition arising from the bundles of chiral and antichiral superconformal densities to characterize the coordinate ring of the superflag. The full answer is in Example \ref{sft}.

\hfill$\square$
\end{remark}

\section{The quantum Grassmannians $\Gr_{1,q}$ and $\Gr_{2,q}$}\label{qgrass-sec}
We start with the
definition of the  quantum matrix superalgebra due to Manin \cite{ma2}.

We will denote $\C_q: =\C[q,q^{-1}]$.

\begin{definition}\label{qms}
The {\it quantum matrix superalgebra}  is given by
$$
M_q(m|n)=_{\mathrm{def}}\C_q\langle a_{ij}\rangle /I_M,\qquad i,j=1,\dots ,n\,,
$$
where $\C_q\langle a_{ij}\rangle$ denotes the free algebra over $\C_q=\C[q,q^{-1}]$
generated by the homogeneous variables $a_{ij}$
and the ideal $I_M$ is generated by the relations \cite{ma2}:
\begin{align*}
&a_{ij}a_{il}=(-1)^{\pi(a_{ij})\pi(a_{il})}
q^{(-1)^{p(i)+1}}a_{il}a_{ij}, \quad &&j < l \\ \\
&a_{ij}a_{kj}=(-1)^{\pi(a_{ij})\pi(a_{kj})}
q^{(-1)^{p(j)+1}}a_{kj}a_{ij}, \quad &&i < k \\ \\
&a_{ij}a_{kl}=(-1)^{\pi(a_{ij})\pi(a_{kl})}a_{kl}a_{ij}, \quad
&&i< k,j > l \quad \hbox{or} \quad i > k,j < l \\ \\
&a_{ij}a_{kl}-(-1)^{\pi(a_{ij})\pi(a_{kl})}a_{kl}a_{ij}=
\eta(q^{-1}-q)a_{kj}a_{il} \quad &&i<k,j<l
\end{align*}
where
\begin{align*}
&i,j,k,l=1, \dots m+n,\qquad
\eta=(-1)^{p(k)p(l)+p(j)p(l)+p(k)p(j)}, \\
&p(i)=0\; \hbox{ if } \quad  1 \leq i \leq m, \qquad  p(i)=1\; \hbox{ if } \quad m+ 1 \leq i \leq n+m\quad \hbox{ and }\quad
\\&\pi(a_{ij})=p(i)+p(j)\,.
\end{align*}
\hfill$\square$
\end{definition}
It is usual to organize the generators $a_{ij}$ in matrix form as $$M= (a_{ij})= \begin{pmatrix}A_{m\times m}&\Lambda_{m\times n}\\\Upsilon_{n\times m}&B_{n\times n}\end{pmatrix}\,,$$ where the dimensions of the blocks are indicated. One says that a matrix like $M$ is a {\it quantum supermatrix}.
$M_q(m|n)$ is a bialgebra with comultiplication and
counit given by
\be
\Delta(a_{ij})=\sum_k a_{ik} \otimes a_{kj},
\qquad
\ep(a_{ij})=\de_{ij}\,.
\label{coproduct}
\ee
We observe that the comultiplication can be formally understood as matrix multiplication.

If $n=0$ one reduces to the even case and the relations above are the ones of the standard {\it quantum matrices} $M_q(m)$. The standard quantum group $\rGL_q(m)$ is then obtained by inverting the determinant, that is,
$$\rGL_q(m)=_{\mathrm{def}}
M_q(m)
\langle D^{-1}\rangle\,,
$$
where $D^{-1}$ is an (even) indeterminate such
that
$$
DD^{-1}=1=
D^{-1}D\,.$$
$\rGL_q(m)$ is a Hopf algebra with antipode given by:
$$S_e(A_{ij})=(-q)^{i-j} {\det}_q A(j,i)\, {\det}_q A^{-1}\,, $$
where $A(j,i)$ is the quantum  matrix obtained from $A$ removing the $j$-th row and the $i$-th column and ${\det}_q$ is the quantum determinant:
$${\det}_q A =_{\mathrm{def}}\sum_{\s \in S_m}(-q)^{-l(\s)}
a_{1\s(1)} \cdots a_{m\s(m)}\,. $$
\begin{definition}
The {\it quantum general linear supergroup} is defined as
$$
\rGL_q(m|n)=_{\mathrm{def}}
M_q(m|n)
\langle{D_{1}}^{-1},{D_{2}}^{-1}\rangle\,,
$$
where ${D_{1}}^{-1}$ and
${D_{2}}^{-1}$ are even indeterminates such
that
$$
{D_1}D_1^{-1}=1=
{D_1}^{-1}D_{1},\qquad
{D_{2}}
{D_{2}}^{-1}=1=
{D_{2}}^{-1}
{D_{2}}\,,
$$
and
$$
D_{1}=_{\mathrm{def}}{\det}_q A,\qquad
D_{2}=_{\mathrm{def}}{\det}_q B\,,
$$
are the quantum determinants of the diagonal blocks.

\hfill$\square$
\end{definition}
$\rGL_q(m|n)$ is a Hopf algebra with the comultiplication and
counit  as in $M_q(m|n)$ and the antipode $S$  \cite{ph}  is given (in matrix form) as follows:
\begin{align}
&S(A)=
S_e(A)+S_e(A)\left(\Lambda S_e(H)\Upsilon \right)S_e(A), &&
S(B)=S_e(H), \nonumber\\
&S(\Lambda)=-S_e(A)\Lambda S_e(H)
&&S(\Upsilon)=-S_e(H) \Upsilon S_e(A),
\nonumber\\
& S(D_1^{-1})= D_1,   &&S(D_2^{-1})=D_2\,, \label{antipode}
\end{align}
where $H:=B-\Upsilon
S_e(A)\Lambda$.

One can compute the commutation relations for the entries of $H$ and check that it is a quantum
matrix (see Ref. \cite{ph}, Section 4) thus $S_e(H)$ is well defined.

Manin \cite{ma2} also introduced the {\it quantum berezinian} (see also Refs. \cite{ph,fi5,zhang}), which is a {\it central} and {\it group like} element
$$\ber_q M=_\mathrm{def}{\det}_qA\,{\det}_{q}\left(S_e(B-\Upsilon
S_e(A)\Lambda)\right)\,.$$

The {\it quantum special linear  supergroup}
is defined as
$$\rSL_q(m|n)=_\mathrm{def}\rGL_q(m|n)/\langle B_q-1\rangle\,. $$
In  $\rSL_q(m|n)$ (and $\rGL_q(m|n)$ or $M_q(m|n)$))
we can define an algebra-morphism, the {\it supertranspose}:
$$\begin{CD}\rSL_q(m|n)@>
\mathrm{st}>> \rSL_q(m|n) \\ a_{ij}@>>>
\mathrm{st}(a_{ij})=(-1)^{(p(j)+1)p(i)}a_{ji}\,.\end{CD}
$$

As one can readily check, the Manin relations and $\langle B_q-1\rangle$ are invariant under  $\mathrm{st}$. Then, the supertranspose  is an algebra automorphism of  $\rGL_q(m|n)$ and $\rSL_q(m|n)$. It is also immediate to check that with respect to the coalgebra structure, it is an antiautomorphism.

 The following proposition,
 will turn out to be very important for us.

\begin{proposition} \label{iso-St}
Let $S$ be the antipode in $\rGL_q(m|n)$.
The map $S \circ \mathrm{st}$ is a Hopf algebra isomorphism of the two quantum
superalgebras $\rSL_q(m|n)\rightarrow \rSL_{q^{-1}}(m|n)$ (where the
latter is the same as
 $\rSL_q(m|n)$, but with $q$ replaced by $q^{-1}$).
\end{proposition}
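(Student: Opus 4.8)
The plan is to verify directly that the composite $\Phi := S\circ \mathrm{st}$ sends the defining relations of $\rSL_q(m|n)$ to the defining relations of $\rSL_{q^{-1}}(m|n)$, and that it respects the Hopf structure in the appropriate twisted sense. First I would recall the two building blocks already established in the excerpt: $\mathrm{st}$ is an algebra automorphism and coalgebra antiautomorphism of $\rGL_q(m|n)$ (and descends to $\rSL_q(m|n)$), while $S$ is the antipode, hence an algebra antiautomorphism and coalgebra antiautomorphism. Composing, $\Phi$ is an algebra \emph{anti}-anti = algebra morphism, and a coalgebra anti-anti = coalgebra morphism; so the only real content is (i) that $\Phi$ actually lands in the $q^{-1}$-version of the relations rather than the $q$-version, and (ii) that it is invertible with the stated target.

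For step (i), the key computation is on the generators. Write $M=(a_{ij})$ for the generating quantum supermatrix of $\rGL_q(m|n)$ and set $b_{ij}:=\Phi(a_{ij}) = S((-1)^{(p(j)+1)p(i)}a_{ji}) = (-1)^{(p(j)+1)p(i)}S(a_{ji})$. Using the explicit antipode formulas (\ref{antipode}) — in particular that $S$ is, up to the standard sign/power factors $(-q)^{i-j}$ and Berezinian-type denominators, a "quantum cofactor" construction — one checks that the four families of Manin relations in Definition \ref{qms} transform into the same four families with $q\mapsto q^{-1}$ and $\eta(q^{-1}-q)\mapsto \eta(q-q^{-1})$, the sign $(-1)^{p(i)+1}$ flipping appropriately because the roles of "earlier/later index" are exchanged by the transpose. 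The cleanest way to organize this is not to grind through all four relations separately but to observe that $\mathrm{st}$ alone already maps the $q$-Manin relations to the $q$-Manin relations read "from the opposite corner", i.e. it is compatible with the symmetry $(i,j)\leftrightarrow(\text{reversed order})$; then $S$, being the antipode, is by construction an anti-homomorphism for $M_q$, and the antipode of a quantum supermatrix bialgebra is known to satisfy $S$-relations that are precisely those of $M_{q^{-1}}$ (this is the standard fact that $S(M)$ generates a copy of the opposite-parameter quantum matrix algebra; cf. the even case and Refs. \cite{ma2, ph}). One must also check that $\Phi$ sends $D_1, D_2$ to (units times) the corresponding quantum determinants in the $q^{-1}$ algebra — this follows since $S(D_i^{-1})=D_i$ in (\ref{antipode}) and $\mathrm{st}$ permutes the diagonal blocks' quantum determinants into $q^{-1}$-quantum determinants — and hence $\Phi$ descends to the quotient by $\langle \ber_q - 1\rangle$, because $\ber_q M$ is group-like and central and $\Phi(\ber_q M) = \ber_{q^{-1}}(\Phi M)^{\pm}$, which equals $1$ modulo the relation.

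For step (ii), invertibility, I would note that $\Phi$ is a morphism $\rSL_q(m|n)\to \rSL_{q^{-1}}(m|n)$; running the identical construction with $q$ replaced by $q^{-1}$ produces a morphism $\Phi': \rSL_{q^{-1}}(m|n)\to \rSL_q(m|n)$, and it remains to check $\Phi'\circ\Phi=\mathrm{id}$ on generators. Since $\mathrm{st}^2=\mathrm{id}$ (up to a harmless sign bookkeeping that cancels), this reduces to $\mathrm{st}\circ S\circ\mathrm{st}\circ S = \mathrm{id}$, i.e. to the identity $S_{\mathrm{st}}\circ S = \mathrm{id}$ where $S_{\mathrm{st}}:=\mathrm{st}\circ S\circ\mathrm{st}$; but $S_{\mathrm{st}}$ is easily seen to be the antipode of the Hopf algebra obtained by conjugating the coproduct by $\mathrm{st}$, and in a Hopf algebra $S$ composed with the antipode-of-the-co-opposite is the identity. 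Finally the Hopf compatibility: on coproducts one checks $\Delta\circ\Phi = (\Phi\otimes\Phi)\circ\tau\circ\Delta$ is \emph{not} what we want — rather, because $S$ and $\mathrm{st}$ are each coalgebra antiautomorphisms, their composite $\Phi$ is a coalgebra \emph{homomorphism}, so $\Delta\circ\Phi=(\Phi\otimes\Phi)\circ\Delta$ directly, and $\varepsilon\circ\Phi=\varepsilon$; compatibility with antipodes, $\Phi\circ S = S\circ\Phi$, then follows from uniqueness of the antipode.

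\medskip

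\textbf{Main obstacle.} The genuinely delicate point is the sign and parameter bookkeeping in step (i): verifying that the parity-dependent signs $(-1)^{(p(j)+1)p(i)}$ introduced by $\mathrm{st}$ combine with the $(-q)^{i-j}$ factors and the $\eta$-signs in the Manin relations to produce exactly the $q^{-1}$-Manin relations with the correct signs, uniformly over the even-even, even-odd, and odd-odd index cases. I expect this to require a careful case analysis (or a slick reformulation via the $R$-matrix / FRT presentation, where "$q\to q^{-1}$" corresponds to passing to $R^{-1}$ and "supertranspose" to a symmetry of $R$), but no conceptual difficulty beyond that. Everything else is formal manipulation with the antipode and the fact, already in hand from the excerpt, that $\mathrm{st}$ is an algebra automorphism and coalgebra antiautomorphism.
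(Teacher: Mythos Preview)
Your proposal has a bookkeeping slip that, once corrected, actually collapses the argument to something much shorter --- essentially the paper's own proof. You correctly recall that $\mathrm{st}$ is an algebra \emph{automorphism} and $S$ an algebra \emph{anti}automorphism, but you then compose them as ``anti-anti $=$ morphism''. In fact $S\circ\mathrm{st}$ is an algebra \emph{anti}automorphism of $\rSL_q(m|n)$ (anti $\circ$ auto $=$ anti), and this is precisely what the paper exploits.

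The paper's one-line observation is that the map $a_{ij}\mapsto a_{ij}$, $q\mapsto q^{-1}$ extends to an algebra \emph{anti}-isomorphism $\rSL_q(m|n)\to\rSL_{q^{-1}}(m|n)$: reversing the order of products in the Manin relations is the same as replacing $q$ by $q^{-1}$, i.e.\ $\bigl(\rSL_q(m|n)\bigr)^{\mathrm{op}}\cong\rSL_{q^{-1}}(m|n)$ via the identity on generators. Since $S\circ\mathrm{st}$ is an algebra anti-automorphism --- equivalently an algebra isomorphism onto $\bigl(\rSL_q(m|n)\bigr)^{\mathrm{op}}$ --- composing with this identification gives the desired algebra isomorphism to $\rSL_{q^{-1}}(m|n)$. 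The coalgebra side works exactly as you say (two anti's compose to a coalgebra morphism), and invertibility is immediate because $S$ and $\mathrm{st}$ are each bijective.

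Consequently, your entire step (i) --- the case-by-case verification that the $b_{ij}$ satisfy the $q^{-1}$-Manin relations with all the parity signs and $(-q)^{i-j}$ factors matching up, which you flag as the ``main obstacle'' --- is unnecessary: it is subsumed in the single line $\bigl(\rSL_q\bigr)^{\mathrm{op}}\cong\rSL_{q^{-1}}$. You in fact already cite this fact in passing (``$S(M)$ generates a copy of the opposite-parameter quantum matrix algebra''), without noticing that it finishes the job. Your step (ii) on invertibility, with the $\Phi'\circ\Phi$ and $S_{\mathrm{st}}$ manipulations, is likewise not needed.
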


\begin{proof}
The antipode is always an antiautomorphism of the Hopf algebra,
so the composition $S \circ \mathrm{st}$ is an algebra antiautomorphism and a coalgebra automorphism of $\rSL_q(m|n)$.
 The map
$$ a_{ij}\rightarrow a_{ij},\qquad q\rightarrow q^{-1}\,, $$  can be extended to an algebra-antiautomorphism. With respect to the coalgebra structure, it is an automorphism.

Then, $S \circ \mathrm{st}$ can be seen as an isomorphism $\rSL_q(m|n)\rightarrow \rSL_{q^{-1}}(m|n)$.

\end{proof}

We now go back to our special case, and we consider $\rSL_q(4|1)$.
This is a quantum deformation of $\rSL(4|1)$, hence we call it
the {\it quantum conformal supergroup}.

Different parabolic subgroups as ${P_1}_q$, ${P_2}_q$ and ${P_u}_q=(P_1\cap P_2)_q$ are defined as quotients of this superalgebra by the an ideal. Formally, this ideal is the same than in the classical case, that is, the ideal of the entries that are put to zero in (\ref{parabolics}).

The superflag $\F$ can be realized both, as a suitable quotient of
$\rSL(4|1)$
and  as embedded the product of two super
Grassmannians, $\Gr_1\times\Gr_2$. The first point of view gives
immediately the action of the conformal supergroup and the realization
of $\F$ as homogeneous superspace, while the second one allows to associate
to $\F$ the coordinate superalgebra of Theorem \ref{embflag}.

In Section \ref{superflag} we have
provided a detailed description of the coordinate superalgebras of $\Gr_1$
and $\Gr_2$. We now want to give
a quantization of them  \cite{cfl,fl}. We will use them
to obtain a quantum deformation of the superflag $\F$.

The following definition is motivated by the fact (proven at the end of Section \ref{pluckergrass}) that $\C[\Gr_1]$ and $\C[\Gr_2]$ can be seen as subalgebras of $\C[\rSL(4|1)$.

\begin{definition}\label{qsg}
Let the notation be as above. We define the {\it quantum super Grassmannians}
$\Gr_{1,q}$, $\Gr_{2,q}$ as the following $\Z$-graded subalgebras defined inside $\rSL_q(4|1)$
(equivalently, inside $\rGL_q(4|1)$).
The superalgebra $\Gr_{1,q}$ is generated by the following
 quantum super minors:
\begin{align*}
&D_{ij}= a_{i1}a_{j2}-q^{-1}a_{i2}a_{j1}, \quad &&1\leq i<j \leq 4,\\
&D_{i5}= a_{i1}a_{52}-q^{-1}a_{i2}a_{51}, \quad &&1 \leq i \leq 4,\\
&D_{55}=a_{51}a_{52}\,, &&
\end{align*}
while  $\Gr_{2,q}$ is generated by

\begin{align*}
&D_{ij}^*= a^{i3}a^{j4}-qa^{i4}a^{j3}, \quad &&1\leq i<j \leq 4, \qquad
 \\
&D_{i5}^*= a^{i3}a^{54}-qa^{i4}a^{53},
\quad &&1 \leq i \leq 4,\\&D_{55}^*=a^{53}a^{54}\,,&&
\end{align*}

where we have written, as usual,  $a^{ij}=(S\circ \mathrm{st})(a_{ij})$.

\hfill$\square$

\end{definition}
The following proposition gives the true meaning of these deformations.

\begin{proposition}\label{pres-Gr}
The generators of the subalgebra $\Gr_{1,q}$
satisfy the following relations, which provide
a presentation:

\smallskip

{\it Quantum super Pl\"{u}cker relations.}
\begin{align*}
&D_{12}D_{34}-q^{-1}D_{13}D_{24}+q^{-2}D_{14}D_{23}=0, \\
&D_{ij}D_{k5}-q^{-1}D_{ik}D_{j5}+q^{-2}D_{i5}D_{jk}=0,
\qquad &1 \leq i<j<k \leq 4 ,\\
&D_{i5}D_{j5}=qD_{ij}D_{55}, \qquad &1 \leq i<j \leq 4\,.
\end{align*}

{\it Commutation relations.}
\begin{itemize}
\item If $i,j,k,l$ are {\it not}
all distinct and $D_{ij}$, $D_{kl}$ are not both odd, we have:
\begin{equation}
D_{ij}D_{kl}=q^{-1}D_{kl}D_{ij}, \quad (i,j)<(k,l),
\quad 1 \leq  i,j,k,l \leq 5, 
\end{equation}
where `$<$'     refers to the lexicographic ordering.

\item
If $i,j,k,l$ are  all distinct  and $D_{ij}$, $D_{kl}$ are not both odd, we have:
\begin{align*}
 &D_{ij}D_{kl}=q^{-2}D_{kl}D_{ij},   \qquad &&  1 \leq i<j<k<l\leq 5,
  \\
 &D_{ij}D_{kl}=q^{-2}D_{kl}D_{ij}-(q^{-1}-q)D_{ik}D_{jl},\qquad &&
   1 \leq i<k<j< l \leq 5,   \\
& D_{ij}D_{kl}=D_{kl}D_{ij},  \qquad &&  1 \leq i<k<l<j\leq 5\, .
\end{align*}
\item Commutations with $D_{55}$ or involving two odd elements:
\begin{align*}
& D_{ij}D_{55}=q^{-2}D_{55}D_{ij}, \\
 &D_{i5}D_{j5}=-q^{-1}D_{j5}D_{i5}-(q^{-1}-q)D_{ij}D_{55}=
-qD_{j5}D_{i5} \\
 &D_{i5}D_{55}=D_{55}D_{i5}=0\,.
\end{align*}
\end{itemize}

The subalgebra $\Gr_{2,q}$ admits the same presentation where
$q$ is replaced by $q^{-1}$ and $D$ is replaced with $D^*$.
\end{proposition}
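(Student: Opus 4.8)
The plan is to establish the two assertions separately: first that the listed identities actually hold among the $D$'s inside $\rSL_q(4|1)$, and second that they form a complete set of relations, i.e.\ that the tautological surjection from the abstract algebra on these generators modulo these relations onto $\Gr_{1,q}$ is an isomorphism. For the first part I would note that each $D_{ij}$ ($1\le i<j\le 4$), each $D_{i5}$ and $D_{55}$ is a $2\times 2$ quantum minor extracted from columns $1,2$ of the quantum supermatrix $(a_{ij})$, so every stated identity is a consequence of the Manin relations of Definition \ref{qms}. I would organise the verification by parity. When all indices are $\le 4$ the minors $D_{ij},D_{kl}$ are even and the quantum super Pl\"ucker relation together with the lexicographic and ``all distinct'' commutation relations are exactly those satisfied by the $2\times 2$ quantum minors of a rectangular quantum matrix, a standard computation using only the even Manin relations and the quantum Laplace/straightening identities. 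When the odd row $5$ is involved, i.e.\ for $D_{i5}$ and $D_{55}$, one repeats the same manipulations while carrying the signs $(-1)^{\pi(\cdot)\pi(\cdot)}$ and the factors $\eta(q^{-1}-q)$, and using that the odd generators $a_{51},a_{52}$ anticommute up to a $q$-factor and square to zero; this produces $D_{i5}D_{j5}=qD_{ij}D_{55}$, $D_{i5}D_{55}=D_{55}D_{i5}=0$, $D_{i5}D_{j5}=-qD_{j5}D_{i5}$ and the $(q^{-1}-q)$-corrected commutators. These computations are lengthy but mechanical; the only genuine care needed is the sign bookkeeping in the odd cases.

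For completeness, let $\mathcal A$ be the $\Z$-graded $\C_q$-algebra presented by the generators $D_{ij},D_{i5},D_{55}$ and the relations of the statement, so that the first part gives a surjective graded $\C_q$-algebra map $\phi\colon\mathcal A\twoheadrightarrow\Gr_{1,q}$. To see $\phi$ is injective I would first use the commutation relations (with a fixed order on the generators) together with the quantum super Pl\"ucker relations as rewriting rules, and check by a Bergman diamond-lemma argument that the resulting normally ordered standard monomials span each graded piece $\mathcal A_d$; this is the quantum straightening law for the super Klein quadric, and the only overlaps to resolve are the classical ones. Then I would compare dimensions through the classical limit: the specialisation $q\mapsto 1$ maps $\Gr_{1,q}$ onto $\C[\Gr_1]$, the super Klein quadric ring of (\ref{embeddingring}), whose graded dimension is known since the classical super Pl\"ucker relations (\ref{superplucker}) are a complete set of relations with the classical standard monomials as a basis, and it carries the normally ordered monomials of $\mathcal A$ to exactly those classical standard monomials. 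Using that $\rSL_q(4|1)$ is a flat $\C_q$-deformation of $\rSL(4|1)$ and that $\Gr_{1,q}$, being a submodule of a free module over the PID $\C_q=\C[q,q^{-1}]$, is torsion-free, a nontrivial $\C_q$-linear dependence among the normally ordered monomials in $\Gr_{1,q}$ would specialise at $q=1$ to a nontrivial dependence among the classical standard monomials, which is impossible; hence $\dim_{\C(q)}(\mathcal A\otimes\C(q))_d\le\#\{\text{standard monomials of degree }d\}=\dim_\C\C[\Gr_1]_d\le\dim_{\C(q)}(\Gr_{1,q}\otimes\C(q))_d$, forcing equality in each degree, so $\ker\phi$ is $\C_q$-torsion. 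Since $\mathcal A$ is $\C_q$-free on the standard monomials, $\ker\phi=0$ and $\phi$ is an isomorphism.

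For $\Gr_{2,q}$ I would invoke Proposition \ref{iso-St}: the Hopf algebra isomorphism $S\circ\mathrm{st}\colon\rSL_q(4|1)\to\rSL_{q^{-1}}(4|1)$, which sends $a_{ij}\mapsto a^{ij}$, intertwines---after the index relabelling that realises at the quantum level the classical duality $\Gr_1\cong\Gr_2$ of Section \ref{pluckergrass} (the passage from a matrix to its inverse supertranspose, under which columns $3,4$ play the role of columns $1,2$)---the generators $D^*_{ij},D^*_{i5},D^*_{55}$ of $\Gr_{2,q}$ with the generators $D_{ij},D_{i5},D_{55}$ of $\Gr_{1,q^{-1}}$. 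Transporting the presentation of the previous step along this isomorphism yields exactly the stated presentation of $\Gr_{2,q}$ with $q$ replaced everywhere by $q^{-1}$ and $D$ by $D^*$.

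The routine part is the first step: verifying the relations is a direct computation once one is careful with the signs in the odd cases. The genuine difficulty is the completeness argument of the second step, and within it the two key inputs: establishing that the normally ordered monomials span (the quantum straightening law for the super Klein quadric), and the flatness statement for $\rSL_q(4|1)$ that is needed to pin down the graded dimensions via the $q\to 1$ limit and thereby conclude that $\ker\phi$ vanishes.
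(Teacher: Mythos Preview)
Your approach is correct and, in fact, more detailed than the paper's own proof, which for $\Gr_{1,q}$ simply refers to Chapter~5 of Ref.~\cite{fl} and for $\Gr_{2,q}$ invokes Proposition~\ref{iso-St} exactly as you do. The strategy you outline for $\Gr_{1,q}$---verify the relations directly from the Manin relations, then prove completeness by showing that ordered standard monomials span the abstract algebra $\mathcal A$ and that their images in $\Gr_{1,q}\subset\rSL_q(4|1)$ are $\C_q$-linearly independent via specialisation at $q=1$---is precisely the standard ``quantum straightening plus flat deformation'' argument that underlies the cited reference. Two small points worth tightening: first, the claim that ``the only overlaps to resolve are the classical ones'' is slightly optimistic, since the relations involving $D_{55}$ and the odd $D_{i5}$ introduce genuinely new overlaps (e.g.\ $D_{i5}D_{j5}D_{55}$) that must be checked separately, though they do resolve; second, for $\Gr_{2,q}$ the phrase ``index relabelling'' deserves one more sentence---the point is simply that columns $3,4$ are both even, so the $(a^{ij})$-minors on columns $3,4$ obey the same quadratic algebra as the $(a_{ij})$-minors on columns $1,2$ once $q\mapsto q^{-1}$, with no further reindexing of the row labels needed.
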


\begin{proof} The claim about the presentation of $\Gr_{1,q}$ is
proved in Chapter 5 of Ref. \cite{fl}, while the one about $\Gr_{2,q}$ is an
immediate consequence of Proposition \ref{iso-St}.
\end{proof}

 As stated in Refs. \cite{cfl,fl} one can prove that $\Gr_{1,q}$ and $\Gr_{2,q}$ are quantum homogeneous spaces
for the quantum supergroup $\rSL_q(4|1)$ (also for $\rGL_q(4|1)$). This is the content of the next proposition:

\begin{proposition}
There is a well defined coaction of the quantum supergroups
$\rGL_q(4|1)$ and $\rSL_q(4|1)$ on $\Gr_{1,q}$ and $\Gr_{2,q}$, obtained
by restricting the comultiplication.
On the generators, such coaction is given explicitly by the formulas
$$
\Delta(D_{ij})= \sum D_{ij}^{kl} \otimes D_{kl}\,, \qquad
\Delta(D_{ij}^*)= \sum {D^*}_{ij}^{kl} \otimes D_{kl}^*\,.
$$
\end{proposition}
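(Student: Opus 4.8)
\medskip

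\noindent\textbf{Proof strategy.} The plan is to exploit the fact, recorded in Definition~\ref{qsg}, that $\Gr_{1,q}$ and $\Gr_{2,q}$ are \emph{subalgebras} of the Hopf superalgebra $\rSL_q(4|1)$ (equivalently of $\rGL_q(4|1)$), together with the fact that the comultiplication $\Delta$ is a morphism of superalgebras. Because of this, in order to show that $\Delta$ restricts to a coaction $\Gr_{1,q}\to \rSL_q(4|1)\otimes\Gr_{1,q}$ it is enough to check that $\Delta$ maps each algebra generator $D_{ij}$, $D_{i5}$, $D_{55}$ of $\Gr_{1,q}$ into $\rSL_q(4|1)\otimes\Gr_{1,q}$; multiplicativity of $\Delta$ then forces $\Delta(\Gr_{1,q})\subseteq \rSL_q(4|1)\otimes\Gr_{1,q}$, and similarly for $\Gr_{2,q}$.

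The first and main step is a direct computation of $\Delta$ on the generators. Regarding $D_{ij}$ as the $2\times2$ quantum super minor of the generic quantum supermatrix $(a_{ij})$ on rows $i,j$ and columns $1,2$, I would expand $\Delta(D_{ij})=\Delta(a_{i1})\Delta(a_{j2})-q^{-1}\Delta(a_{i2})\Delta(a_{j1})$ using $\Delta(a_{ij})=\sum_k a_{ik}\otimes a_{kj}$ and then collect terms by means of the Manin relations of Definition~\ref{qms}. The expected outcome is a quantum (super) Laplace/Cauchy--Binet expansion
$$\Delta(D_{ij})=\sum_{1\le k<l\le 5} D_{ij}^{kl}\otimes D_{kl},$$
where $D_{ij}^{kl}$ is the $2\times2$ quantum super minor on rows $i,j$ and columns $k,l$ (with the standard odd correction when $l=5$) and $D_{kl}=D_{kl}^{12}$ is exactly the corresponding generator of $\Gr_{1,q}$; the analogous computations give $\Delta(D_{i5})$ and $\Delta(D_{55})$ again as sums $\sum(\cdot)\otimes D_{kl}$ over the generators. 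In particular all three lie in $\rSL_q(4|1)\otimes\Gr_{1,q}$, which is exactly what is needed.

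For $\Gr_{2,q}$ I would avoid redoing the computation by invoking Proposition~\ref{iso-St}: since $S\circ\mathrm{st}$ is a composition of two coalgebra antiautomorphisms it is a coalgebra homomorphism, so $\Delta(a^{ij})=\Delta\big((S\circ\mathrm{st})(a_{ij})\big)=\sum_k a^{ik}\otimes a^{kj}$ has exactly the same shape as $\Delta(a_{ij})$. Hence the minor computation above applies verbatim with $q$ replaced by $q^{-1}$ and $D$ by $D^{*}$, giving $\Delta(D^{*}_{ij})=\sum_{k<l}{D^{*}}_{ij}^{kl}\otimes D^{*}_{kl}\in\rSL_q(4|1)\otimes\Gr_{2,q}$, and likewise for $D^{*}_{i5}$, $D^{*}_{55}$. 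The coaction axioms for the restricted maps — coassociativity $(\Delta\otimes\mathrm{id})\circ\Delta=(\mathrm{id}\otimes\Delta)\circ\Delta$ and counitality $(\epsilon\otimes\mathrm{id})\circ\Delta=\mathrm{id}$ — are then obtained at once by restricting the corresponding identities valid on all of $\rSL_q(4|1)$, once the inclusions $\Delta(\Gr_{i,q})\subseteq\rSL_q(4|1)\otimes\Gr_{i,q}$ are established; the same argument gives the coaction of $\rGL_q(4|1)$ (and, for $\Gr_{1,q}$, of $M_q(4|1)$).

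The one genuinely delicate point I expect is the quantum super Laplace expansion itself: one has to keep simultaneous track of the powers of $q$ produced when reordering products via the Manin relations and of the Koszul signs attached to the odd index $5$ when commuting tensor factors past one another, and to verify that the terms recombine into honest $2\times2$ quantum minors. Everything else in the argument is formal. (A proof of the presentation of $\Gr_{1,q}$ along these very lines is carried out in Chapter~5 of Ref.~\cite{fl}, from which the relevant minor identities can be imported.)
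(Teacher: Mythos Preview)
Your proposal is correct and follows essentially the same approach as the paper: the paper's proof consists of the single line ``Direct calculation, essentially the same as in Proposition 1.4 of Ref.~\cite{fi1}'', which is precisely the quantum Cauchy--Binet/Laplace expansion you outline, and in the later proof of Proposition~\ref{qsec-prop} the paper makes explicit both the formula $\Delta(D_{ij})=\sum_{k<l}D_{ij}^{kl}\otimes D_{kl}$ and the use of Proposition~\ref{iso-St} to obtain $\Delta(a^{ij})=\sum_k a^{ik}\otimes a^{kj}$ for the dual case, exactly as you do. Your write-up is in fact more detailed than the paper's own proof.
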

\begin{proof}
Direct calculation, essentially the same as in Proposition 1.4 of Ref. \cite{fi1}.
\end{proof}

At this point one may be tempted, in analogy with the ordinary
setting (see Remark \ref{remarksubring}),
to define the quantum superflag as the quantum
subsuperalgebra of $\rSL_q(4|1)$ generated by the elements
$D_{ij}$, $D^*_{kl}$ defined above.  Such definition  would
require us to compute the commutation relations of any
pair $D_{ij}$, $D_{kl}^*$, in order to make sure
that this subsuperalgebra is well defined. In other words, one has to prove that no other
elements besides $D_{ij}$ and $D^*_{kl}$
appear actually in the commutation relations. Moreover, in order to give a presentation similar to Theorem \ref{embflag}, one would have to compute a generalization of the incidence relations (\ref{incidence3}).
As Proposition \ref{pres-Gr} shows, these
relations are highly non trivial to compute and for this reason
we prefer to take another route.

We will define the quantum superflag
via the notion of a {\it quantum section} \cite{fi6} of the very ample super line bundle related to the projective embedding of the conformal superspace. This will allow us to give a characterization of the quantum coordinate ring.

\section{The quantum section}\label{qsection-sec}

The global sections of the super line bundle  $\cL$ are
characterized by the equivariance condition (\ref{globalinvariance}).
We can express it in pure Hopf algebraic terms. Since $\bV =\C$ in our case, we identify $\cO(G)\otimes \bV\cong \cO(G)$.  Let $I(P)$ the ideal in $\cO(G)$ defining $P$, so $\cO(P)=\cO(G)/I(P)$,  and denote as  $\pi: \cO(G) \rightarrow \cO(P)$ the canonical projection. Let  $\Delta:\cO(G)\rightarrow \cO(G)\otimes \cO(G)$ be the coproduct in $\cO(G)$.  Then we have
$$
\cO(G/P)_1=\Big\{\, f \in \cO(G) \,\;\Big|\;
(\id \otimes \pi)\Delta(f) = f \otimes S(\chi) \Big\}\,.
$$
 Let $t\in \cO(G)$ such that $t=\pi(\chi)$.  If $\cL$ is very ample --it corresponds
to a projective embedding--  we have the following important result
\cite{fi6}:

\begin{proposition} \label{t}
Let the notation be as above. Let the supervariety
$G/P$ be embedded into some projective superspace via the line
bundle $\cL$. Let $\pi: \cO(G) \rightarrow \cO(P)=\cO(G)/I(P)$ and $\Delta $ the coproduct in $\cO(G)$ (formally the same as in (\ref{coproduct})).
Then, there exists an element $t \in  \cO(G)$, with $\pi(t)=\chi$,  such
that
\begin{align*}
&\left((\id \otimes \pi)
\circ \Delta \right)(t) \, = \, t \otimes \pi(t),   \quad
\pi\big(t^m\big) \not= \pi\big(t^n\big) \quad \forall
m \not= n \in \N\,,  \\ \\
&\cO(G/P)_n  \; = \;  \Big\{\, f \in  \cO(G) \,\;\Big|\;
(\id \otimes \pi)\Delta(f) = f \otimes \pi\big(t^n\big) \Big\}\,,  \\ \\
&\cO(G/P)  \; = \;  {\textstyle \bigoplus_{n \in \N}} \; \cO(G)_n\,,
\end{align*}
and $ \, \cO(G/P) \, $  is generated in degree 1, namely by
$ \cO(G/P)_1 \, $.

\hfill$\square$
\end{proposition}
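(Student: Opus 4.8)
The plan is to produce the quantum section explicitly and then check the four assertions one by one, the only non-formal ingredient being the very ampleness of $\cL$. First I would recall from Section~\ref{linebundles-sec} that, by parabolic induction (equations (\ref{sectionsbundle})--(\ref{globalinvariance})), the induced bundle $\cL^n$ of the $n$-th power of the character has global sections $\Gamma(\cL^n)=\{\,f\in\cO(G)\mid(\id\otimes\pi)\Delta(f)=f\otimes\pi(\chi^n)\,\}$, and that the tower $\cL^{*}=\bigoplus_{n\ge0}\Gamma(\cL^n)$ of (\ref{gradedbundlealgebra}) is a $G$-algebra. What the proposition asserts is that, when $\cL$ is very ample, this Hopf-theoretic section ring is precisely the homogeneous coordinate ring $\cO(G/P)=\bigoplus_n\cO(G/P)_n$ attached, as in Example~\ref{constructionproj}, to the corresponding projective embedding, and that a single element $t$ controls the whole tower.

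For the existence of $t$: very ampleness means the global sections of $\cL$ embed $G/P$ into a projective superspace, hence have no common zero, so some section is nonzero at the base point $eP$. Concretely, Proposition~\ref{vaslb} (and, in the classical model, Proposition~\ref{amplebundle}) shows that the even element $t:=d_{12}\in\cO(\rSL(4|1))$ is $P$-equivariant with $\pi(t)=\chi$ and $\epsilon(t)\neq0$, so after rescaling $\epsilon(t)=1$. Applying $\epsilon\otimes\id$ to the equivariance relation $(\id\otimes\pi)\Delta(f)=f\otimes\pi(\chi)$, valid for every $f\in\Gamma(\cL)$, gives $\pi(f)=\epsilon(f)\,\pi(\chi)$; this confirms $\pi(t)=\chi$ and shows that the relation for $t$ takes exactly the stated form $((\id\otimes\pi)\circ\Delta)(t)=t\otimes\pi(t)$.

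Next, since $\Delta$ and $\pi$ are algebra homomorphisms and $t$ is even,
$$(\id\otimes\pi)\Delta(t^n)=\big((\id\otimes\pi)\Delta(t)\big)^n=(t\otimes\pi(t))^n=t^n\otimes\pi(t)^n=t^n\otimes\pi(t^n),$$
so $t^n\in\Gamma(\cL^n)$ and $\pi(t^n)=\pi(t)^n$; and since $\pi(t)=d_{12}$ restricts to the polynomial $\det L$ in $\cO(P)$, whose powers are pairwise distinct, we get $\pi(t^m)\neq\pi(t^n)$ for $m\neq n$. It remains to identify $\Gamma(\cL^n)$ with $\cO(G/P)_n$. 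The inclusion $\cO(G/P)_n\subseteq\Gamma(\cL^n)$ is immediate, because the Pl\"ucker and (after Segre) the bilinear coordinates are $P$-equivariant of weight $\chi$ by Propositions~\ref{amplebundle}, \ref{vaslb} and their chiral counterparts, so their degree-$n$ monomials, which span $\cO(G/P)_n$, satisfy the weight-$\chi^n$ equivariance. Granting the reverse inclusion, one obtains $\cO(G/P)=\bigoplus_n\cO(G/P)_n$, and generation in degree $1$ then follows because the homogeneous coordinate ring of Example~\ref{constructionproj} is by construction generated by its degree-one coordinates.

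The hard part is precisely this reverse inclusion $\Gamma(\cL^n)\subseteq\cO(G/P)_n$: a priori there could be functions on $G$ equivariant of weight $\chi^n$ that are not degree-$n$ polynomials in the degree-one sections. Excluding them is the projective normality of the embedding, equivalently the surjectivity of $\Sym^n\Gamma(\cL)\twoheadrightarrow\Gamma(\cL^n)$; for the super Grassmannians and the super flag this can be read off from the explicit Pl\"ucker and Segre presentations of Sections~\ref{superflag} and~\ref{segre-sec}, or invoked from \cite{fi6}. Everything else is a routine manipulation of the Hopf structure which, since $t=d_{12}$ is even, needs no super-sign bookkeeping.
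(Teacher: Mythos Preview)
The paper does not prove this proposition: it is stated with the box symbol the authors use for results quoted without proof, and is introduced by ``we have the following important result \cite{fi6}''. So there is no in-paper argument to compare against; your sketch is an attempt to reconstruct what is in \cite{fi6}.

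As a reconstruction it is sound in outline. Your use of the counit axiom to get $\pi(f)=\epsilon(f)\,\chi$ for every weight-one section, and hence to pick $t$ as any section with $\epsilon(t)=1$, is exactly the right Hopf-algebraic manoeuvre. The multiplicativity step for $t^n$ is routine and correct. You also correctly isolate the only nontrivial content: the equality $\Gamma(\cL^n)=\cO(G/P)_n$ (projective normality, i.e.\ surjectivity of $\Sym^n\Gamma(\cL)\to\Gamma(\cL^n)$). That is precisely what \cite{fi6} supplies, and without it the last two displayed claims of the proposition do not follow.

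Two small points. First, your argument for $\pi(t^m)\neq\pi(t^n)$ is example-specific (you invoke $\det L$); the general statement needs only that $\chi$ is a nontrivial character of $P$, so its powers in $\cO(P)$ are distinct---you should say this rather than reduce to $d_{12}$. Second, mind the sign/inverse convention: just above the proposition the paper writes the weight-one condition with $S(\chi)$, while the proposition writes it with $\pi(t)=\chi$; your equivariance identity should be stated consistently with whichever convention you fix.
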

We call $t$  the {\it classical section} associated to the super line bundle $\cL$. The following are the relevant examples.

\begin{example}\label{grt}For the Grassmannians $\Gr_1$ and $\Gr_2$, it is a calculation to show that the first condition is satisfied for  the elements $d_{12}\in \cO(\rSL(4|1))$ and $d^*_{12}\in \cO(\rSL(4|1))$.
The remaining conditions rely on Proposition \ref{vaslb}.

\hfill$\square$

\end{example}

\begin{example}\label{sft}

For the superflag, having  in mind the super Segre embedding, the natural guess for the classical section would be $t=d_{12}d_{12}^* \in \cO(\rSL(4|1))$. This is in fact true: one can check that the coordinates of the super Segre embedding (see Section \ref{segre-sec})
$$\left(\begin{array}{cc}
d_Id^*_K&d_{55}d^*_K\\
d_Id^*_{55}&d_{55}d^*_{55}\\
\hline
\delta_{i5}d^*_K&\delta_{i5}d^*_{55}\end{array}\vline
\begin{array}{c}
d_I\delta^*_k\\
d_{55}\delta^*_k\\
\hline
\delta_{i5}\delta^*_{k5}
\end{array}\right), \qquad \begin{array}{lcccc}I, K=&(1,2), &(1,3),  &(1,4), \\&(2,3), &(2,4), &(3,4)\,,\end{array}
$$
with $d$, $d^*$, $\delta$ and $\delta^*$ being the determinants defined at the end of Section \ref{pluckergrass}, are $t$-equivariant sections. We have then achieved a description of the coordinate ring of the projective embedding of $\F$ in $\bP^{64|56}$ as a (graded) subring of $\cO(\rSL(4|1)$. 

\hfill$\square$

\end{example}

We are now ready to transfer to the
quantum supergroup setting the notion of super line bundle and
the equivalent and corresponding notion of super projective
embedding (see Refs. \cite{fi6} and \cite{ccf} Ch. 10 for the ordinary setting).

We start
with the definition of the {\it quantum section}.
Let $\cO_q(G)$, $\cO_q(P)$ denote the  quantizations of the
superalgebras $ \cO(G)$ and $ \cO(P)$. Let $I_q(P)$ the ideal in $\cO_q(G)$ such that
$$\cO_q(P)=\cO_q(G)/I_q(P)\,.$$ We denote as
$\pi: \cO_q(G) \lra \cO_q(P)$ (no risk of confusion)

the canonical projection.

\begin{definition} \label{qsec}
Let $\cL$ be the super line bundle on  $ G / P $
given by  the classical section $ t$. A  \textit{quantum section} or quantization of $t$ is
 an element $ \, d \in \cO_q(G) \, $  such that

\begin{enumerate}
\item $(\id\otimes \pi) \Delta(d) =  d \otimes \pi(d) $.

\item $ t\,= \,d \mod (q\!-\!1) \, \cO_q(G)
$
\end{enumerate}

\end{definition}

Since $d \in \cO_q(G)$ reduces to $t$  when we specialize $q=1$, and $t$
contains all of the information to reconstruct the line bundle $\cL$,
we may think of $d$ as a quantum deformation of the line bundle $\cL$. Also,
$\cL$ corresponds to an embedding
of $G/P$ into a projective superspace and consequently a $\Z$-graded
superalgebra
$$\cO(G/P)=\sum_{n=0}^\infty \cO(G/P)_n$$ similar to the construction
in (\ref{gradedbundlealgebra}). We now use the quantum section $d$
to translate it the quantum case.

\begin{definition}
Let $d$ be a quantum section of $\cL$.
We define \begin{align*}
\cO_q(G/P)&:=\oplus_{n\in \N}\cO_q(G/P)_n, \\[3mm] &\hbox{where}\\[3mm]
\cO_q(G/P)_n &:= \bigl\{\,f \in \cO_q(G)\, | \,
(\id \otimes \pi)\Delta(f) = f \otimes \pi(d^n)\,\bigr\}\,.
\end{align*}

\hfill$\square$
\end{definition}

The next proposition is proven in Ref \cite{fi6} and it shows the
importance of quantum sections. We recall here that the superalgebra
$\cO_q(G/P)$ is seen as a subalgebra of $\cO_q(G)$, as in Definition \ref{qsg}.

\begin{theorem}  \label{Oqgh-graded}
Let  $d$  be a quantum section on  $ \, G\big/P \, $.  Then we have:
\begin{enumerate}
\item For all  $ \, r, s \in \N$
$$
\Oqgh_r \cdot \Oqgh_s \, \subseteq \, \Oqgh_{r+s}\,.
$$
Furthermore,
$$
\Oqgh ={\bigoplus}_{n \in \N} \Oqgh_n \, \subset \, \cO_q(G)\,.
$$

 \item The grading in  {\it (1)}  is compatible with the quantum
homogeneous space structure, that is,    $ \, \Oqgh $  is a  {\it graded}
$ \, \cO_q(G) $--comodule  algebra, via
the restriction of the comultiplication $\Delta$
in $\cO_q(G)$,  where we take on  $ \, \cO_q(G) $
the trivial grading:
$$
\Delta|_{\Oqgh}: \Oqgh \lra \cO_q(G) \otimes
\Oqgh
$$
\item For every  $ \, c \in \bk_q \, $,  we have
$ \;\; \Oqgh \, {\textstyle \bigcap} \, c \, \cO_q(G) \, = \,
c \, \Oqgh \,.$
  In particular,
$$
\Oqgh \, {\textstyle \bigcap} \,
(q\!-\!1) \, \cO_q(G) \, = \, (q\!-\!1) \, \Oqgh \,.
$$

\end{enumerate}
Hence $\Oqgh$ is a projective homogeneous quantum supervariety
for the coaction of the quantum supergroup $\cO_q(G)$.

\end{theorem}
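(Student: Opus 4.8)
The plan is to follow the argument of Ref.~\cite{fi6}, in which every assertion is extracted from the single quantum-section identity $(\id\otimes\pi)\Delta(d)=d\otimes\pi(d)$ of Definition~\ref{qsec} together with coassociativity of $\Delta$, the genuinely delicate point being the torsion-freeness statement (3). First I would record two elementary facts used throughout: the section $d$ is even (it is a quantization of $t=d_{12}d_{12}^*$, a product of even generators), so no Koszul signs intervene below; and $\pi$ is an algebra homomorphism, so $\pi(d^n)=\pi(d)^n$. For (1), let $f\in\Oqgh_r$ and $g\in\Oqgh_s$. Since $\Delta$ and $\pi$ are algebra maps,
\[
(\id\otimes\pi)\Delta(fg)=\bigl((\id\otimes\pi)\Delta(f)\bigr)\bigl((\id\otimes\pi)\Delta(g)\bigr)=\bigl(f\otimes\pi(d)^r\bigr)\bigl(g\otimes\pi(d)^s\bigr)=fg\otimes\pi(d)^{r+s},
\]
so $fg\in\Oqgh_{r+s}$. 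That $\sum_{n}\Oqgh_n$ is a direct sum follows from the $\C_q$-linear independence of the family $\{\pi(d^n)\}_{n\in\N}$ in $\cO_q(P)$: if $\sum_n f_n=0$ with $f_n\in\Oqgh_n$ and only finitely many nonzero, applying $(\id\otimes\pi)\Delta$ gives $\sum_n f_n\otimes\pi(d)^n=0$, whence every $f_n=0$. The needed linear independence descends from the classical inequalities $\pi(t^m)\ne\pi(t^n)$ of Proposition~\ref{t} by specializing $q\to 1$, using that $\cO_q(P)$ is a flat deformation of $\cO(P)$ over $\C_q$.

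For (2), take $f\in\Oqgh_n$ and write $\Delta(f)=\sum f_{(1)}\otimes f_{(2)}$ with the $f_{(1)}$ chosen $\C_q$-linearly independent. Applying $\id\otimes\id\otimes\pi$ to the coassociativity identity $(\id\otimes\Delta)\Delta(f)=(\Delta\otimes\id)\Delta(f)$ yields
\[
\sum f_{(1)}\otimes(\id\otimes\pi)\Delta(f_{(2)})=(\Delta\otimes\id)\bigl(f\otimes\pi(d^n)\bigr)=\sum f_{(1)}\otimes f_{(2)}\otimes\pi(d^n),
\]
so by linear independence of the $f_{(1)}$ we obtain $(\id\otimes\pi)\Delta(f_{(2)})=f_{(2)}\otimes\pi(d^n)$, i.e.\ $f_{(2)}\in\Oqgh_n$, for each term. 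Hence $\Delta(\Oqgh_n)\subseteq\cO_q(G)\otimes\Oqgh_n$: each graded piece is a right $\cO_q(G)$-subcomodule under the restriction of $\Delta$, and by (1) $\Oqgh$ is a subalgebra of $\cO_q(G)$, so it is a comodule algebra, graded once $\cO_q(G)$ is given the trivial grading. This is the asserted compatibility.

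Finally, for (3): the inclusion $c\,\Oqgh\subseteq\Oqgh\cap c\,\cO_q(G)$ is immediate. Conversely, given $f\in\Oqgh$ with $f=c\,h$, $h\in\cO_q(G)$, one must produce a homogeneous decomposition $h=\sum_n h_n$ with $h_n\in\Oqgh_n$ and $c\,h_n=f_n$, where $f=\sum_n f_n$ is the decomposition of $f$. Applying $(\id\otimes\pi)\Delta$ gives $c\,(\id\otimes\pi)\Delta(h)=\sum_n f_n\otimes\pi(d)^n$, and since $\C_q=\C[q,q^{-1}]$ is a principal ideal domain and the $\C_q$-submodule $\bigoplus_n\C_q\,\pi(d^n)$ of $\cO_q(P)$ is pure (again by the classical statement together with flatness), this equation can be divided by $c$ componentwise, which yields the required $h_n$. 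I expect this module-theoretic step --- showing that the grading interacts cleanly with the $\C_q$-module structure, which is precisely what makes $\Oqgh$ a flat quantization reducing to $\Ogh$ modulo $(q-1)$ --- to be the main obstacle; the remainder is bookkeeping with the quantum-section identity. Assembling (1)--(3) gives that $\Oqgh$ is a projective homogeneous quantum supervariety for the coaction of $\cO_q(G)$, as in Ref.~\cite{fi6}.
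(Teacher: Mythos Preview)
The paper does not give a proof of this theorem at all: it simply states that the result ``is proven in Ref.~\cite{fi6}'' and moves on. Your proposal, which explicitly sets out to follow \cite{fi6}, is therefore doing strictly more than the paper does, and your overall strategy---extracting everything from the quantum-section identity and coassociativity, with (3) isolated as the one genuinely nontrivial step---is the right one.

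Two small points are worth tightening. In (1), distinctness of the classical $\pi(t^n)$ does not by itself yield $\C_q$-linear independence of the $\pi(d^n)$; what you actually want is that $\pi(d)$ is \emph{group-like} in $\cO_q(P)$ (apply $\pi\otimes\id$ to the quantum-section identity), so the $\pi(d)^n$ are distinct group-likes and hence automatically linearly independent. In (3), your purity argument can be made cleaner by observing that each $\Oqgh_n$ is the kernel of the $\C_q$-linear map $f\mapsto(\id\otimes\pi)\Delta(f)-f\otimes\pi(d^n)$ into a torsion-free $\C_q$-module, hence is itself pure in $\cO_q(G)$; the passage from purity of the pieces to purity of $\bigoplus_n\Oqgh_n$ then uses exactly the decomposition you describe together with the counit (noting $\epsilon_P(\pi(d))=1$). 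These are refinements rather than gaps: your plan is sound and matches what the cited reference does.
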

\section{The quantum superflag.} \label{qsuperflag-sec}

We now  take $G=\rSL(4|1)$
and $P=P_u$  the upper parabolic subgroup of $\F$ (\ref{parabolics}):
$$
P_u(A)=
\left\{\left(
\begin{pmatrix}
P & Q & \nu \\
0 & R & 0 \\
0 & \beta & s
\end{pmatrix} \right) \right\} \subset \rSL(4|1)(A)\,,
$$

We intend to give the quantum deformation of the conformal superspace through a quantum section.  The superflag $\F$ is seen inside the product of the super Grassmanians, while its projective embedding is realized by means of the super Segre map. We already observed in the Example \ref{grt} that the Pl\"{u}cker embeddings for $\Gr_1$ and $\Gr_2$  are related to the classical sections  $d_{12}$ and $d_{12}^*$ respectively. For the flag, one has to consider the  classical section $t=d_{12}d_{12}^*$. We want now to construct
a quantum section $d$ which reduces modulo $q-1$ to  $t$.


We define as before:
$$
D_{12}=a_{11}a_{22}-q^{-1}a_{12}a_{21}, \qquad
D_{12}^*=a^{13}a^{24}-qa^{23}a^{14}
$$


The next statements are the main results of this section and
give a quantization of the conformal superspace.

\begin{proposition}\label{qsec-prop}
The element $d =D_{12}D_{12}^* \in \rSL_q(4|1)$ is a quantum section, with
respect to the super line bundle $\cL$ on $\rSL(4|1)/P_u$ given by
$t=d_{12}d_{12}^*$.
\end{proposition}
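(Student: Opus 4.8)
The plan is to verify the two conditions of Definition \ref{qsec} for $d=D_{12}D_{12}^*$: that $d$ reduces to $t$ when $q=1$, and that $d$ is $\pi$-coinvariant in the twisted sense $(\mathrm{id}\otimes\pi)\Delta(d)=d\otimes\pi(d)$. The first is a one-line check; the second is where the computation lies.

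For condition (2) I would simply specialize $q=1$: then $\rSL_q(4|1)$ collapses onto $\cO(\rSL(4|1))$, the element $D_{12}=a_{11}a_{22}-q^{-1}a_{12}a_{21}$ becomes the classical minor $d_{12}$, and each dual coordinate $a^{ij}=(S\circ\mathrm{st})(a_{ij})$ becomes the $(i,j)$ entry of $(g^t)^{-1}$, so that $D_{12}^*=a^{13}a^{24}-qa^{23}a^{14}$ becomes $d_{12}^*$. Hence $d\equiv d_{12}d_{12}^*=t \pmod{(q-1)\rSL_q(4|1)}$.

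For condition (1) I would first use that $\Delta$ and $\mathrm{id}\otimes\pi$ are algebra homomorphisms to factor
\[
(\mathrm{id}\otimes\pi)\Delta(d)=\bigl((\mathrm{id}\otimes\pi)\Delta(D_{12})\bigr)\,\bigl((\mathrm{id}\otimes\pi)\Delta(D_{12}^*)\bigr),
\]
and then dispatch each factor by the same mechanism that governs the Pl\"ucker sections of $\Gr_1$ and $\Gr_2$ in Example \ref{grt}. For the first factor, the coaction formula recalled in Section \ref{qgrass-sec} gives $\Delta(D_{12})=\sum_{k<l}D_{12}^{kl}\otimes D_{kl}$, where $D_{kl}$ is the quantum $2\times2$ minor on columns $1,2$ and rows $k,l$ (so $D_{12}^{12}=D_{12}$). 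The ideal $I_q(P_u)$ is generated by the entries occupying the zero slots of (\ref{interparabolic}); in particular it contains $a_{31},a_{41},a_{51},a_{32},a_{42},a_{52}$, so $\pi(D_{kl})=0$ whenever $\{k,l\}\neq\{1,2\}$. Hence $(\mathrm{id}\otimes\pi)\Delta(D_{12})=D_{12}\otimes\pi(D_{12})$. For the second factor I would run the identical argument for $D_{12}^*$, most efficiently by invoking Proposition \ref{iso-St}: under the isomorphism $S\circ\mathrm{st}$ the element $D_{12}^*$ corresponds to the generator $D_{12}$ of $\rSL_{q^{-1}}(4|1)$ and $I_q(P_u)$ corresponds to the ideal of the supertransposed parabolic, so the previous computation applies verbatim and yields $(\mathrm{id}\otimes\pi)\Delta(D_{12}^*)=D_{12}^*\otimes\pi(D_{12}^*)$. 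Multiplying the two identities gives $(\mathrm{id}\otimes\pi)\Delta(d)=D_{12}D_{12}^*\otimes\pi(D_{12})\pi(D_{12}^*)=d\otimes\pi(d)$, as required.

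The delicate point will be the $D_{12}^*$ step carried out honestly rather than by transport: there one has to compute the coproduct of the dual generators $a^{ij}$ and of $D_{12}^*$, carrying along the super signs produced by the quantum minors and by the antipode (\ref{antipode}), and pin down exactly which minors $D_{kl}^*$ fall into $I_q(P_u)$ — equivalently, for which $i$ the images $\pi(a^{i3}),\pi(a^{i4})$ vanish given the shape of (\ref{interparabolic}). This is precisely the bookkeeping that Proposition \ref{iso-St} absorbs, by reducing the dual statement to the one already proved for $D_{12}$ on $\rSL_q(4|1)/P_1$ and then passing to $P_u=P_1\cap P_2$, which sits inside both $P_1$ and $P_2$. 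Everything else is formal: multiplicativity of $\Delta$, and the fact that $\pi$ kills the off-diagonal blocks of $P_u$.
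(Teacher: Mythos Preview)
Your proof is correct and follows essentially the same route as the paper: factor $(\mathrm{id}\otimes\pi)\Delta(d)$ by multiplicativity, expand $\Delta(D_{12})$ and $\Delta(D_{12}^*)$ as sums of quantum minors, observe that $\pi$ kills all terms except $(k,l)=(1,2)$, and multiply. The only cosmetic difference is that the paper handles $D_{12}^*$ by first deriving $\Delta(a^{ij})=\sum_k a^{ik}\otimes a^{kj}$ from Proposition~\ref{iso-St} and then repeating the minor expansion directly, whereas you phrase the same step as transport of structure under $S\circ\mathrm{st}$; these are equivalent.
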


\begin{proof}
By Prop. 1.4 in Ref. \cite{fi1},we have that:
$$
\Delta(D_{12})=\sum_{1 \leq k<l \leq 5} D_{12}^{kl} \otimes D_{kl}\,.
$$
Hence, since $\pi(D_{kl})=0$, unless $(k,l)=(1,2)$, we have:
$$
(\id\otimes \pi)\Delta(D_{12})=D_{12} \otimes \pi(D_{12}).
$$
By \ref{iso-St} (see also \cite{zhang} Sec. 2) we also have that
$$
\Delta(a^{ij})=\sum a^{ik} \otimes a^{kj}
$$
hence, repeating a calculation similar to the one in Prop.
1.4 \cite{fi1} one obtains:
$$
\Delta(D_{12}^*)=\sum_{1 \leq k<l \leq 5}
{D^*}_{12}^{\,kl} \otimes D_{kl}^*\,.
$$
Since, as above,  $\pi(D_{kl}^*)=0$, unless $(k,l)=(1,2)$, we have
$$
(\id\otimes \pi)\Delta(D_{12}^*)=D_{12}^* \otimes \pi(D_{12}^*), \qquad
$$
Since $\Delta$ is multiplicative, i.e. $\Delta(D_{12}D_{12}^*)=
\Delta(D_{12})\Delta(D_{12}^*)$, we have our result.

\end{proof}

\begin{corollary}
The $\Z$-graded
subalgebra
$$
C_q:=\cO_q(G/P) \subset \rSL_q(4|1), \qquad G=\rSL(4|1), \, P=P_{u}
$$
defined by the quantum section $d =D_{12}D_{12}^*$ is a quantum deformation
of the graded subalgebra
of $\rSL_q(4|1)$ obtained via the classical section $t=d_{12}d^*_{12}$.

%
Furthermore $C_q$  has a natural coaction of the supergroup $\rSL_q(4|1)$.
Therefore it is a quantum homogeneous superspace.
\end{corollary}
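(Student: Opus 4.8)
The plan is to obtain the corollary as a formal consequence of Proposition~\ref{qsec-prop} and Theorem~\ref{Oqgh-graded}, specialized to $G=\rSL(4|1)$, $P=P_u$ and the quantum section $d=D_{12}D_{12}^*$. By Proposition~\ref{qsec-prop} this $d$ is a bona fide quantum section of the super line bundle $\cL$ on $\rSL(4|1)/P_u$ attached to the classical section $t=d_{12}d_{12}^*$ of Example~\ref{sft}, so I would simply invoke Theorem~\ref{Oqgh-graded} verbatim: part~(1) yields that $C_q=\cO_q(G/P)=\bigoplus_{n\in\N}\cO_q(G/P)_n$ is a $\Z$-graded subalgebra of $\rSL_q(4|1)$ with $\cO_q(G/P)_r\cdot\cO_q(G/P)_s\subseteq\cO_q(G/P)_{r+s}$; part~(2) yields that the coproduct of $\rSL_q(4|1)$ restricts to a graded comodule-algebra coaction $\Delta|_{C_q}\colon C_q\to\rSL_q(4|1)\otimes C_q$; and part~(3) yields the flatness identity $C_q\cap(q-1)\rSL_q(4|1)=(q-1)C_q$.

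Next I would pin down the classical limit. The claim that $C_q$ is a quantum deformation of the graded subalgebra obtained via $t=d_{12}d_{12}^*$ means that $C_q/(q-1)C_q$ should be isomorphic, as a graded algebra, to the Segre coordinate ring of $\F$ of Example~\ref{sft}. Since $\rSL_q(4|1)$ is torsion-free over $\C_q$ with $\rSL_q(4|1)/(q-1)\rSL_q(4|1)\cong\cO(\rSL(4|1))$, part~(3) of Theorem~\ref{Oqgh-graded} says precisely that the specialization map
$$
C_q\big/(q-1)C_q\;\hookrightarrow\;\rSL_q(4|1)\big/(q-1)\rSL_q(4|1)\;\cong\;\cO(\rSL(4|1))
$$
is injective. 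By condition~(2) of Definition~\ref{qsec} we have $d\equiv t\pmod{q-1}$, hence $\pi(d^n)\equiv\pi(t^n)$, so in each degree the defining equivariance condition for $\cO_q(G/P)_n$ reduces modulo $(q-1)$ to the classical condition $(\id\otimes\pi)\Delta(f)=f\otimes\pi(t^n)$ of Proposition~\ref{t}. Thus the image of the specialization map lands in $\cO(\rSL(4|1)/P_u)_n$ degree by degree; surjectivity onto it I would get by comparing with the classical construction of Proposition~\ref{t} (equivalently, from the $\C_q$-flatness in part~(3)). This gives $C_q/(q-1)C_q\cong\cO(\rSL(4|1)/P_u)$, i.e.\ $C_q$ is a flat $\C_q$-deformation of the classical Segre coordinate ring; the coaction is the one from part~(2), and the closing line of Theorem~\ref{Oqgh-graded} then reads ``$C_q$ is a projective homogeneous quantum supervariety''.

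So the corollary itself is essentially bookkeeping, and the substantive work is upstream. Proposition~\ref{qsec-prop} is where the real verification happens: it rests on the coproduct formulas $\Delta(D_{12})=\sum_{k<l}D_{12}^{\,kl}\otimes D_{kl}$ and $\Delta(D_{12}^*)=\sum_{k<l}{D^*}_{12}^{\,kl}\otimes D_{kl}^*$, on $\pi(D_{kl})=\pi(D_{kl}^*)=0$ for $(k,l)\neq(1,2)$, and on multiplicativity of $\Delta$, and is already in hand. The main conceptual obstacle, and the reason the corollary is not vacuous, is the flatness assertion in part~(3) of Theorem~\ref{Oqgh-graded}: it is what guarantees that passing through the quantum section neither collapses nor enlarges the coordinate ring, so that the $q\to 1$ limit is honestly the ring of Example~\ref{sft} and not some coarser or finer object.
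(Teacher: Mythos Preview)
Your proposal is correct and follows exactly the paper's approach: the paper's own proof reads simply ``Immediate from Props.~\ref{qsec-prop} and~\ref{Oqgh-graded},'' and you have spelled out in detail how each part of Theorem~\ref{Oqgh-graded} yields the graded-subalgebra structure, the coaction, and the flatness needed for the classical limit. Your elaboration of the specialization argument (injectivity from part~(3), matching of the equivariance conditions modulo $q-1$) is more than the paper provides, but it is the natural unpacking of ``immediate.''
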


\begin{proof} Immediate from Props. \ref{qsec-prop} and \ref{Oqgh-graded}.
\end{proof}


We can then call $C_q$ the {\it quantum conformal superspace}, because
it is a quantum deformation of  $\cO_q(\rSL(4|1)/P_u)$, the graded
algebra of the conformal superspace, with
respect to the Segre embedding discussed above.




\section*{Acknowledgements}

 R. Fioresi and E. Latini want to thank
the Departament de F\'{\i}sica Te\`{o}rica,
Universitat de Val\`{e}ncia for the warm hospitality
during the elaboration of this work.

M. A. Lled\'{o} would like to thank the
Dipartimento di Matematica, Universit\`{a} di Bologna for its kind hospitality during the realization of this work.

This work has been supported in part by grants  FIS2011-29813-C02-02, FIS2014-57387-C3-1 and  SEV-2014-0398 of the Ministerio de Econom\'{\i}a y Competitividad (Spain).

\appendix
\section{Incidence relations}\label{incidence-ap}
\begin{align*}
&-d_{12}d_{12}^*-d_{13}d_{13}^*-d_{14}d_{14}^*-\dd_{15}\dd_{15}^*=0, \\&
-d_{13}d_{23}^*-d_{14}d_{24}^*-\dd_{15}\dd_{25}^*=0, \\&
-d_{23}d_{13}^*-d_{24}d_{14}^*-\dd_{25}\dd_{15}^*=0, \\&
 \phantom{-}\,\,d_{12}d_{23}^*-d_{14}d_{34}^*-\dd_{15}\dd_{35}^*=0, \\&
 \phantom{-}\,\,d_{23}d_{12}^*-d_{34}d_{14}^*-\dd_{35}\dd_{15}^*=0, \\&
-d_{12}d_{24}^*+d_{13}d_{34}^*-\dd_{15}\dd_{45}^*=0, \\&
 \phantom{-}\,\,\dd_{15}\dd_{25}^*+d_{13}\dd_{35}^*+d_{14}\dd_{45}^*+\dd_{15}d_{55}^*=0,\\&
 \phantom{-}\,\,d_{12}d_{12}^*+d_{23}d_{23}^*+d_{24}d_{24}^*+\dd_{25}\dd_{25}^*=0, \\&
 \phantom{-}\,\,d_{12}d_{13}^*+d_{24}d_{34}^*+\dd_{25}\dd_{35}^*=0, \\&
 \phantom{-}\,\,d_{13}d_{12}^*+d_{34}d_{24}^*+\dd_{35}\dd_{25}^*=0, \\&
-d_{12}d_{14}^*+d_{23}d_{34}^*-\dd_{25}\dd_{45}^*=0, \\&
-d_{12}\dd_{15}^*+d_{23}\dd_{35}^*+d_{24}\dd_{45}^*+
\dd_{25}d_{55}^*=0, \\&
-\dd_{15}d_{12}^*+\dd_{35}d_{23}^*+\dd_{45}d_{24}^*-d_{55}\dd_{25}^*=0, \\&
\phantom{-}\,\,
d_{13}d_{13}^*+d_{23}d_{23}^*+d_{34}d_{34}^*+\dd_{35}\dd_{35}^*=0, \\&
-d_{13}d_{14}^*-d_{23}d_{24}^*-\dd_{35}\dd_{45}^*=0, \\&
-d_{14}d_{13}^*-d_{24}d_{23}^*-\dd_{45}\dd_{35}^*=0, \\&
-d_{13}\dd_{15}^*-d_{23}d_{25}^*+d_{34}\dd_{45}^*-\dd_{35}d_{55}^*=0, \\&
-\dd_{15}d_{13}^*-\dd_{25}d_{23}^*+\dd_{45}d_{34}^*-d_{55}\dd_{35}^*=0, \\&
-d_{14}d_{15}^*-d_{24}\dd_{25}^*-d_{34}\dd_{35}^*+\dd_{45}d_{55}^*=0, \\&
-\dd_{15}d_{14}^*-\dd_{25}d_{24}^*-\dd_{35}d_{34}^*+d_{55}d_{34}^*=0, \\&
-\dd_{15}\dd_{15}^*-\dd_{25}\dd_{25}^*-\dd_{35}\dd_{35}^*
-\dd_{45}\dd_{45}^*+d_{55}d_{55}^*=0, \\&
 \phantom{-}\,\,d_{24}d_{12}^*-d_{34}d_{13}^*-\dd_{45}\dd_{15}^*=0, \\&
 \phantom{-}\,\,\dd_{25}d_{12}^*+\dd_{35}d_{13}^*+\dd_{45}d_{14}^*-d_{55}\dd_{15}^*=0, \\&
-d_{14}d_{12}^*+d_{34}d_{23}^*-\dd_{45}\dd_{25}^*=0, \\&
 \phantom{-}\,\,d_{14}d_{14}^*+d_{24}d_{24}^*+d_{34}d_{34}^*-\dd_{45}\dd_{45}^*=0\,.
\end{align*}

\section{The super Segre map is an embedding}\label{embedding-ap}

We want to prove that the super Segre map (\ref{supersegre}) is an embedding. We will proceed by using the {\it even rules principle} by Deligne and Morgan \cite{dm}.

\begin{theorem} \label{erp}{\sl Even rules principle.} Let $\{V_i\}_{i\in I}$, $I=1, \dots , n$ be a family of super vector spaces, $V$ another super vector space and  $A=A_0\oplus A_1$ a commutative superalgebra. We denote
$ V_{i\,0}(A)=(A\otimes V_i)_0$ and $V_0(A)=(A\otimes V)_0$.

Any family of $A_0$-multilinear maps
$$\begin{CD}V_{1\,0}(A)\times \cdots \times V_{n\,0}(A)@>f_A>> V_0(A)\end{CD}$$
which is {\it functorial} in $A$ comes from a unique  morphism
$$\begin{CD}V_1\otimes \cdots\otimes V_n@>f>>  V\end{CD}\,,$$ that is,
$$f_A(b_1\otimes v_1, b_2\otimes v_2,\dots, b_n\otimes v_n)=(-1)^pb_1\cdots b_n\, f(v_1\otimes\cdots \otimes v_n)\,, $$ where $p$ is the number of pairs $(i,j)$ with $i<j$ and $v_i, v_j$ odd.

\hfill$\square$

\end{theorem}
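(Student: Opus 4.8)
The even rules principle is a standard fact (see \cite{dm}); the plan for a self-contained proof is to recognise both sides as functors of points of affine superspaces and then apply Yoneda's lemma. First I would record that for a finite-dimensional super vector space $W$ the functor $A\mapsto W_0(A)=(A\otimes W)_0$ on commutative superalgebras is represented by the symmetric superalgebra $\Sym(W^*)$ — polynomial in the even dual coordinates, exterior in the odd ones — since a superalgebra homomorphism $\Sym(W^*)\to A$ amounts to sending each even degree-one generator to an even element of $A$ and each odd one to an odd element, i.e. to an element of $(A\otimes W)_0$. Using the identification $\Sym(U\oplus U')\cong\Sym(U)\otimes\Sym(U')$ for super vector spaces, it follows that $\prod_i V_{i,0}(\cdot)$ is represented by $\bigotimes_i\Sym(V_i^*)\cong\Sym\big(\bigoplus_i V_i^*\big)$, while $V_0(\cdot)$ is represented by $\Sym(V^*)$. (For infinite-dimensional $V_i$, $V$ one first restricts to finite-dimensional sub-objects.)

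Second, I would invoke Yoneda: a family $\{f_A\}$ functorial in $A$ is a natural transformation $\prod_i V_{i,0}(\cdot)\to V_0(\cdot)$, hence corresponds to a superalgebra homomorphism $\varphi:\Sym(V^*)\to\bigotimes_i\Sym(V_i^*)$, equivalently, by freeness of $\Sym$, to an even linear map $V^*\to\bigotimes_i\Sym(V_i^*)$. The $A_0$-multilinearity hypothesis — $f_A$ additive and homogeneous of degree one in each of its $n$ arguments — translates precisely into the requirement that $\varphi$ send $V^*$ into the multidegree-$(1,\dots,1)$ part of $\bigotimes_i\Sym(V_i^*)$, namely $V_1^*\otimes\cdots\otimes V_n^*\cong(V_1\otimes\cdots\otimes V_n)^*$. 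Dualizing (using $V^{**}=V$) turns this into an even linear map $f:V_1\otimes\cdots\otimes V_n\to V$, a morphism of super vector spaces, and uniqueness is built into Yoneda. A fully explicit rendering of the same argument, which is perhaps friendlier for the reader, tests $\{f_A\}$ on the free supercommutative algebra $A=\C[t_1,t_2,\dots\,;\theta_1,\dots,\theta_n]$: evaluating $f_A$ on tuples whose $i$-th slot is $1\otimes e$, for $e$ in an even basis of $V_i$, or $\theta_i\otimes\omega$, for $\omega$ in an odd basis, reads off all matrix entries of $f$, while functoriality under the rescalings $\theta_i\mapsto c\,\theta_i$ and under the transpositions $\theta_i\leftrightarrow\theta_j$ pins down both well-definedness and the sign.

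Third — and this is the only genuinely delicate point — I would verify that $f$ obeys the displayed formula with the Koszul sign. Unwinding the identifications, evaluating $f_A$ on $(b_1\otimes v_1,\dots,b_n\otimes v_n)$ with the $v_i$ homogeneous means specialising, in the $i$-th tensor factor, the dual symbols by $\xi\mapsto b_i\,\langle\xi,v_i\rangle$ and then multiplying inside $\bigotimes_i\Sym(V_i^*)$; collecting the scalars $b_i$ to the left past the dual-vector symbols and projecting onto multidegree $(1,\dots,1)$, the only surviving sign is the one incurred when each odd $b_j$ is moved past the (odd) $v_i$ with $i<j$, which is exactly $(-1)^p$ with $p=\#\{(i,j):i<j,\ v_i,v_j\text{ odd}\}$. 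This is precisely the bookkeeping that the transposition-functoriality controls in the explicit version, so the two routes agree. I expect this sign check, rather than any structural issue, to demand the most care; everything else is the formal machinery of representable functors.
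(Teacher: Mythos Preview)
The paper does not prove this theorem: it is stated with a citation to Deligne--Morgan \cite{dm} and closed with a $\square$, so there is no in-paper argument to compare against. Your Yoneda-based proof is correct and supplies what the paper omits. The representability of $A\mapsto(A\otimes W)_0$ by $\Sym(W^*)$ is exactly the functor of points of affine superspace (Example~\ref{fop affine superspace}), and the translation of $A_0$-multilinearity into the multidegree-$(1,\dots,1)$ condition on $\varphi|_{V^*}$ is the standard reduction. Your explicit alternative---testing $\{f_A\}$ on a free supercommutative algebra with auxiliary odd variables $\theta_i$---is essentially the original Deligne--Morgan argument and is the version most readers will find easiest to verify, including the Koszul sign. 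The parenthetical about infinite dimensions is a bit glib (double-dualization $V^{**}\cong V$ fails there, so one really does need the restriction-to-finite-subobjects step you gesture at), but since the paper only applies the principle to finite-dimensional $\C^{p|q}$ this causes no trouble here.
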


Let $A$ be a local superalgebra. We consider the super vector spaces $\C^{n+1|r}$, $\C^{d+1|s}$ and its tensor product
$\C^{M+1|N}$ with $M+1=(n+1)(d+1)+rs$, $N=(n+1)s+(d+1)r$.
We recall the notation $A^{p|q}=A\otimes \C^{p|q}$ and we will also denote

\begin{align*}
&(x, \alpha):=(x_0,\dots , x_n\,|\alpha_1,\dots, \alpha_r)\in \C^{n+1|r},\\&
(y, \beta):=(y_0,\dots , y_d\,|\beta_1,\dots, \beta_s)\in \C^{d+1|s}\,.
\end{align*}

For shortness, if there is no possibility of confusion, we will denote
\begin{align}
&a\otimes x:=\sum_{i=0}^na_i\otimes x_i,\qquad \theta\otimes \alpha:= \sum_{a=1}^r\theta_a\otimes \alpha_a
\qquad a_i\in A_0,\; \theta_a\in A_1,\nonumber\\
&b\otimes y:=\sum_{i=0}^nb_i\otimes y_i,\qquad \xi\otimes \beta:= \sum_{a=1}^r\xi_a\otimes \beta_a
\qquad b_i\in A_0,\; \xi_a\in A_1\,.\label{vectors}
\end{align}
 There is a family of $A_0$-bilinear maps
\be \begin{CD}A^{n+1|r}_0\times A^{d+1|s}_0@>f_A>>  A^{M+1|N}_0 \\
(a\otimes x+\theta\otimes \alpha, b\otimes y+\xi\otimes \beta)@>>>ab\otimes (x\otimes y)-\theta\xi\otimes (\alpha\otimes \beta) \\@.+b\theta\otimes(\alpha\otimes y)+a\xi\otimes (x\otimes \beta)\,,\end{CD}\label{identitymaps}\ee
which is functorial in $A$. According to the even rules principle, there is a unique morphism
$$\begin{CD}\C^{n+1|r}\otimes\C^{d+1|s}@>f>>\C^{M+1|N}\end{CD}$$ which, in this case, is just the identity, $f=\id$. Although trivial, this morphism will help us to keep track of the signs.

In terms of the canonical basis of $\C^{n+1|r}$, $\C^{d+1|s}$ and its tensor product, we can represent an element of $A^{M+1|N}$ as a supermatrix with entries in $A$:
\be M_A=\begin{pmatrix} Z_A&\Lambda_A\\
\Gamma_A&T_A\end{pmatrix}\label{matrix}\,,\ee where  $Z_A$ is an $(n+1) \times(d+1)$ block,  $T_A$ is an $r\times  s$ block, $\Lambda_A$ is an $(n+1)\times s$ block, $\Gamma_A$ is an $r\times (d+1)$ block. The supermatrix is in $A_0^{M+1|N}$ if the blocks $Z_A$ and $T_A$ have entries in $A_0$ and the blocks $\Lambda_A$ and $\Gamma_A$ have entries in $A_1$.

 Let us denote the $2\times 2$ minor of this matrix, with rows $(k,l)$ and columns $(i,j)$, as  ${d_{kl}^{ij}}_A(M)$. The result is and element of $A$ with definite parity, 0 or 1. For each choice of $(i,j), (k,l)$ we can define a family of $A_0$-bilinear maps
$$\begin{CD}
A^{M+1|N}_0@>{d^{ij}_{kl}}_A>>A_{0,1}\\
M_A@>>>{d_{kl}^{ij}}_A(M_A)
\end{CD}$$
which are also functorial in $A$. If the target is $A_0$, we have that $A_0= (A\otimes \C^{1|0})_0$ and if the target  is $A_1$ we have that $A_1= (A\otimes \C^{0|1})_0$. Besides this remark, the argument does not change. The construction is functorial in $A$ and composing it with $f_A$ in (\ref{identitymaps}) we have families of maps
$$ \begin{CD}A^{n+1|r}_0\times A^{d+1|s}_0@>{d^{ij}_{kl}}_A\circ f_A>>  A_{0,1}\,, \end{CD}$$ and applying the even rules principle they define morphisms among the super vector spaces
$$\begin{CD}\C^{n+1|r}\otimes\C^{d+1|s}@>{d^{ij}_{kl}}\circ f>>\C^{1|0}, \C^{0|1}\,.\end{CD}$$

Let us now consider again two elements $a\otimes x+\theta\otimes\alpha$ and $b\otimes y+\xi\otimes\beta$. We write the image under $f_A$ in matrix form
\be\begin{pmatrix} ab\otimes (x\otimes y)&b\theta\otimes (\alpha\otimes y)\\
a\xi\otimes(x\otimes \beta)&-\theta\xi\otimes (\alpha\otimes \beta)\end{pmatrix}\,.\label{decomposablematrix}\ee
It is easy now to check that this matrix is in the kernel of ${d^{ij}_{kl}}_A$. On the other hand, since we are dealing with local algebras and ordinary $A_0$-modules, the condition
\be{d^{ij}_{kl}}_A(M_A)=0\,.\label{polynomials1}\ee
is, as for vector spaces, equivalent to say that the matrix $M$ is decomposable as (\ref{decomposablematrix}), so we have identified the image of the map $f_A$ as the solution to the polynomial equations (\ref{polynomials1}).
This condition can be translated into the super vector space morphism, were we get ride of the auxiliary variables in $A$. We dispose the generators of  $\C^{N+1|M}$ (as an affine superspace) in matrix form  similarly to (\ref{matrix}). We have that the image of the super Segre map is given by the homogeneous polynomials
\be{d^{ij}_{kl}}(M)=0,\qquad M=\begin{pmatrix}Z&\Lambda\\\Gamma&T\end{pmatrix}\,.\label{polynomials2}\ee
Since the polynomials are homogeneous, they are indeed constraints in projective space.

\begin{remark} Notice that the equations (\ref{polynomials2}) are determinants only for the upper left block. In all the other cases they pick up signs. The auxiliary variables in $A$ help us to keep track of these signs.

\hfill$\square$

\end{remark}

Once the relations (\ref{polynomials2}) are given, they provide us
with a supergraded ring $\C[Z,\Lambda,\Gamma,T]/d^{ij}_{km}$ defining
a projective superscheme in $\bP^{N|M}$. This superscheme is identified via
the morphism with $\bP^{n|r}\times \bP^{d|s}$,
as one can readily check on the standard covering.

\end{document}